\documentclass[letterarticle,12pt,peerreviewca,draftcls]{IEEEtran}
\usepackage{csm16}
\usepackage[margin=1in]{geometry}

\usepackage{graphicx,xcolor}
\usepackage{amsmath}
\usepackage{amssymb}
\usepackage{mathtools}
\usepackage{algorithm,algorithmicx,algpseudocode}
\usepackage{enumitem}
\usepackage{amsthm}
\usepackage{booktabs}
\usepackage{dsfont}
\usepackage[left,pagewise]{lineno}
\usepackage[maxbibnames=50,style=ieee]{biblatex}
\usepackage{subcaption}    
\newif\ifPDF \ifx\pdfoutput\undefined\PDFfalse \else\ifnum\pdfoutput > 0\PDFtrue \else\PDFfalse \fi \fi
\ifPDF 
\usepackage[pdftex, plainpages = false, colorlinks=true, linkcolor=black, citecolor = green!50!blue, urlcolor = blue, filecolor=black, pagebackref=false, hypertexnames=false,  pdfpagelabels ]{hyperref}
\fi
\usepackage[nomarkers]{endfloat}

\newtheorem{lemma}{Lemma}

\title{Adaptive Behavioral Predictive Control \\ \Large State-Free Regulation Without Hankel Weights}

\author{Tam W.~Nguyen \\ POC: T.~Nguyen (\href{mailto:nguyen.tamwilly.3e@kyoto-u.ac.jp}{nguyen.tamwilly.3e@kyoto-u.ac.jp})}

\begin{document}

\maketitle
\CSMsetup
\linenumbers \modulolinenumbers[2]

The behavioral view of control, introduced by Willems and Polderman~\cite{willems1997introduction}, frames dynamical systems as sets of admissible input--output trajectories rather than hidden states, and underlies many modern predictive and data-driven methods. 
Over the decades, the predictive control framework~\cite{morari1999model,allgower2012nonlinear,rawlings2020model}, and its data-enabled extensions~\cite{markovsky2005application,coulson2019deepc,berberich2020data}, have achieved high performance but typically rely on batch data and iterative quadratic programs (QPs), which can be demanding for systems that operate continuously, face computational limits, or must adapt to slow nonlinear drift. 
In such settings, the need is for controllers that use streaming data directly, update plant behavior online, and compute control actions in closed form. 
This study pursues that goal within the indirect adaptive control tradition~\cite{goodwin1988indirect,astrom1994adaptive,ljung1998system}, seeking a closed-form formulation suitable for real-time operation.

The present work builds on a long sequence of ideas linking adaptation, prediction, and behavior. 
Dual control theory, first articulated by Feldbaum~\cite{feldbaum1960dual}, recognized that identification and control are inherently connected and highlighted the need for probing actions to maintain learnability, a concept later formalized as persistence of excitation~\cite{moore1966thermally,green1986persistence}. 
Retrospective cost adaptive control (RCAC) developed these insights into a practical scheme by constructing a retrospective performance index from streaming input--output data and updating the controller to minimize that index~\cite{rahman2017retrospective}. 
RCAC implemented one of the earliest adaptive controllers designed for real-time digital computation and showed that recursive least-squares (RLS) adaptation could operate effectively without batch recomputation, and it has been validated in aerospace~\cite{lai2023data} and active noise control~\cite{mohseni2022retrospective}.

Predictive cost adaptive control (PCAC)~\cite{nguyen2021predictive} extended this principle by updating a predictive autoregressive-with-exogenous-input (ARX) model online via RLS and computing the future control sequence over a finite horizon using a QP with a prediction structure paralleling generalized predictive control (GPC)~\cite{clarke1987generalizedpart1,clarke1987generalizedpart2,clarke1989properties,clarke2002application}.
While structurally similar to GPC, PCAC differs by incorporating online RLS adaptation of the prediction model, supporting more flexible forgetting strategies, and allowing state-constraint handling within the predictive optimization.
PCAC has demonstrated effective performance in aerospace control applications~\cite{richards2025predictive,richards2025experimental,vander2025pcac,serbin2025predictive}. 
The adaptive behavioral predictive control (ABPC) continues this progression within the RCAC/PCAC framework. 
While PCAC has been applied successfully to nonlinear systems~\cite{nguyen2021sampled}, the present formulation generalizes it by introducing explicit nonlinear regressors through kernel-based identification and retains the prediction-control organization familiar from GPC.

During closed loop, the feature dictionary spans a trajectory subspace associated with the observed input--output data.
We use the term \emph{kernel expressiveness} to denote the coverage of that subspace under the data visited in practice; it can be gauged by the rank and conditioning of the stacked prediction operators and the associated normal matrix~\cite{markovsky2021behavioral}. 
With measured histories and frozen LPV--ARX coefficients, the one-step predictor is exact by construction, while multi-step prediction follows from Toeplitz stacking of the one-step map over the horizon (see Lemma \ref{lem:toeplitz}). 
This perspective links identification and prediction directly and provides the foundation for the closed-form control computation developed in this paper.

ABPC combines recursive identification with direct predictive control in a single online framework. 
At each sampling instant, a kernel-based RLS algorithm updates the coefficients of a linear-parameter-varying--autoregressive-with-exogenous-input (LPV--ARX) one-step predictor using the most recent input--output data. 
The updated predictor is frozen over a finite horizon to form Toeplitz operators that map future inputs and outputs. 
The resulting quadratic cost, defined by tracking and input-penalty terms, is symmetric positive definite and admits a closed-form minimizer obtained by Cholesky factorization of the normal matrix. 
This procedure eliminates iterative optimization and the need for batch Hankel matrices, yielding a numerically efficient indirect adaptive controller that operates on streaming data.

The ABPC framework is consistent with established predictive and adaptive methods. 
With the unitary dictionary, it reduces to the unconstrained PCAC case in closed form, while nonlinear kernel regressors generalize the model structure and expand the class of representable systems. 
In contrast to batch Hankel-based behavioral controllers with trajectory constraints, ABPC updates its predictive model recursively. 
Freezing the LPV--ARX coefficients transforms the nonlinear problem into a convex surrogate with a positive definite normal matrix, enabling efficient factorization. 
Effective coordination is preserved when parameter variation across the horizon is small, a condition verified numerically in this study. 
Accordingly, the method is optimal for the frozen surrogate at each step, providing a practical balance between adaptability and computational tractability.

The numerical studies show strong performance when the kernel dictionary aligns with the underlying plant structure, including Hammerstein and nonlinear autoregressive-with-exogenous-input (NARX) systems with noncross and cross terms, as well as polynomial, cubic, and quaternion dynamics. 
Numerical fragility appears when the plant exhibits sinusoidal or oscillatory behavior that polynomial features cannot represent accurately; in such cases, the unitary dictionary maintains stability by implicitly capturing the internal model structure~\cite{francis1976internal}. 
In linear regimes, radial basis function kernels occasionally outperform the unitary baseline, a trend consistent across runs but left open for further analysis. 
These observations emphasize the influence of kernel selection and feature conditioning on closed-loop performance.

The main contributions are as follows:
\begin{enumerate}
    \item Introduction of a kernel-based, indirect adaptive predictive controller that operates on streaming data rather than batch Hankel structures.
    \item Integration of kernel-RLS identification of an LPV--ARX predictor with Toeplitz stacking for finite-horizon propagation and a closed-form Cholesky-based control computation, removing the need for iterative optimization.
    \item A unifying perspective that nests PCAC as a special case, connects to GPC, and extends expressiveness through nonlinear dictionaries.
    \item A systematic numerical study covering linear, Hammerstein, and NARX systems, mapping kernel choice to performance and conditioning and providing practical guidance for feature selection.
\end{enumerate}
Concepts such as expressiveness and coefficient drift are treated qualitatively, and no formal stability or robustness proofs are provided. 
The analysis is numerical and reproducible, emphasizing computational feasibility and transparency of results.
The remainder of the paper details the identification procedure, derives the control synthesis and closed-form solution, reports numerical studies, and concludes with limitations and future perspectives.

\clearpage

\section*{Input--Output System Description}

Consider a discrete-time, unknown, causal dynamical system governed by an input--output relationship.
Let $u_k\in\mathbb{R}^m$ and $y_k\in\mathbb{R}^p$ denote the system input and output at step $k$, respectively.
At each step $k$, the system has received the pair $(u_k, y_k)$ and produces the next output $y_{k+1}$, which is assumed to also depend instantaneously on the input $u_{k+1}$. The internal structure, such as state-space realization, order, or dynamics, is entirely unknown and unmodeled. No assumptions are made on linearity, stability, or observability.

We formalize the input--output system as a time-varying causal operator \cite{willems1991paradigms}
\begin{equation}
    y_{k+1} = \mathcal{P}_{k+1}(y_{-\infty:k}, u_{-\infty:k+1}),
\end{equation}
where $y_{-\infty:k}$ and $u_{-\infty:k+1}$ denote the input and output histories from $-\infty$ to $k$ and $k+1$, respectively, and $\mathcal{P}_{k+1}$ maps these semi-infinite input--output histories into the next output. The operator is allowed to vary with time, and no finite-dimensional realization is assumed.

\section*{Control Design Paradigm}

We aim to design an optimal recursive predictor-controller operating solely on streaming input--output data. 
All computations, including parameter updates and control synthesis, are performed online at each step without batch optimization, generic QP, or storage of past trajectories. 
Persistence of excitation is not enforced; the recursion remains well posed through regularization, and identification quality improves if excitation arises. 
A design lag $\ell\ge1$ specifies the temporal window of past input--output data used in the regressor. 
Note that this is an algorithmic choice, not a system assumption.

\section*{Proposed Approach}

Classical identification relies on parametric input--output models such as ARX and autoregressive moving-average with exogenous input (ARMAX) \cite{6213241}.
These models are RLS-compatible but rigid, as the basis is fixed to delays.
Nonlinear extensions, such as NARX and nonlinear autoregressive moving-average with exogenous input (NARMAX), improve expressiveness by adding nonlinear functions of past signals \cite{billings2013nonlinear}.
With fixed bases, the estimation remains linear in parameters but can be high-dimensional and ad hoc; with parameterized nonlinearities, it becomes nonconvex.

We adopt a kernelized representation that enlarges the feature space beyond pure delays while preserving convex identification \cite{scholkopf2002learning}.
Unlike Hankel-weight methods that enforce trajectory spans under persistence of excitation \cite{coulson2019deepc}, we make the structure explicit via a block-structured kernel regressor and update it online.

The method proceeds as follows.
A kernel-based predictor is updated online using RLS on past input--output windows.
At each step, the parameters and current past window are mapped into LPV--ARX coefficients, which are frozen over the prediction horizon.
The frozen model yields a stacked affine predictor, leading to a strictly convex quadratic cost in the input sequence.
The solution is obtained in closed form by Cholesky factorization, and the first input is applied in receding-horizon fashion.
No batch optimization or generic QP is used.

Future inputs enter the predictor affinely (design assumption), and the LPV coefficients are held fixed over each horizon.
This gives an exact one-step predictor and an approximate \(N\)-step model; accuracy improves with modest horizons and slowly varying coefficients.
RLS runs online with forgetting; without sufficient excitation, parameters converge on the visited subspace.
The control problem remains well posed because the quadratic cost Hessian is positive definite under standard input penalties.
These trade-offs are evaluated empirically in simulation.

\clearpage
\section*{Kernel-RLS Identification}

For all $k \ge 0$, define the past window
\begin{align}
    s_k \coloneqq (y_{k-\ell:k-1}, u_{k-\ell:k}),
\end{align}
where $\ell \ge 1$ is the lag order. Let $r = p\ell+m(\ell+1)$. Choose a kernel dictionary $\{\gamma_j\}_{j=1}^q$ on $s_k$, where $\gamma_j:\mathbb{R}^{r}\to\mathbb{R}$. Define
\begin{align}
    g_k \coloneqq 
    \begin{bmatrix}
        \gamma_1(s_k) & \cdots & \gamma_q(s_k)
    \end{bmatrix}^\top \in \mathbb{R}^q. \label{eq:gk}
\end{align}
Kernels may be mixed by concatenation (for example, radial basis function (RBF), polynomial, Koopman/DMD lifts \cite{susuki2024koopman}).

Define the base signal vector
\begin{align}
    \psi_k \coloneqq
    \begin{bmatrix}
        1 \\
        \mathrm{vec}(y_{k-1},\ldots,y_{k-\ell}) \\
        \mathrm{vec}(u_{k},\ldots,u_{k-\ell})
    \end{bmatrix}
    \in \mathbb{R}^{d_0}, \qquad d_0 = 1 + r.
\end{align}
Using the Kronecker product, we construct the block-structured regressor in an
intercept-aware form to avoid duplicate columns when a constant kernel feature is present.
Let
\[
    g_k \;=\;
    \begin{bmatrix}
        \gamma_1(s_k) & \bar g_k^\top
    \end{bmatrix}^\top \in \mathbb{R}^{q},
\]
where the first kernel basis $\gamma_1(s)\equiv 1$ is optional and
$\bar g_k\in\mathbb{R}^{q-1}$ collects the non-constant kernel features.
Let $\bar\psi_k \coloneqq \psi_{k,2:d_0}\in\mathbb{R}^{d_0-1}$ denote the
non-intercept coordinates of $\psi_k$.
Define
\begin{align}
    z_k \;\coloneqq\;
    \begin{bmatrix}
        \gamma_1(s_k)\,\psi_k \\[1.2mm]
        \bar g_k \otimes \bar\psi_k
    \end{bmatrix}
    \in \mathbb{R}^{q d_0 - (q-1)}.
    \label{eq:zk_intercept}
\end{align}
If no constant kernel feature is present, we use the standard construction
\begin{align}
    z_k \;=\; g_k \otimes \psi_k \;\in\; \mathbb{R}^{q d_0}. \label{eq:zk_no_intercept}
\end{align}
This keeps the intercept block $\gamma_1\psi_k$ exactly once and forms
cross-interactions only between the non-constant components $(\bar g_k,\bar\psi_k)$,
eliminating duplicate columns. After freezing $g_k$, the model remains linear in
parameters and yields the LPV--ARX propagation.

The block-structured estimation model is
\begin{align}
    \hat{y}_k = \Theta z_k, \label{eq:estimation_model_block}
\end{align}
where $\hat{y}_k \in \mathbb{R}^p$ and
$\Theta \in \mathbb{R}^{p \times n_z}$ with
\begin{align}
    n_z =
    \begin{cases}
        q d_0, & \text{if no constant kernel feature is present},\\[2mm]
        q(d_0-1)+1, & \text{if } \gamma_1(s)\equiv1 \text{ is included.}
    \end{cases}
\end{align}
Note that $z_k$ uses only measured data to estimate $y_k$.

For RLS, vectorize:
\begin{align}
    \phi_k \coloneqq z_k^\top \otimes I_p \in \mathbb{R}^{p \times d},\qquad
    \theta \coloneqq \mathrm{vec}(\Theta) \in \mathbb{R}^d, \label{eq:vec_regressor}
\end{align}
where $d = p n_z$.  
Substituting \eqref{eq:vec_regressor} into \eqref{eq:estimation_model_block} yields
\begin{align}
    \hat{y}_k = \phi_k \theta,
\end{align}
which remains linear in parameters and is therefore suitable for standard RLS.

Parameters are updated by minimizing
\begin{align}
J_k(\theta) = \sum_{i=0}^k \lambda^{ k-i}(y_i-\phi_i\theta)^\top(y_i-\phi_i\theta)
+ \lambda^k (\theta-\theta_0)^\top P_0^{-1}(\theta-\theta_0),
\end{align}
where $0<\lambda\le1$. The minimizer admits \cite{lai2025recursive}
\begin{align}
L_k &= P_k / \lambda, \label{eq:rls1} \\
P_{k+1} &= L_k - L_k \phi_k^\top(I_p + \phi_k L_k \phi_k^\top)^{-1} \phi_k L_k, \label{eq:rls2} \\
\theta_{k+1} &= \theta_k + P_{k+1} \phi_k^\top (y_k - \phi_k \theta_k), \label{eq:rls3}
\end{align}
which is initialized with $\theta_0$ and $P_0 \succ 0$, where $P_k \in \mathbb{R}^{d\times d}_{\succ 0}$ and $I_p$ is the $p\times p$ identity. The Tikhonov term ensures $P_k \succ 0$ and numerical stability.

Note that each pair $(u_k,y_k)$ is processed once, producing $\theta_{k+1}$ without storing trajectories. Under persistence of excitation, $\theta_k$ converges to the true coefficients; otherwise the recursion remains well posed and yields a stable predictor.

Furthermore, although a fixed forgetting factor $\lambda$ is used in all simulations, low persistence of excitation can cause ill-conditioning in RLS. No divergence was observed over the reported windows, but variable-forgetting or covariance-reset strategies are recommended in general.

\clearpage
\section*{Control Synthesis} \label{sec:control}

Once the parameter update $\theta_{k+1}$ has been obtained from RLS,
the model is propagated over a finite horizon for control synthesis.
The key step is to compile the identified coefficients into an LPV--ARX form and
to freeze them over the prediction horizon.

\subsection{Coefficient Compilation}

The block-structured estimation model \eqref{eq:estimation_model_block} in the LPV--ARX canonical form is
\begin{align}
    \hat{y}_k = C_k + \sum_{i=1}^\ell A_{k,i} y_{k-i} + \sum_{i=0}^\ell B_{k,i} u_{k-i},
\end{align}
where $C_k\in\mathbb{R}^p$, $A_{k,i}\in\mathbb{R}^{p\times p}$, and $B_{k,i}\in\mathbb{R}^{p\times m}$ are the LPV coefficients.
To compute them, partition the parameter matrix $\Theta_{k+1}$ into blocks as
\begin{align}
\Theta_{k+1} =
\Big[\, \hat{C}^{(1)}
\ \Big|\ 
\hat{A}_1^{(1)} \ \cdots\ \hat{A}_1^{(q)} \ \Big|\ \cdots \ \Big|\ 
\hat{B}_\ell^{(1)} \ \cdots\ \hat{B}_\ell^{(q)} \,\Big],
\end{align}
where $\hat{A}_i^{(j)}\in\mathbb{R}^{p\times p}$ $(i=1,\ldots,\ell)$ and
$\hat{B}_i^{(j)}\in\mathbb{R}^{p\times m}$ $(i=0,\ldots,\ell)$.
The constant block $\hat{C}^{(1)}\in\mathbb{R}^p$ is included only when the kernel
dictionary contains a constant feature, that is, $\gamma_1(s)\equiv1$;
otherwise it is omitted and $C_k=0$.
At step $k$, the LPV coefficients are
\begin{align}
    C_k &= 
    \begin{cases}
        \gamma_1(s_k)\,\hat C, & \text{if a constant feature is used},\\
        0, & \text{otherwise},
    \end{cases} \\
    A_{k,i} &= \sum_{j=1}^q \gamma_j(s_k)\,\hat{A}_i^{(j)}, \qquad i=1,\ldots,\ell,\\
    B_{k,i} &= \sum_{j=1}^q \gamma_j(s_k)\,\hat{B}_i^{(j)}, \qquad i=0,\ldots,\ell.
\end{align}
These coefficients depend nonlinearly on past data via $\gamma_j(s_k)$ but are constants once evaluated at step $k$.

\subsection{LPV--ARX Prediction Model}

Using the coefficients $(C_k,A_{k,i},B_{k,i})$ computed at step $k$, the one-step predictor is
\begin{align}
    y_{1|k} &= C_k + \sum_{i=1}^{\ell} A_{k,i}y_{k+1-i}
                    + \sum_{i=1}^{\ell} B_{k,i}u_{k+1-i} + B_{k,0} u_{1|k},
    \label{eq:one_step}
\end{align}
where $y_{1|k}\in\mathbb{R}^p$ is the one-step predicted output and $u_{1|k}\in\mathbb{R}^m$ the one-step computed control.
For $j\ge2$, $y_{j|k}$ is obtained recursively by substituting already predicted outputs $y_{j-i|k}$ where needed.

\begin{lemma}[Stacked LPV--ARX propagation]\label{lem:toeplitz}
Fix $k$ and a horizon $N\ge1$. Define the stacked vectors
\begin{align}
Y \coloneqq \begin{bmatrix} y_{1|k} \\ \vdots \\ y_{N|k} \end{bmatrix} \in \mathbb{R}^{pN}, 
\qquad
U \coloneqq \begin{bmatrix} u_{1|k} \\ \vdots \\ u_{N|k} \end{bmatrix} \in \mathbb{R}^{mN}.
\end{align}

Let $S_0 \coloneqq I_N$ and, for $i\ge1$, let $S_i\in\mathbb{R}^{N\times N}$ be the
$i$-step shift matrix, whose $(j,\ell)$-th entry is
\[
(S_i)_{j\ell} =
\begin{cases}
1, & \text{if } j-\ell = i,\\[2pt]
0, & \text{otherwise.}
\end{cases}
\]
Let $F_i\in\mathbb{R}^{N\times i}$ select the first $i$ rows, that is,
$F_i=\begin{bmatrix} I_i \\ 0 \end{bmatrix}$. Define
\begin{align}
T_y \coloneqq \sum_{i=1}^{\ell} S_i \otimes A_{k,i},
\qquad
T_u \coloneqq \sum_{i=0}^{\ell} S_i \otimes B_{k,i}.
\end{align}
Collect the known initial conditions
\begin{align}
Y_{\mathrm{init}}^{(i)} \coloneqq \begin{bmatrix} y_{k+1-i} \\ \vdots \\ y_{k} \end{bmatrix} \in \mathbb{R}^{pi}, 
\qquad
U_{\mathrm{init}}^{(i)} \coloneqq \begin{bmatrix} u_{k+1-i} \\ \vdots \\ u_{k} \end{bmatrix} \in \mathbb{R}^{mi},
\end{align}
and define the offset
\begin{align}
\sigma_k &\coloneqq (\mathbf{1}_N \otimes C_k)
+ \sum_{i=1}^{\ell} (F_i \otimes A_{k,i}) Y_{\mathrm{init}}^{(i)} + \sum_{i=1}^{\ell} (F_i \otimes B_{k,i}) U_{\mathrm{init}}^{(i)} \quad \in \mathbb{R}^{pN},
\end{align}
where $\mathbf{1}_N\in\mathbb{R}^N$ is the column-vector of ones.

The derived stacked affine prediction form parallels the classical finite-horizon structure used in GPC \cite{clarke1987generalizedpart1}.
In the above form, the stacked predictions satisfy
\begin{align}
    (I_{pN} - T_y) Y = \sigma_k + T_u U,
\end{align}
yielding
\begin{align}
    Y = S_k + G_k U, 
    \label{eq:stacked_affine}
\end{align}
where
$S_k \coloneqq (I_{pN} - T_y)^{-1} \sigma_k$ and $G_k \coloneqq (I_{pN} - T_y)^{-1} T_u$.
Note that $T_y$ is strictly block lower triangular (nilpotent). Hence, $I_{pN}-T_y$ is unit lower triangular and invertible with
$(I_{pN}-T_y)^{-1}=\sum_{r=0}^{N-1} T_y^{ r}$.
\end{lemma}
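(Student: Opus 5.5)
The plan is a direct block-row verification. I will write out the recursion defining $y_{j|k}$ for each $j=1,\dots,N$, sort every term into one of three groups (known data, predicted outputs, future inputs), and then check that the block rows of $(I_{pN}-T_y)Y=\sigma_k+T_uU$ reproduce exactly these equations.

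\emph{Writing the recursion.} Extending \eqref{eq:one_step} with the coefficients frozen at step $k$, the $j$-th prediction is
\[
y_{j|k}=C_k+\sum_{i=1}^{\ell}A_{k,i}\,\eta_{j-i}+\sum_{i=1}^{\ell}B_{k,i}\,\nu_{j-i}+B_{k,0}u_{j|k}.
\]
Here $\eta_{t}=y_{t|k}$ and $\nu_t=u_{t|k}$ when $t\ge1$, while $\eta_t=y_{k+t}$ and $\nu_t=u_{k+t}$ are measured data when $t\le0$. For each lag $i$ I split the terms according to whether $j>i$ (a predicted or future quantity) or $j\le i$ (an initial condition).

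\emph{Matching $T_y$ and $T_u$.} The $(j,j-i)$ block of $S_i\otimes A_{k,i}$ is $A_{k,i}$, and all its other blocks vanish. Hence the $j$-th block row of $T_yY$ is $\sum_{i<j}A_{k,i}y_{j-i|k}$, which is exactly the predicted-output part. In the same way, the $j$-th block row of $T_uU$ is $B_{k,0}u_{j|k}+\sum_{1\le i<j}B_{k,i}u_{j-i|k}$. Both identities follow from the mixed-product rule for Kronecker products.

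\emph{Matching the offset $\sigma_k$.} The $j$-th block row of $(F_i\otimes A_{k,i})Y_{\mathrm{init}}^{(i)}$ is $A_{k,i}$ times the $j$-th block of $Y_{\mathrm{init}}^{(i)}$ when $j\le i$, and zero otherwise. By the ordering of $Y_{\mathrm{init}}^{(i)}$, that $j$-th block is $y_{k+1-i+(j-1)}=y_{k+j-i}$, which is exactly the initial-condition term. The input part is identical with $B_{k,i}$ and $U_{\mathrm{init}}^{(i)}$, and $\mathbf 1_N\otimes C_k$ places $C_k$ in every block row.

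Summing the three groups and moving $T_yY$ to the left gives the stated identity $(I_{pN}-T_y)Y=\sigma_k+T_uU$. One caveat must be handled explicitly: $F_i$ as written requires $i\le N$. If $\ell>N$, for $i>N$ I will read $F_i$ as the selector $[\,I_N\ 0\,]$ of the first $N$ blocks. This agrees with the same block-row formula, since every $j\le N<i$ falls into the initial-condition case.

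\emph{Invertibility.} $S_i$ is strictly lower triangular for $i\ge1$, so $T_y$ is strictly block lower triangular. Any product $T_y^{r}$ expands into terms $(S_{i_1}\cdots S_{i_r})\otimes(A_{k,i_1}\cdots A_{k,i_r})$. Since $S_aS_b=S_{a+b}$, with $S_c=0$ for $c\ge N$, and each $i_t\ge1$, these terms vanish once $r\ge N$. Thus $T_y^N=0$. The telescoping identity $(I-T_y)\sum_{r=0}^{N-1}T_y^r=I-T_y^N=I$ then gives both invertibility and the Neumann series. Multiplying the stacked identity by this inverse yields \eqref{eq:stacked_affine} with the stated $S_k$ and $G_k$.

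The only genuinely delicate step is the index bookkeeping in the offset term. The convention must be confirmed carefully: $u_k$ is past data, $u_{1|k}$ is the first free input, and the ordering of $Y_{\mathrm{init}}^{(i)}$ places $y_{k+j-i}$ in block $j$. The $\ell>N$ truncation caveat above is the other point that needs explicit treatment. Everything else is routine Kronecker algebra.
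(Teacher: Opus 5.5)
Your proposal is correct and takes the same route as the paper's proof. Like the paper, you check block row by block row (the paper uses selectors $E_j$) that $T_yY$, $T_uU$ and the $F_i$-terms of $\sigma_k$ reproduce the predicted-output, future-input and initial-condition parts of the frozen LPV--ARX recursion, then invoke nilpotency of $T_y$ and the finite Neumann series. Your explicit nilpotency argument ($S_aS_b=S_{a+b}$, with $S_c=0$ for $c\ge N$) and your handling of $F_i$ for $i>N$ (a case the definition $F_i\in\mathbb{R}^{N\times i}$ with an $I_i$ block does not cover) are small refinements the paper leaves implicit.
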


\noindent \textit{Proof:} The proof is deferred to the Appendix.

Note that, according to \eqref{eq:one_step}, with direct feedthrough ($B_{k,0}\neq 0$), $y_{1|k}$ depends on the decision variable $u_{1|k}$ through $B_{k,0}u_{1|k}$, while all other terms use measured data (including $u_{0|k}=u_k$). For $j\ge2$, predicted outputs $y_{j-i|k}$ enter the recursion, so multi-step accuracy under freezing is approximate and improves with modest horizons and slowly varying coefficients. If there is no direct feedthrough ($B_{k,0}=0$), then the $N=1$ predictor uses only measured data and is exact.

Moreover, note that $G_k$ is block lower-triangular Toeplitz: its diagonal block is $B_{k,0}$, and the $i$-th subdiagonal equals the convolution of $\{A_{k,r}\}$ with $B_{k,i}$ through $(I_{pN}-T_y)^{-1}$.
Furthermore, note that the form \eqref{eq:stacked_affine} justifies the quadratic cost and Cholesky solve.

\subsection{Quadratic Cost and Cholesky Solution}

Define the input-increment vector \(\Delta U \coloneqq D\,U - d_k\), where
\[
D \;=\;
\begin{bmatrix}
I_m & 0 &        &        & 0 \\
- I_m & I_m & 0 &        & \vdots \\
0 & - I_m & I_m & \ddots &  \\
\vdots & & \ddots & \ddots & 0 \\
0 & \cdots & 0 & - I_m & I_m
\end{bmatrix}
\in \mathbb{R}^{mN\times mN}
\]
is the block first-difference operator and
\(d_k = \begin{bmatrix} u_k^\top & 0 & \dots & 0\end{bmatrix}^\top \in \mathbb{R}^{mN}\).
This gives
\(\Delta U = \begin{bmatrix}(u_{k|k}-u_k)^\top & (u_{k+1|k}-u_{k|k})^\top & \dots \end{bmatrix}^\top\).
The stage cost
\begin{align}
J(U)
= \tfrac12(S_k + G_k U - R)^\top Q_y (S_k + G_k U - R)
+ \tfrac12\,\Delta U^\top R_u\,\Delta U,
\end{align}
penalizes predicted tracking error and input increments, with \(Q_y\succeq0\) and \(R_u\succ0\).
Expanding yields
\begin{align}
J(U) = \tfrac12 U^\top H U + h^\top U + J_0,
\end{align}
where
\begin{align}
    H &\coloneqq G_k^\top Q_y G_k + D^\top R_u D
    \;\in\mathbb{R}^{mN\times mN}, \\
    h &\coloneqq G_k^\top Q_y (S_k - R)
    \;\in\mathbb{R}^{mN}, \\
    J_0 &\coloneqq \tfrac12 (S_k - R)^\top Q_y (S_k - R)
    \;\in\mathbb{R}_{\ge0}.
\end{align}
The constant term \(J_0\) does not affect the minimizer.

\subsubsection{Well-Posedness}
If $R_u\succ0$, then $H\succ0$ and the minimizer is unique.
Alternatively, if $R_u\succeq0$, $Q_y\succ0$, and $G_k$ has full column rank, then $H\succ0$.

\subsubsection{Solution}
Solve $H U^\star=-h$.
Since $H$ is symmetric positive definite under the conditions above, first compute the Cholesky factorization $H=LL^\top$ with $L$ lower triangular (Algorithm \ref{alg:cholesky}). Next, solve sequentially
\begin{align}
    L z = -h,\qquad L^\top U^\star = z,
\end{align}
where $z\in\mathbb{R}^{mN}$ is the intermediate vector (Algorithm \ref{alg:trisolve}).
Then, apply the receding-horizon control $u_{k+1}=E_1 U^\star$, where $E_1 = \begin{bmatrix}
    I_m & 0 & \ldots & 0
\end{bmatrix}$.

\begin{algorithm}[!ht]
\caption{Cholesky factorization (unblocked, lower)}
\label{alg:cholesky}
\begin{algorithmic}[1]
\Require $H \in \mathbb{R}^{n\times n}$ with $H \succ 0$
\Ensure $L \in \mathbb{R}^{n\times n}$ lower triangular with $H = LL^\top$
\For{$i=1$ to $n$}
  \State $s \gets H_{ii} - \sum_{k=1}^{i-1} L_{ik}^2$
  \State $L_{ii} \gets \sqrt{s}$
  \For{$j=i+1$ to $n$}
    \State $t \gets H_{ji} - \sum_{k=1}^{i-1} L_{jk}L_{ik}$
    \State $L_{ji} \gets t / L_{ii}$
  \EndFor
\EndFor
\State \textbf{return} $L$ \Comment{$L_{ij}=0$ for $j>i$}
\end{algorithmic}
\end{algorithm}

\begin{algorithm}[!ht]
\caption{Triangular solves for $HU=b$ with $H=LL^\top$}
\label{alg:trisolve}
\begin{algorithmic}[1]
\Require $L$ from Alg.~\ref{alg:cholesky}, $b\in\mathbb{R}^n$
\Ensure $x\in\mathbb{R}^n$ such that $Hx=b$
\State \textit{Forward solve:} compute $z$ from $Lz=b$
\For{$i=1$ to $n$}
  \State $z_i \gets \bigl(b_i - \sum_{k=1}^{i-1} L_{ik} z_k\bigr)/L_{ii}$
\EndFor
\State \textit{Backward solve:} compute $x$ from $L^\top x=z$
\For{$i=n$ down to $1$}
  \State $x_i \gets \bigl(z_i - \sum_{k=i+1}^{n} L_{ki} x_k\bigr)/L_{ii}$
\EndFor
\State \textbf{return} $x$
\end{algorithmic}
\end{algorithm}

Note that, in practical applications, input--output bounds may be present.
One may incorporate such
constraints approximately through penalty or projection strategies
(for example, input clipping or soft penalties).
A detailed treatment is beyond the scope of this work.

Furthermore, note that, for initialization, one may either apply a sufficiently rich randomized input sequence to initiate excitation, or simply record the output data with zero input until the regression window is filled and then activate the controller. In practice, both approaches provide adequate initial conditions for the online RLS update, after which the controller adapts automatically.

\clearpage
\section*{Numerical Studies}

This study examines the closed-loop behavior of the \emph{adaptive behavioral predictive control} (ABPC) algorithm on systems of increasing complexity.
It emphasizes systematic numerical observation over formal analysis.
Each example isolates one factor (stability, dimensionality, nonlinearity, or structure) to study how identification and control interact in practice.

\subsection*{Simulation Setup}

All systems are treated in discrete time. 
For continuous-time dynamics, a fixed discretization step~$T_\mathrm{s}$ is introduced for numerical integration and control. 
At the beginning of each run, a short pseudo-random binary sequence (PRBS) of fixed amplitude is applied for~$T_\mathrm{warm}$ steps to fill the regression window and avoid degenerate regressors. 
This warm-up phase ensures minimal data sufficiency, but does not assume persistence of excitation. 
Afterwards, the loop is closed and the ABPC controller is executed for the remainder of the simulation.

Identification follows the RLS recursion~\eqref{eq:rls1}--\eqref{eq:rls3} with ridge parameter~$\rho>0$, initializing the covariance as~$P_0 = \rho^{-1} I_d$. 
The quadratic cost uses fixed weights~$Q_y \succeq 0$ and~$R_u \succ 0$, and is solved at each step using Algorithms~\ref{alg:cholesky} and~\ref{alg:trisolve}. 
No input--output inequality constraints or explicit saturation handling are applied.

\subsection*{Kernel Dictionaries}

The feature vector~$g_k$ employed in identification concatenates a single
global intercept and optional kernel-generated blocks, forming a unified
representation across all dictionary types:
\[
g_k =
\begin{bmatrix}
\gamma^{\mathrm{uni}}(s_k)\\[0.4mm]
\gamma^{\mathrm{lin}}(s_k)\\[0.4mm]
\gamma^{\mathrm{poly},\delta}(s_k)\\[0.4mm]
\gamma^{\mathrm{rbf}}(s_k)
\end{bmatrix},
\]
where $q = 1 + r + q_{\mathrm{poly}} + q_{\mathrm{rbf}}$. Only the first coordinate is constant and equal to~$1$;
all remaining entries depend on the current regressor
$s_k\in\mathbb{R}^r$ and capture its nonlinear transformations.

The individual feature blocks are defined as
\begin{align}
\gamma^{\mathrm{uni}}(s_k) & = 1, \\[1mm]
\gamma^{\mathrm{lin}}(s_k) & = s_k, \\[1mm]
\gamma^{\mathrm{poly},\delta}(s_k)
 & =
 \begin{bmatrix}
 s_k^{\otimes 1}\\ s_k^{\otimes 2}\\ \vdots\\ s_k^{\otimes \delta}
 \end{bmatrix}, \\[1mm]
\big(\gamma^{\mathrm{rbf}}(s_k)\big)_j
 & = \exp\!\left(-\frac{\|s_k - c_j\|_2^2}{2\sigma^2}\right),
\end{align}
where $j = 1,\dots,q_{\mathrm{rbf}}$.
The notation~$s_k^{\otimes d}$ denotes the column vector collecting all
monomials of total degree~$d$ in the components of~$s_k$.
Formally,
\[
s_k^{\otimes d}
\,\coloneqq\,
\begin{bmatrix}\,\prod_{i=1}^{r} s_{k,i}^{\,e_i^{(1)}}&
       \prod_{i=1}^{r} s_{k,i}^{\,e_i^{(2)}}&\dots&
       \prod_{i=1}^{r} s_{k,i}^{\,e_i^{(m_d)}}\end{bmatrix}^\top,
\]
where the exponent vectors
$\{e^{(\ell)}\}_{\ell=1}^{m_d}\subset\mathbb{N}_0^{\,r}$
satisfy $\sum_{i=1}^r e_i^{(\ell)}=d$
and are arranged in a fixed deterministic order (for example, graded lexicographic).
For example, for $r=3$ and $d=2$,
\[
s_k^{\otimes 2}
=
\begin{bmatrix}s_{k,1}^2& s_{k,1}s_{k,2}& s_{k,1}s_{k,3}&
      s_{k,2}^2& s_{k,2}s_{k,3}& s_{k,3}^2\end{bmatrix}^\top.
\]
Unless stated otherwise, only independent powers of each coordinate are
included (no cross terms), namely,
\[
s_k^{\otimes d}
=
\begin{bmatrix}s_{k,1}^{d} & s_{k,2}^{d} & \dots & s_{k,r}^{d}
\end{bmatrix}^\top.
\]

For the RBF block, the parameters $c_j\in\mathbb{R}^r$
denote fixed Gaussian centers and $\sigma>0$ the width of each kernel.
The centers $\{c_j\}$ are drawn once from the standard normal distribution
$\mathcal{N}(0,I)$ and remain fixed throughout the experiment, ensuring
deterministic reproducibility.

In the numerical studies, each kernel type is evaluated separately to
isolate its contribution to model performance.
A single global intercept is included for the linear and polynomial cases,
while the RBF case is used without an intercept,
reflecting the fact that overlapping Gaussian activations already provide
a constant baseline component.
The resulting configurations are
\[
\text{linear: } g_k=\begin{bmatrix}1\\ \gamma^{\mathrm{lin}}(s_k)\end{bmatrix},
\qquad
\text{polynomial: } g_k=\begin{bmatrix}1\\ \gamma^{\mathrm{poly},\delta}(s_k)\end{bmatrix},
\qquad
\text{RBF: } g_k=\gamma^{\mathrm{rbf}}(s_k).
\]
This convention guarantees a single intercept across all model variants
and maintains consistency between theory and its implementation.

\subsection*{Organization of Examples}

\begin{enumerate}[label=E\arabic*:]
  \item \emph{\hyperref[sec:ex1]{SISO linear, stable}.} Command tracking for baseline validation and kernel comparison.
  \item \emph{\hyperref[sec:ex2]{SISO linear, unstable}.} Stabilization and disturbance rejection with frequency analysis.
  \item \emph{\hyperref[sec:ex3]{MIMO linear, unstable}.} Multi-channel command tracking and identification at scale.
  \item \emph{\hyperref[sec:ex4]{SISO quadratic NARX without cross terms}.} Core demonstration of nonlinear capability.
  \item \emph{\hyperref[sec:ex5]{SISO polynomial cross-term cubic NARX}.} Extension of the quadratic NARX case to polynomial cross terms.
  \item \emph{\hyperref[sec:ex6]{SISO Hammerstein benchmark}.} Static nonlinearity preceding linear dynamics.
  \item \emph{\hyperref[sec:ex7]{MIMO nonlinear application}.} Attitude stabilization on $SO(3)$.
\end{enumerate}

Noise is excluded from baseline figures; robustness is evaluated in the Appendix, 
where Examples~E2 and~E6 are repeated under additive measurement noise.

\newpage

\subsection{Example~1. SISO Linear, Stable}\label{sec:ex1}

Consider the discrete-time SISO linear system
\begin{equation}
  y_k = 1.5\,y_{k-1} - 0.7\,y_{k-2} + 0.5\,u_{k} + 0.3\,u_{k-1}.
\end{equation}
Its transfer function in the \(z\)-domain is
\begin{equation}
  G(z) = \frac{0.5\,z^2 + 0.3 z}{z^{2} - 1.5\,z + 0.7}.
\end{equation}
The plant poles are \(0.75 \pm 0.371\mathrm{j}\) (modulus \(0.837<1\)); the zeros are at \(z_{0}=0\) and $z_1 = -0.6$ (minimum phase).
Initial conditions: \(y_{-1}=y_{-2}=u_{0}=u_{-1}=0\).
This example is used as a baseline case to illustrate the ABPC procedure and performance under a stable and noise-free setting.

All runs are deterministic with random seed 42.  
The input--output dimensions are \(p=m=1\), regressor lag \(\ell=2\),
and total simulation length \(T=250\).
The first \(T_\mathrm{warm}=50\) samples form a warm-up period used to fill the identification window.
RLS uses forgetting factor \(\lambda=1.0\) and ridge regularization \(\rho=10^{-9}\);
numerical jitter \(\varepsilon=10^{-12}\) is added where required for symmetry and positive definiteness.

The predictive horizon is \(N=16\), covering about 95 \% of the dominant pole decay.
No inequality constraints or input clipping are applied.
The weights are $Q_y = I_N$ and $R_u = \rho_u I_N$, where $\rho_u = 10^{-3}$.

During the warm-up, the plant is shortly excited by a PRBS (dwell 5-15 steps) with amplitude 0.10.
After the warm-up, the reference \(r_k\) consists of three 50-step plateaus:
\begin{align}
r_k =
  \begin{cases}
    1.0, & k\!\in\![51,100],\\
   -0.5, & k\!\in\![101,150],\\
    0.75,& k\!\in\![151,200],\\
    0,   & \text{otherwise.}
  \end{cases}\label{eq:ref_ex1}
\end{align}

Each kernel type is evaluated under identical conditions. 
The polynomial kernel uses degree~$\delta=2$ with 
$q_{\mathrm{poly}}=4$ features, 
while the RBF kernel uses width~$\sigma=10^3$ with a single center 
($q_{\mathrm{rbf}}=1$), 
where the center $c_1 \in \mathbb{R}^{r}$ is drawn once from 
$\mathcal{N}(0,I)$ using a fixed random seed to ensure reproducibility.

Performance is evaluated during control, that is, over \(k\!\in[51,250]\), using
the root-mean-square error (RMSE) and integral absolute error (IAE):
\[
\mathrm{RMSE}=\sqrt{\tfrac{1}{n}\sum_k (r_k-y_k)^2},\qquad
\mathrm{IAE}=\sum_k |r_k-y_k|.
\]
Input activity is characterized by the total variation
\(\mathrm{TV}(u)=\sum_k |u_{k+1}-u_k|\)
and the maximum amplitude \(\|u\|_\infty = \max_k |u_k|\).
The instantaneous prediction error magnitude $|\hat{y}_k - y_k|$ is shown in logarithmic scale to indicate the stepwise accuracy of the identified model.

The closed-loop results of Example~1 are summarized in Figures~\ref{fig:ex1_output}--\ref{fig:ex1_lpv_coeff}.
All kernel configurations achieve stable tracking of the reference signal, with close-to-zero steady-state errors.
The one-step prediction errors gradually decrease and reach a plateau after each command step, showing short transient peaks at reference changes due to renewed excitation.
Among all kernels, the unitary case yields the smallest residual error, whereas the RBF kernel is least accurate during the first three steps but becomes most accurate by the fourth in the noise-free case.

The performance metrics are reported in Table~\ref{tab:ex1_phaseII}.
All kernel configurations achieve nearly identical closed-loop performance, 
with RMSE~$\approx$~0.017, IAE~$\approx$~0.93, and comparable input effort 
($\mathrm{TV}_u\!\approx\!46.4$, $\mathrm{Peak}_u\!\approx\!2.6$).
The polynomial kernel shows a slightly higher tracking error and input activity, 
while the unitary, linear, and RBF kernels produce indistinguishable results within numerical precision.
Overall, the responses confirm consistent controller behavior across kernel types in this linear setting.

\begin{table}[!ht]
\centering
\caption{\textbf{Example 1}: Performance metrics. 
Root-mean-square error (RMSE) and integral absolute error (IAE) quantify output tracking accuracy; 
total variation of the input (TV$_u$) and peak input amplitude (Peak$_u$) characterize input activity.
All kernel configurations exhibit nearly identical closed-loop performance, with RMSE $\approx 0.017$, IAE $\approx 0.93$, and comparable total variation of the input (TV$_u \approx 46.4$) and peak input amplitude (Peak$_u \approx 2.6$).
The polynomial kernel shows slightly elevated tracking error and input activity, while the unitary, linear, and radial-basis-function (RBF) kernels remain indistinguishable within numerical precision.
These results confirm consistent controller behavior across kernel types in this linear scenario.
}
\label{tab:ex1_phaseII}
\begin{tabular}{lcccc}
\toprule
Kernel & RMSE & IAE & TV$_u$ & Peak$_u$ \\
\midrule
Unitary      & 0.0172 & 0.928 & 46.40 & 2.60 \\
Linear       & 0.0172 & 0.928 & 46.40 & 2.60 \\
RBF          & 0.0172 & 0.928 & 46.40 & 2.60 \\
Polynomial-2 & 0.0221 & 1.186 & 48.49 & 2.61 \\
\bottomrule
\end{tabular}
\end{table}

\begin{figure}[!ht]
  \centering
  \begin{subfigure}[t]{0.49\textwidth}
    \centering
    \includegraphics[width=.9\linewidth]{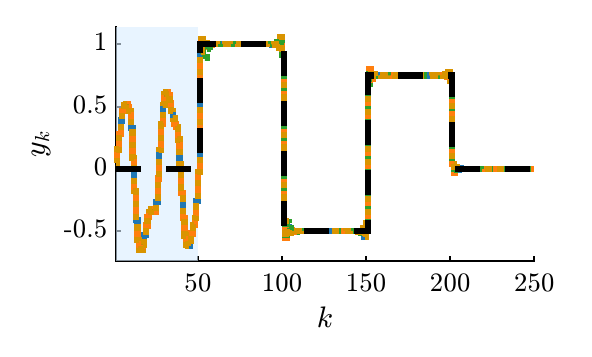}
    \caption{Output response for $k\in[0,250]$.}
  \end{subfigure}
  \hfill
  \begin{subfigure}[t]{0.49\textwidth}
    \centering
    \includegraphics[width=.9\linewidth]{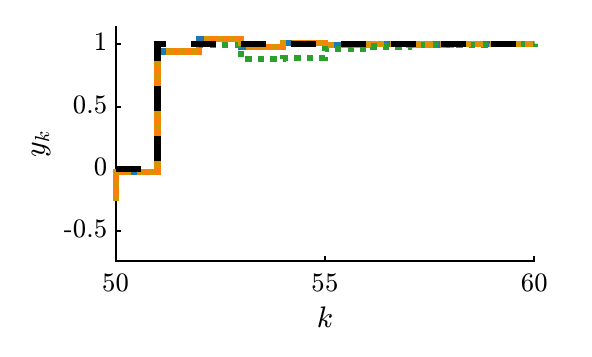}
    \caption{Zoomed output response for $k\in[50,60]$.}
  \end{subfigure}

  \vspace{0.6em}

  \begin{subfigure}[t]{\textwidth}
    \centering   \includegraphics[trim=0 40 0 0,width=.8\textwidth]{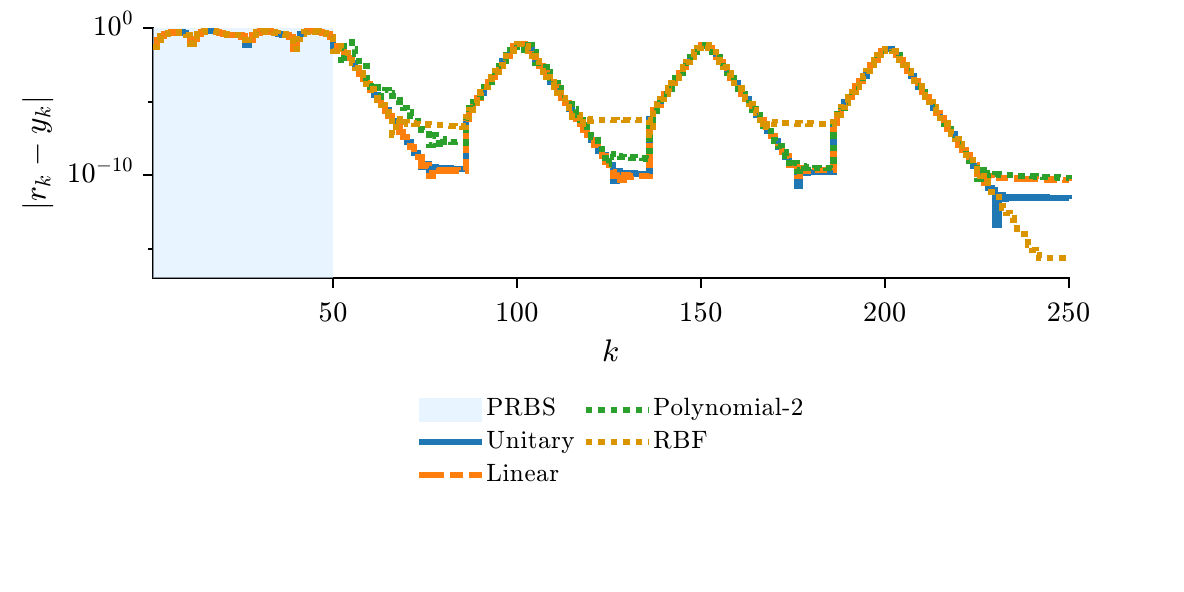}
    \caption{Tracking error $\lvert r_k - y_k\rvert$ in logarithmic scale.}
  \end{subfigure}
  \caption{
  \textbf{Example~1}. Closed-loop output response of the four kernel configurations.  
    Subplots~(a)--(b) display the full and zoomed responses to the piecewise-constant reference signal (black dashed-line).  
    The output trajectories for the unitary, linear, and radial-basis-function (RBF) kernels overlap, while the polynomial case initially shows a slight transient deviation for the first step command.  
    Subplot~(c) shows the absolute tracking error $\lvert r_k - y_k \rvert$ on a logarithmic scale.  
    All configurations track the reference sequence with progressively smaller error.  
    At the final command step, the steady-state errors are approximately $2.2\times10^{-16}$ (RBF), $2.9\times10^{-12}$ (unitary), $4.6\times10^{-11}$ (linear), and $6.6\times10^{-11}$ (polynomial).
  }
  \label{fig:ex1_output}
\end{figure}

\begin{figure}
    \centering
    \includegraphics[width=.7\linewidth]{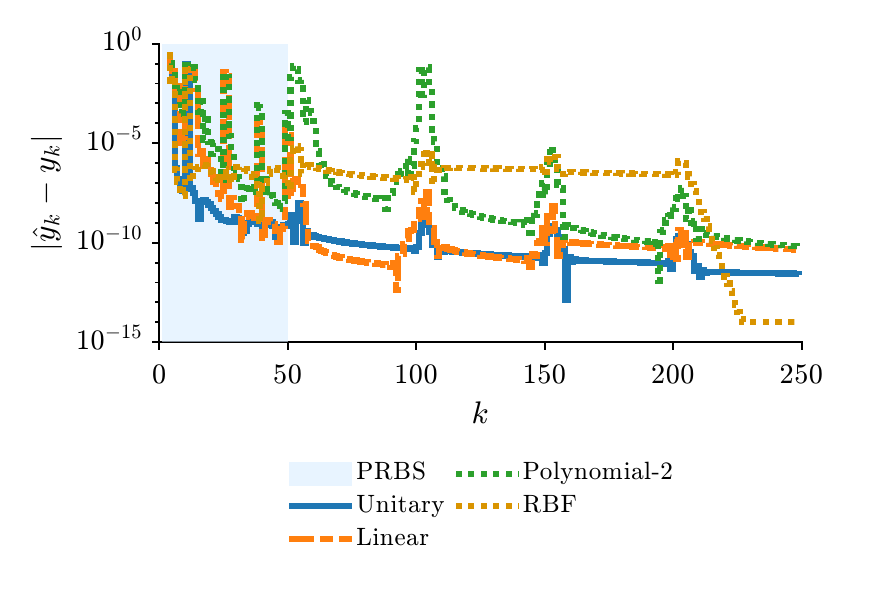}
    \caption{
\textbf{Example~1}. One-step prediction error magnitude.  
The figure shows the instantaneous prediction error $|\hat y_k - y_k|$ in logarithmic scale for all kernel configurations.  
During the pseudo-random binary sequence (PRBS) excitation phase, all errors decrease progressively, with more pronounced transient spikes for the linear and polynomial kernels.  
At each command step change, short error spikes reappear for all configurations, most noticeably for the polynomial case.  
The radial-basis-function (RBF) kernel converges rapidly and plateaus early, remaining the least accurate overall but reaching a final prediction error of approximately $10^{-14}$. The unitary kernel maintains the most consistent accuracy, with a final error of about $2.8\times10^{-12}$, while the linear kernel performs comparably, ending near $4.5\times10^{-11}$. The polynomial kernel shows larger residuals, reaching around $6.4\times10^{-11}$ at the end of the experiment. 
    }
    \label{fig:ex1_err_prediction}
\end{figure}

\begin{figure}[!ht]
  \centering
  \begin{subfigure}[t]{.49\textwidth}
    \centering
    \includegraphics[width=\linewidth]{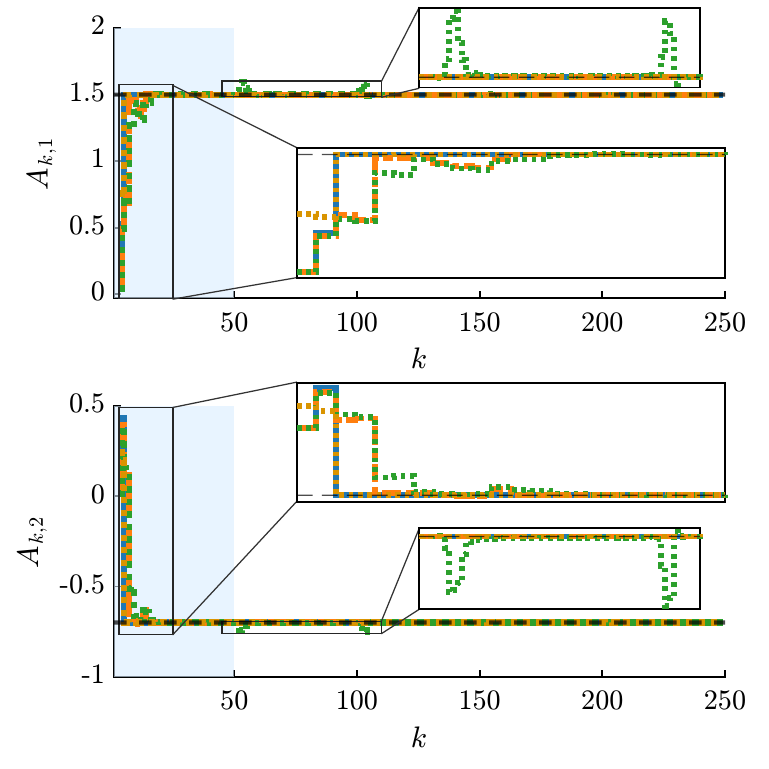}
    \caption{Evolution of $A_{k}$ (output coefficients).}
  \end{subfigure}
  \hfill
  \begin{subfigure}[t]{.49\textwidth}
    \centering
    \includegraphics[width=\linewidth]{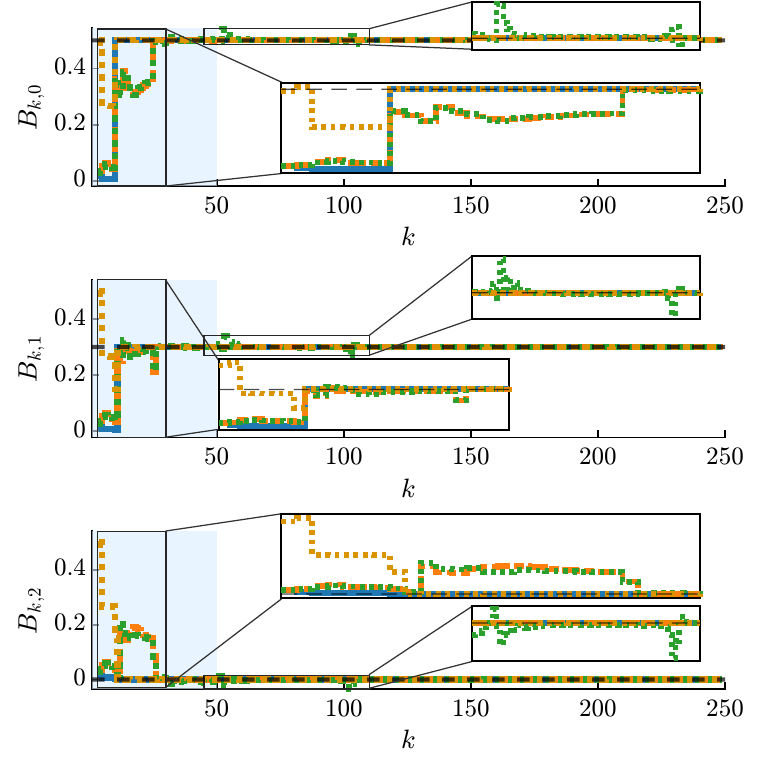}
    \caption{Evolution of $B_{k}$ (input coefficients).}
  \end{subfigure}

  \vspace{0.6em}

  \begin{subfigure}[t]{\textwidth}
    \centering
    \includegraphics[trim = 0 20 0 0, width=.8\textwidth]{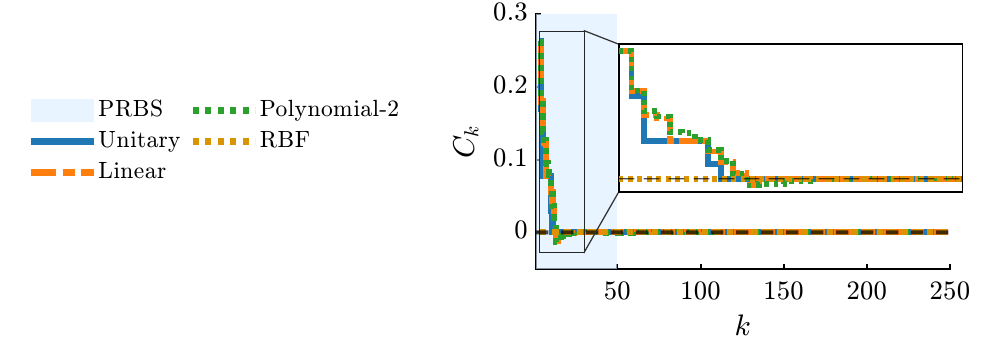}
    \caption{Evolution of $C_k$ (intercept term).}
  \end{subfigure}
  \caption{\textbf{Example~1}.
  Identified linear-parameter-varying--autoregressive-with-exogenous-input (LPV--ARX) coefficients over time. 
Each panel shows the estimated $A_{k,j}$, $B_{k,j}$, and $C_k$ 
compared with the true parameters (dashed black). 
Across all kernels, the identified coefficients approach the true values after a short transient. 
For the $A_k$ blocks, the unitary and radial-basis-function (RBF) kernels reach the steady level within a few steps, 
while the linear and polynomial kernels exhibit slower transients. 
When zoomed in, small fluctuations appear for the polynomial kernel 
around the first two command changes, 
indicating slightly higher sensitivity to excitation; no further variations occur afterward. 
A similar trend appears in $B_k$: both unitary and RBF responses stabilize rapidly, while the polynomial case exhibits mild fluctuations around the first two command changes. The RBF shows a slower adjustment in $B_{k,2}$ than the unitary.
For the intercept $C_k$, all kernels evolve smoothly toward the same steady value. The RBF remains at zero throughout, while the others converge with comparable speed and without visible fluctuations.}\label{fig:ex1_lpv_coeff}
\end{figure}

\clearpage
  
\subsection{Example~2. SISO Linear, Unstable}\label{sec:ex2}

Consider the discrete-time SISO system
\begin{equation}
  y_{k} = 2.0\,y_{k-1} - 1.01\,y_{k-2} + 0.5\,u_{k-1} - 0.65\,u_{k-2}.
\end{equation}
Its transfer function is
\begin{equation}
  G(z) = \frac{0.5\,z - 0.65}{z^{2} - 2.0\,z + 1.01}.
\end{equation}
The poles \(1\pm0.1\mathrm{j}\) have modulus \(\sqrt{1.01}\approx1.005>1\), hence the plant is unstable.  
The zero at \(z_0=1.3\) lies outside the unit circle, so the plant is nonminimum-phase.  
Initial conditions: \(y_{-1}=y_{-2}=u_{-1}=u_{-2}=0\).  
This plant tests the ability of the ABPC controller to stabilize an unstable, NMP process during online identification.  
The control task is regulation to the origin, that is, \(r_{k}=0\).

Next, we consider the perturbed system
\begin{equation}
  y_{k} = 2.0\,y_{k-1} - 1.01\,y_{k-2} + 0.5\,u_{k-1} - 0.65\,u_{k-2} + d_{k},
  \qquad
  d_{k} = A\cos(\omega k),
\end{equation}
where $A\in\mathbb{R}_{>0}$ is the amplitude, and \(\omega\in[0,\pi]\) is the disturbance frequency (radians/sample).  
By sweeping \(\omega\) and measuring the steady-state RMS of \(y_{k}\) after transient decay, 
an empirical closed-loop disturbance-to-output gain (sensitivity curve) is obtained for each kernel.

A short warm-up of \(T_{\mathrm{warm}} = 5\) samples to avoid divergence before closing the loop.  
The subsequent RMS measurements exclude this warm-up window.

\subsubsection{Stabilization}
This example targets closed-loop stabilization. All parameters are identical to those in Example~1 except for the horizon and the control weight, which are set to $N=20$ and $\rho_u = 3.5\times10^3$, respectively.

Figure~\ref{fig:ex2a_output} shows the closed-loop stabilization results for the stable SISO linear system under all kernel specifications. 
All responses start from a positive value and display an initial negative excursion before settling around the origin. 
The RBF kernel yields the fastest decay and smallest oscillation amplitude, followed by the unitary, linear, and polynomial kernels in that order. 
Although all responses stabilize, the polynomial kernel exhibits more pronounced oscillations and slower attenuation. 
The logarithmic error traces confirm the same trend, with the RBF kernel maintaining the lowest overall error magnitude.

Figure~\ref{fig:ex2a_lpv_coeff} displays the identified LPV--ARX coefficients over the first 60 steps, corresponding to the transient learning phase. 
For both $A_k$ and $B_k$, the RBF and unitary kernels reach their steady values within the PRBS window, while the linear and polynomial kernels require longer adaptation intervals, approximately 20 and 45 steps, respectively. 
All coefficients remain steady after this window, though not shown for brevity. 
For $C_k$, the RBF kernel yields a constant zero term as the intercept is already represented in its feature set, whereas the unitary, linear, and polynomial kernels approach zero at increasing rates consistent with their adaptation speeds.

\begin{table}[!ht]
\centering
\caption{\textbf{Example~2a}: Performance metrics. In bold: kernel with the lowest root-mean-square error (RMSE) and integral absolute error (IAE).
The radial-basis-function (RBF) kernel produces the smallest tracking error and control effort, reflected in its minimal RMSE, IAE, and total variation of the input (TV$_u$) values.
The unitary kernel performs similarly, maintaining stable dynamics with moderate input usage.
Linear and polynomial kernels show larger steady-state oscillations and increased integrated error, with the polynomial case also exhibiting the greatest control variability and peak magnitude.
These numerical outcomes align with the transient behavior illustrated in Figures~\ref{fig:ex2a_output} and~\ref{fig:ex2a_lpv_coeff}.
}
\label{tab:ex2a_metrics}
\begin{tabular}{lcccc}
\toprule
Kernel & RMSE & IAE & TV$_u$ & Peak$_u$ \\
\midrule
RBF          & \textbf{0.2470} & \textbf{26.98} & 0.451 & 0.100 \\
Unitary      & 0.4076 & 44.42 & 0.721 & 0.135 \\
Linear       & 0.5901 & 64.56 & 0.823 & 0.134 \\
Polynomial-2 & 0.8059 & 95.39 & 1.374 & 0.188 \\
\bottomrule
\end{tabular}
\end{table}

Table \ref{tab:ex2a_metrics} shows that all kernels achieve stabilization, with performance ordered by RBF, unitary, linear, and polynomial. 
The RBF kernel yields the lowest tracking error and control effort, as seen from its smallest RMSE, IAE, and TV$_u$ values. 
Unitary follows closely, providing stable behavior with moderate control activity. 
Linear and polynomial kernels exhibit larger steady-state fluctuations and higher integrated error, with the polynomial also showing the largest control variation and peak amplitude. 
These quantitative results are consistent with the transient observations in Figures~\ref{fig:ex2a_output} and~\ref{fig:ex2a_lpv_coeff}.

\begin{figure}[!ht]
  \centering
  \begin{subfigure}[t]{\textwidth}
    \centering
    \includegraphics[width=.7\linewidth]{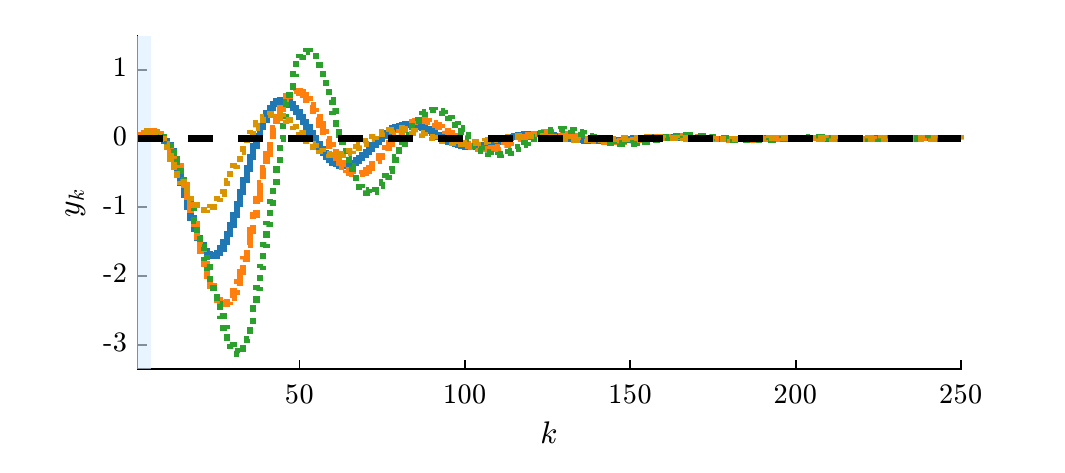}
    \caption{Output response.}
  \end{subfigure}
  \hfill
  \vspace{0.6em}
  \begin{subfigure}[t]{\textwidth}
    \centering   \includegraphics[trim=0 40 0 0,width=.7\textwidth]{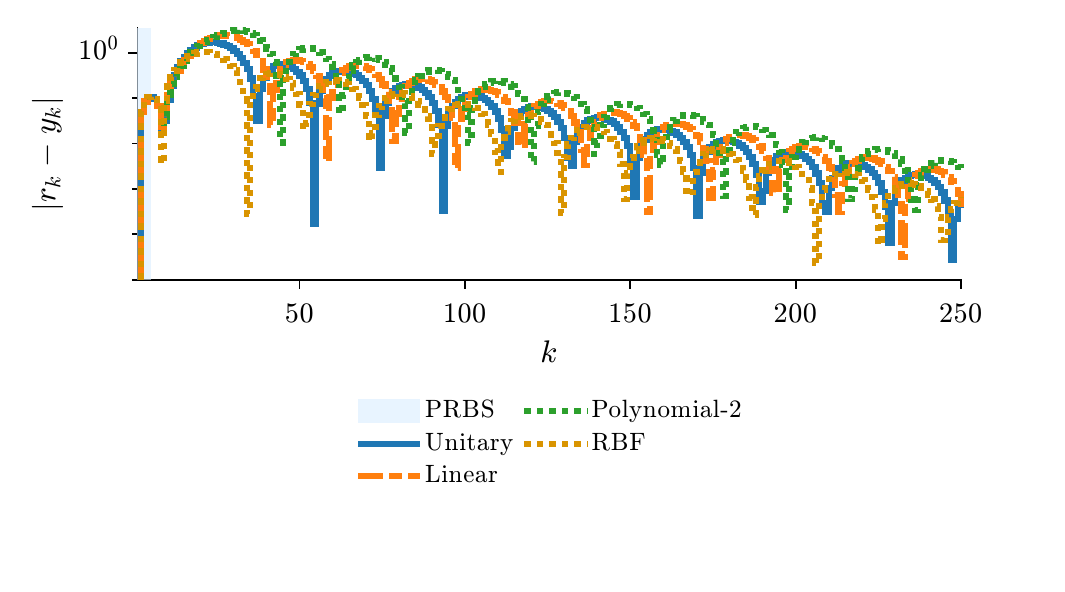}
    \caption{Tracking error $\lvert r_k - y_k\rvert$ in logarithmic scale.}
  \end{subfigure}
  
  \caption{
  \textbf{Example~2a}. Closed-loop output response of the four kernel configurations.
  (a) All kernels start with a small overshoot followed by a negative excursion before recovering. The recovery order is radial basis function (RBF), unitary, linear, and polynomial. All responses reach stabilization, but oscillation amplitude increases in the same order, with approximate second-peak magnitudes of 0.35 (RBF), 0.54 (unitary), 0.67 (linear), and 1.26 (polynomial). The oscillation period is approximately 40 steps for all kernels. 
(b) Logarithmic squared-error traces with respect to the origin. Error magnitudes decrease after the PRBS phase, consistent with the attenuation of oscillations observed in (a). The same performance ranking is maintained: RBF lowest, followed by unitary, linear, and polynomial. Small envelope-like periodic modulations reflect the residual oscillatory component of the output.
  }
  \label{fig:ex2a_output}
\end{figure}

\begin{figure}[!ht]
  \centering
  \begin{subfigure}[t]{.49\textwidth}
    \centering
    \includegraphics[width=\linewidth]{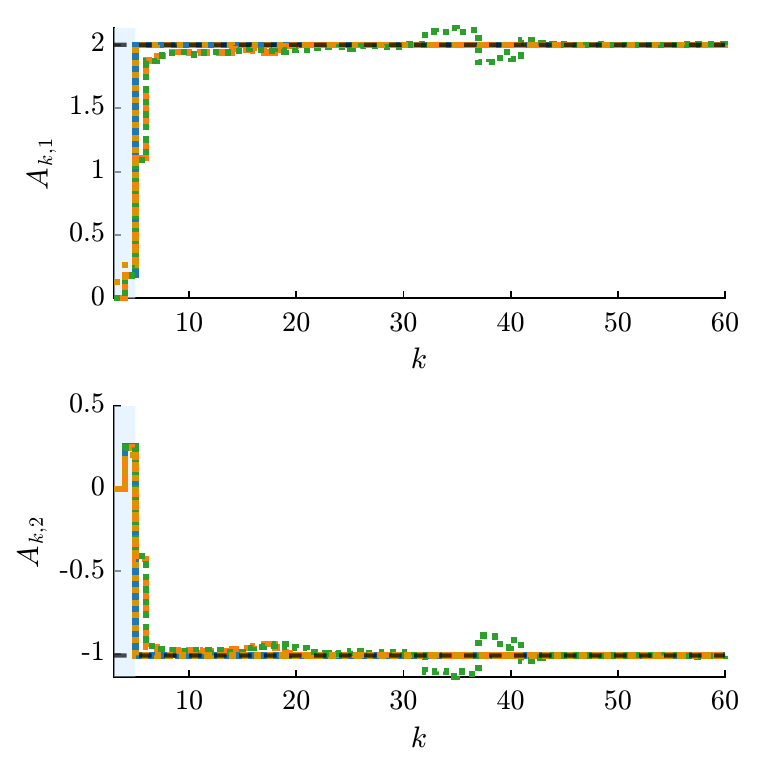}
    \caption{Evolution of $A_{k}$ (output coefficients).}
  \end{subfigure}
  \hfill
  \begin{subfigure}[t]{.49\textwidth}
    \centering
    \includegraphics[width=\linewidth]{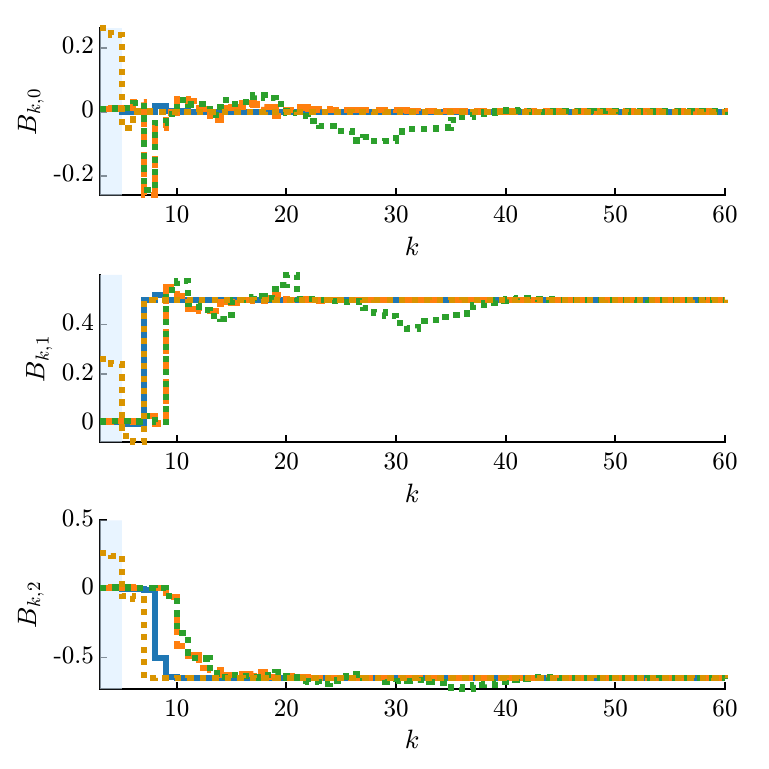}
    \caption{Evolution of $B_{k}$ (input coefficients).}
  \end{subfigure}

  \vspace{0.6em}

  \begin{subfigure}[t]{\textwidth}
    \centering
    \includegraphics[trim = 0 20 0 0, width=.8\textwidth]{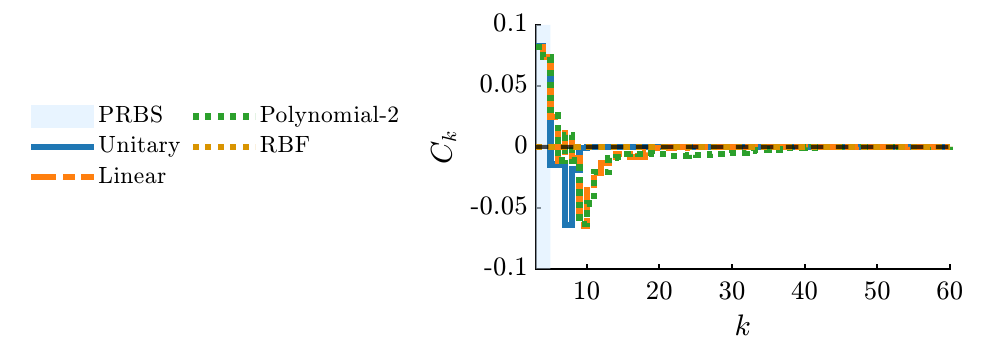}
    \caption{Evolution of $C_k$ (intercept term).}
  \end{subfigure}
  \caption{
  \textbf{Example~2a}. Linear-parameter-varying--autoregressive-with-exogenous-input (LPV--ARX) coefficient evolution for $k\in[0,60]$.
(a) $A_k$: The radial-basis-function (RBF) and unitary kernels reach the true value within the pseudo-random binary sequence (PRBS) window and remain close thereafter. The linear kernel reaches the true value with moderate fluctuations and settles by about 20 steps. The polynomial kernel reaches the true value with larger fluctuations and settles by about 45 steps. Beyond 60 steps, all traces are steady; the complete history curves are omitted for clarity.
(b) $B_k$: Similar trend. RBF and unitary settle within roughly 10 steps (about 5 steps after the 5-step PRBS). Linear settles more slowly, while polynomial shows wider fluctuations and stabilizes by about 45 steps. Beyond 60 steps all traces remain steady.
(c) $C_k$: For the RBF kernel the intercept is already captured, yielding $C_k\!\equiv\!0$. Unitary approaches zero within about 10 steps, linear by about 25 steps, and polynomial by about 45 steps.
}
\label{fig:ex2a_lpv_coeff}
\end{figure}

\subsubsection{Disturbance Rejection}
This example extends Example~2a by including a frequency-domain assessment of disturbance rejection.
Figure~\ref{fig:ex2b_output_unitary} (unitary) and Figure~\ref{fig:ex2b_output_rbf} (RBF) present time responses and logarithmic instantaneous output energy at fixed $\omega=\pi/4$. In the unitary case, all runs remain bounded. Transients are largest at $\ell=3$, while increasing the lag to $\ell\ge 4$ reduces oscillations and yields low, stable steady levels consistent with the log plots. In the RBF case, performance is more lag sensitive. The case $\ell=2$ does not suppress the disturbance reliably and may drift as amplitude increases, $\ell=3$ becomes unstable at moderate and large amplitudes, and $\ell\ge 4$ produces consistent suppression with steady low energy.

With an output sinusoid, the one-step prediction error becomes correlated with the regressors. Linear and polynomial dictionaries do not span sinusoidal modes. Thus, a structured residual remains outside $\mathrm{span}\{\phi_k\}$. RLS then updates toward an unattainable fit, which manifests as ``parameter chasing" and growth of $P_k$ even with $\lambda=1$. In this setting, we were not able to stabilize the closed loop using linear or polynomial features. Therefore, this example focuses on unitary and RBF, which achieve reliable suppression once the lag is sufficiently high.

\begin{figure}[!ht]
  \centering
  \begin{subfigure}[t]{.49\textwidth}
    \centering
    \includegraphics[width=\linewidth]{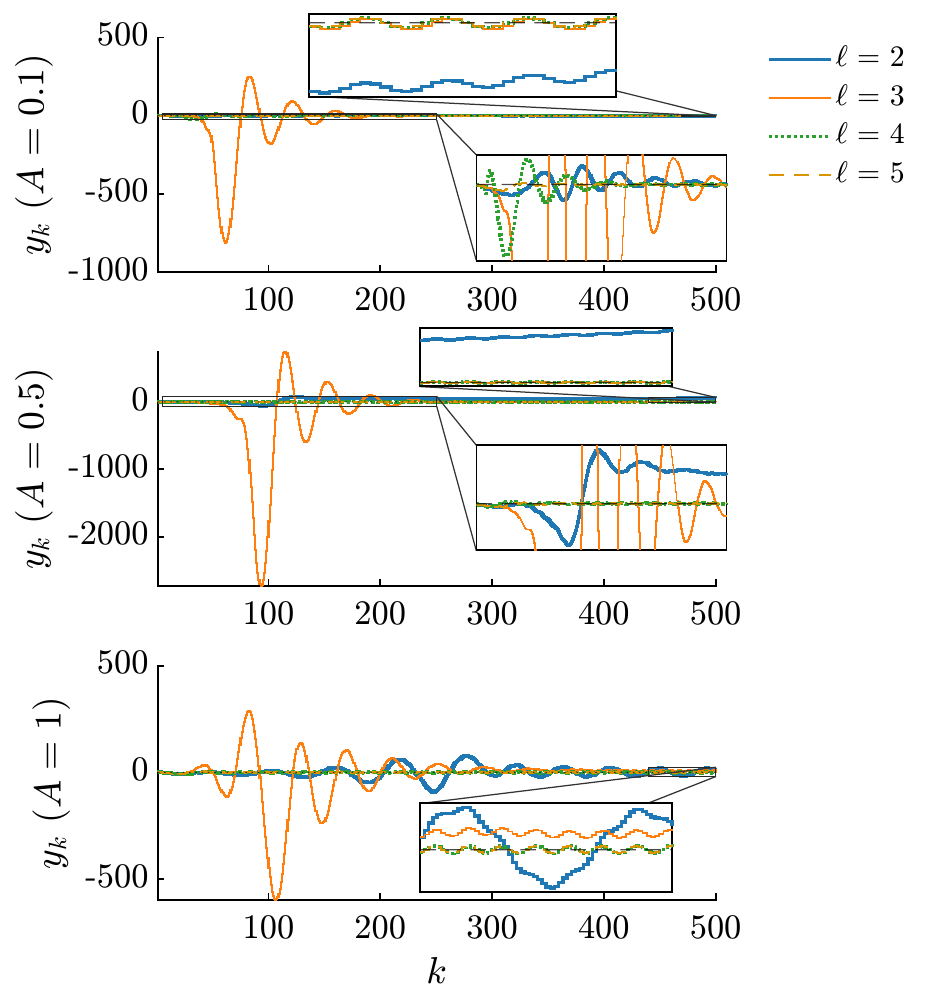}
    \caption{Output trajectories $y_k$.}
  \end{subfigure}
  \hfill
  \begin{subfigure}[t]{.49\textwidth}
    \centering
    \includegraphics[width=\linewidth]{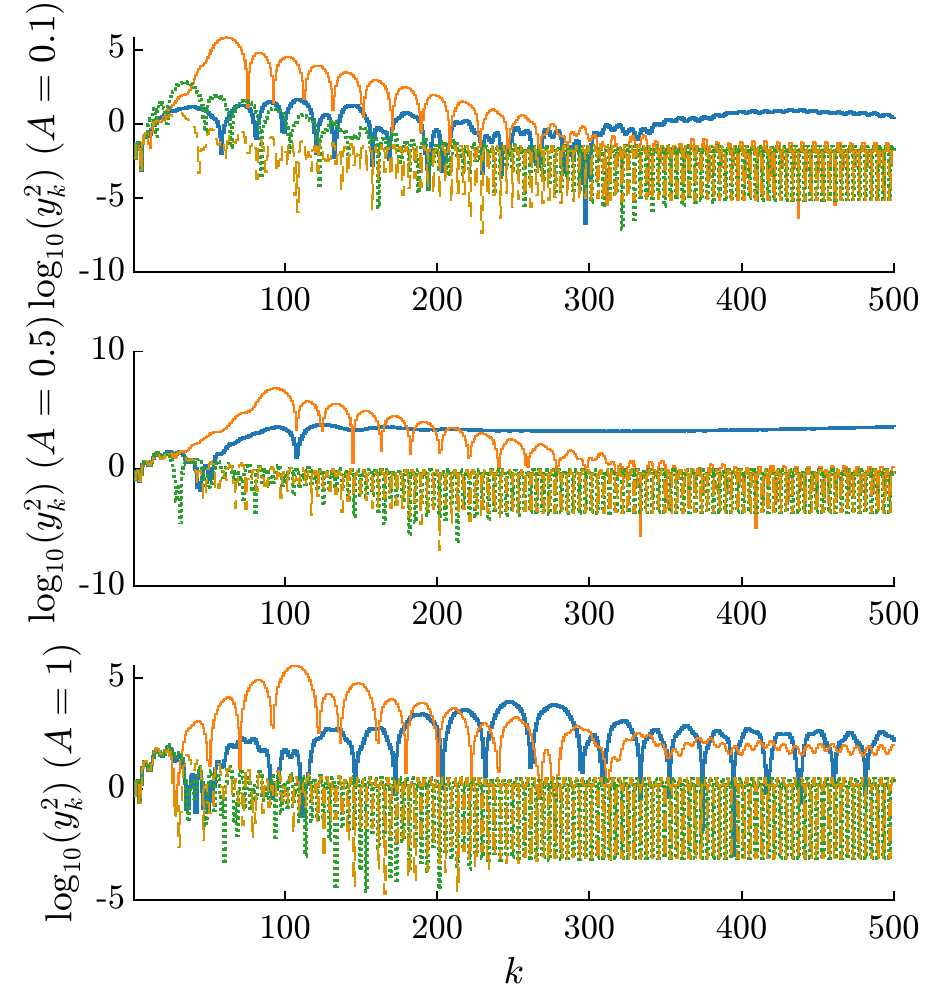}
    \caption{Logarithmic instantaneous output energy $\log_{10}(y_k^2)$.}
  \end{subfigure}
  \caption{\textbf{Example~2b} (Unitary kernel). 
(a) Output responses $y_k$ and (b) logarithmic instantaneous output energy $\log_{10}(y_k^2)$ are shown for a fixed disturbance frequency $\omega=\pi/4$. 
Each row corresponds to a disturbance amplitude $A\in\{0.1,0.5,1\}$, and, within each row, the lag order increases from $\ell=2$ to $\ell=5$. 
Across all amplitudes, the most pronounced transients occur for $\ell=3$, lasting roughly 300~steps, as visible in both the time and logarithmic scales. 
Overall, $\ell=2$ and $\ell=3$ exhibit the largest transient amplitudes, while the smoothest and fastest settling responses are obtained for $\ell=5$. 
At steady state, $\ell=2$ consistently shows the highest residual level, remaining above all other traces from steps~350 to~500. 
For $A=0.1$ and $A=0.5$, $\ell=3$ achieves suppression comparable to $\ell=4$ and $\ell=5$, whereas at $A=1$ a small steady-state offset persists. 
The lowest and most stable steady-state energy levels are observed for $\ell=4$ and $\ell=5$, with no apparent drift over time. 
Linear and polynomial kernels are not shown here; their responses diverge under harmonic disturbance due to parameter chasing (see main text). 
These results indicate that increasing $\ell$ reinforces the internal-model effect in the AR structure, allowing the unitary dictionary under online RLS to capture the sinusoidal mode and achieve effective suppression at higher orders.}
\label{fig:ex2b_output_unitary}
\end{figure}

\begin{figure}[!ht]
  \centering
  \begin{subfigure}[t]{.49\textwidth}
    \centering
    \includegraphics[width=\linewidth]{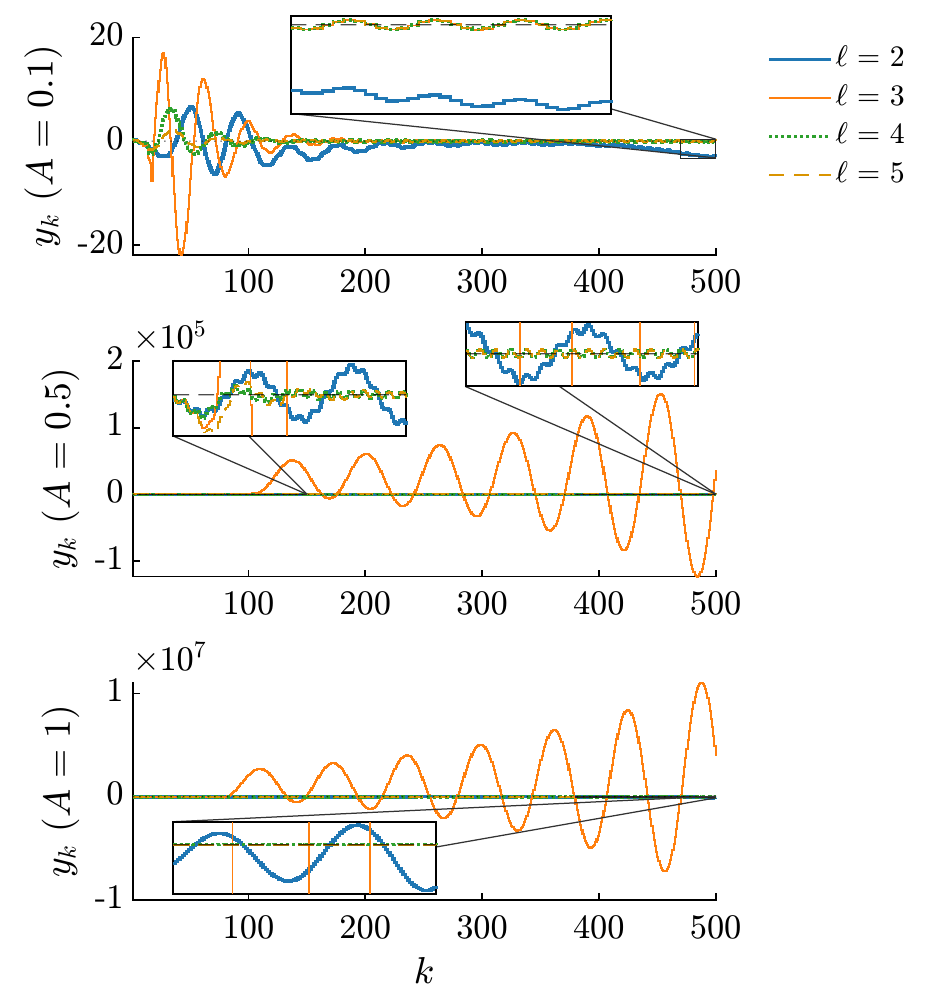}
    \caption{Output trajectories $y_k$.}
  \end{subfigure}
  \hfill
  \begin{subfigure}[t]{.49\textwidth}
    \centering
    \includegraphics[width=\linewidth]{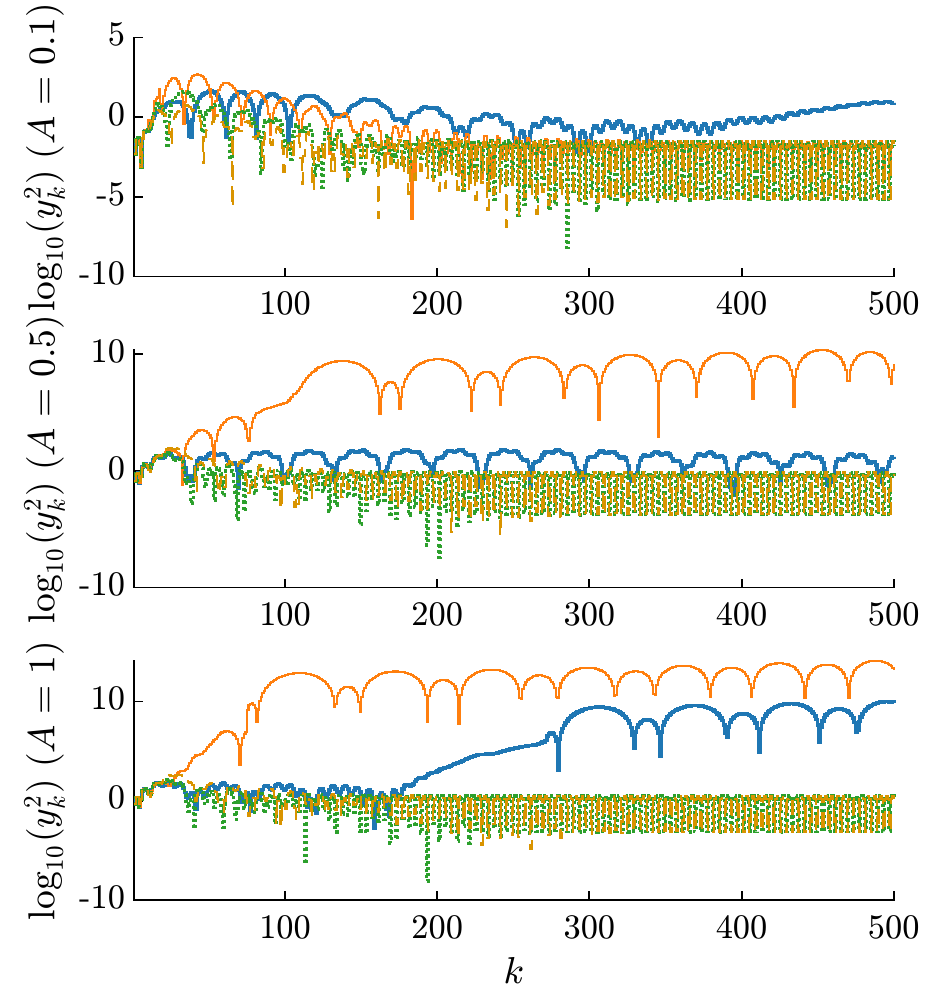}
    \caption{Logarithmic instantaneous output energy $\log_{10}(y_k^2)$.}
  \end{subfigure}
  \caption{\textbf{Example~2b} (Radial-basis-function (RBF) kernel). 
(a) Output responses $y_k$ and (b) logarithmic instantaneous output energy $\log_{10}(y_k^2)$ are shown for a fixed disturbance frequency $\omega=\pi/4$. 
Each row corresponds to a disturbance amplitude $A\in\{0.1,0.5,1\}$, and, within each row, the lag order increases from $\ell=2$ to $\ell=5$. 
Across all amplitudes, $\ell=3$ consistently produces the largest transient, as visible in both the time and logarithmic scales. 
For $A=0.5$ and $A=1$, the corresponding trajectories diverge exponentially, confirmed by the sustained high levels of $\log_{10}(y_k^2)$. 
When $\ell=2$, the disturbance is not effectively suppressed in any case: at $A=0.1$ the output trajectory slowly drifts downward, at $A=0.5$ it remains bounded but with large steady oscillations, and at $A=1$ the amplitude grows unbounded. 
In contrast, for $\ell=4$ and $\ell=5$ the oscillations are markedly attenuated and the steady-state energy remains low and stable across all amplitudes. 
At $A=0.1$, the $\ell=2$ trace approaches a similar steady level to $\ell=4$ and $\ell=5$, though with a slower and less regular transient. 
Taken together, these results indicate that, with the RBF kernel, effective disturbance rejection emerges only for higher lag orders, typically $\ell\ge4$, which is greater than required in the unitary case. 
This suggests that the RBF basis models the sinusoidal disturbance accurately once sufficient dynamic depth is provided by the autoregressive structure.}

\label{fig:ex2b_output_rbf}
\end{figure}

Table~\ref{tab:ex2b_time_metrics} complements the time-domain plots and confirms the observed trends. 
For the unitary kernel, all runs remain bounded, though large transients arise for $\ell=3$, where both RMSE and IAE increase by more than two orders of magnitude compared with other cases. 
At each amplitude, performance improves monotonically with lag order: $\ell=2$ yields the poorest steady-state error, while $\ell=4$ and $\ell=5$ achieve the lowest RMSE and IAE with moderate control effort. 
For instance, at $A=1$, the metrics drop from $\mathrm{RMSE}=27.26$ at $\ell=2$ to about $1.9$ at $\ell=4$ and $\ell = 5$, accompanied by a sharp decrease in $\mathrm{TV}_u$. 
This matches the visual evidence of smooth, stable trajectories for higher $\ell$. 

For the RBF kernel, instability appears for $\ell=3$ at moderate and large amplitudes, where the errors diverge exponentially. 
At $\ell=2$, responses remain bounded but exhibit significant drift or large residual oscillations. 
Only for $\ell\ge4$ does the closed loop stabilize, with small RMSE and IAE and limited control activity. 
The improvement between $\ell=3$ and $\ell=4$ is dramatic; an increase in lag order restores disturbance suppression across all amplitudes. 
These results confirm that both kernels benefit from longer regressors, but the RBF kernel requires higher lag depth to achieve comparable robustness to the unitary case. 
Overall, numerical metrics are fully consistent with the qualitative observations from the output and log-energy figures.

\begin{table}[!ht]
\centering
\caption{\textbf{Example~2b}: Performance metrics for $\omega=\pi/4$. 
Disturbance amplitudes $A\in\{0.1,0.5,1\}$ and lag orders $\ell\in\{2,3,4,5\}$. Bold row: lag $\ell$ for a fixed $A$ yielding the lowest root-mean-square error (RMSE) and integral absolute error (IAE); red row: diverging cases.
For the unitary kernel, all runs remain bounded, though large transients occur for $\ell=3$, where both RMSE and IAE rise by more than two orders of magnitude relative to other settings.
At each amplitude, performance improves monotonically with lag order: $\ell=2$ yields the largest steady-state error, whereas $\ell=4$ and $\ell=5$ reach the lowest RMSE and IAE with moderate control effort.
For example, at $A=1$, the metrics drop from $\mathrm{RMSE}=27.26$ at $\ell=2$ to roughly $1.9$ at $\ell=4$ and $\ell=5$, accompanied by a marked reduction in total variation of the input (TV$_u$).
This aligns with the visual evidence of smoother, more stable trajectories for higher $\ell$.
For the radial-basis-function (RBF) kernel, instability emerges for $\ell=3$ at moderate and large amplitudes, where the errors grow exponentially.
At $\ell=2$, the responses remain bounded but exhibit significant drift or persistent residual oscillations.
Only for $\ell\ge4$ does the closed loop stabilize, yielding small RMSE and IAE with limited control activity.
The improvement from $\ell=3$ to $\ell=4$ is substantial; increasing the lag order restores disturbance rejection across all amplitudes.
These findings indicate that both kernels benefit from longer regressors, but the RBF kernel requires greater lag depth to attain robustness comparable to the unitary case.
The numerical metrics are consistent with the qualitative behavior seen in the output/log-energy plots.
}
\label{tab:ex2b_time_metrics}
\begin{tabular}{llcccc}
\toprule
Kernel & ($A,\ell$) & RMSE & IAE & TV$_u$ & Peak$_u$ \\
\midrule
Unitary & (0.1,2) & 2.15 & 841.6 & 9.46 & 0.69 \\
        & (0.1,3) & 138.95 & 22081.2 & 388.20 & 71.36 \\
        & (0.1,4) & 4.21 & 712.3 & 27.51 & 5.25 \\
        & \textbf{(0.1,5)} & \textbf{0.41} & \textbf{103.7} & 1.06 & 0.16 \\
        \cline{2-6}
        & (0.5,2) & 44.32 & 20290.9 & 25.88 & 4.88 \\
        & (0.5,3) & 465.31 & 75191.2 & 1224.65 & 184.55 \\
        & \textbf{(0.5,4)} & \textbf{0.98} & \textbf{342.8} & 3.23 & 0.37 \\
        & (0.5,5) & 1.11 & 365.3 & 2.57 & 0.38 \\
        \cline{2-6}
        & (1,2) & 27.26 & 9557.5 & 65.17 & 3.83 \\
        & (1,3) & 119.68 & 28128.5 & 325.18 & 34.58 \\
        & (1,4) & 1.91 & 669.2 & 3.56 & 0.41 \\
        & \textbf{(1,5)} & \textbf{1.79} & \textbf{657.8} & 5.43 & 0.65 \\
\midrule
RBF     & (0.1,2) & 2.19 & 816.0 & 7.13 & 0.60 \\
        & (0.1,3) & 4.16 & 803.8 & 58.81 & 10.28 \\
        & (0.1,4) & 1.07 & 215.6 & 6.89 & 1.06 \\
        & \textbf{(0.1,5)} & \textbf{0.43} & \textbf{114.8} & 1.94 & 0.38 \\
        \cline{2-6}
        & (0.5,2) & 4.42 & 1933.2 & 4.50 & 0.11 \\
        & \textcolor{red}{(0.5,3)} & \textcolor{red}{5.34e4} & \textcolor{red}{1.88e7} & 3039.6 & 1632.0 \\
        & \textbf{(0.5,4)} & \textbf{1.07} & \textbf{352.2} & 2.38 & 0.28 \\
        & (0.5,5) & 1.75 & 466.6 & 4.78 & 0.66 \\
        \cline{2-6}
        & \textcolor{red}{(1,2)} & \textcolor{red}{2.92e4} & \textcolor{red}{7.92e6} & 1720.8 & 1512.4 \\
        & \textcolor{red}{(1,3)} & \textcolor{red}{3.61e6} & \textcolor{red}{1.27e9} & 8.09e4 & 8.00e4 \\
        & \textbf{(1,4)} & \textbf{2.17} & \textbf{714.8} & 5.23 & 0.61 \\
        & (1,5) & 3.62 & 969.4 & 10.97 & 1.44 \\
\bottomrule
\end{tabular}
\end{table}

Next, the disturbance amplitude is fixed to $A=0.01$ and the lag order to $\ell=5$. 
To improve numerical stability, the RBF kernel width is increased to $\sigma=10^{4}$. 
The disturbance frequency $\omega$ is swept from $0$ to $\pi$ in increments of $0.05$. 
Each simulation runs for $5000$ steps, and performance is evaluated over the last $20 \; T_\omega$ samples, where
$T_\omega=\lceil 2\pi/\omega\rceil$. 
Thus, the evaluation window is defined as $\mathcal{W}_\omega = [5000-\max(400,20\;T_\omega)+1,\,5000]$. 
For the static case $\omega=0$, a fixed window of $1000$ samples at the end of the run is used instead. 
For each $\omega$, the RMSE is computed on this window as
$
\mathrm{RMSE}=\sqrt{\frac{1}{T_\omega}\sum_{k\in\mathcal{W}_\omega} y_k^2},
$
with $r_k=0$, so that the RMSE directly represents the output energy around the origin.

Figure~\ref{fig:ex2b_frequency} shows the frequency response of the unitary and RBF kernels. With online RLS and no forgetting ($\lambda=1.0$), the RBF kernel remarkably remained stable and matched or slightly exceeded the unitary kernel in disturbance attenuation under the tested settings (noiseless case, $A=0.01$, $\ell=5$, RBF width $\sigma=10^{4}$). This is an observation, not a general guarantee; it holds for our horizons, windows, and seeds.

To assess robustness, Example~2b is repeated under additive measurement noise with standard deviation $10^{-2}$. 
All other parameters remain identical to the noiseless case. 
This additional test, reported in the Appendix, confirms that both kernels maintain their qualitative frequency-domain behavior and preserve effective disturbance attenuation despite measurement noise. 

\begin{figure}[!ht]
    \centering
    \includegraphics[trim=0 50 0 0, width=0.55\linewidth]{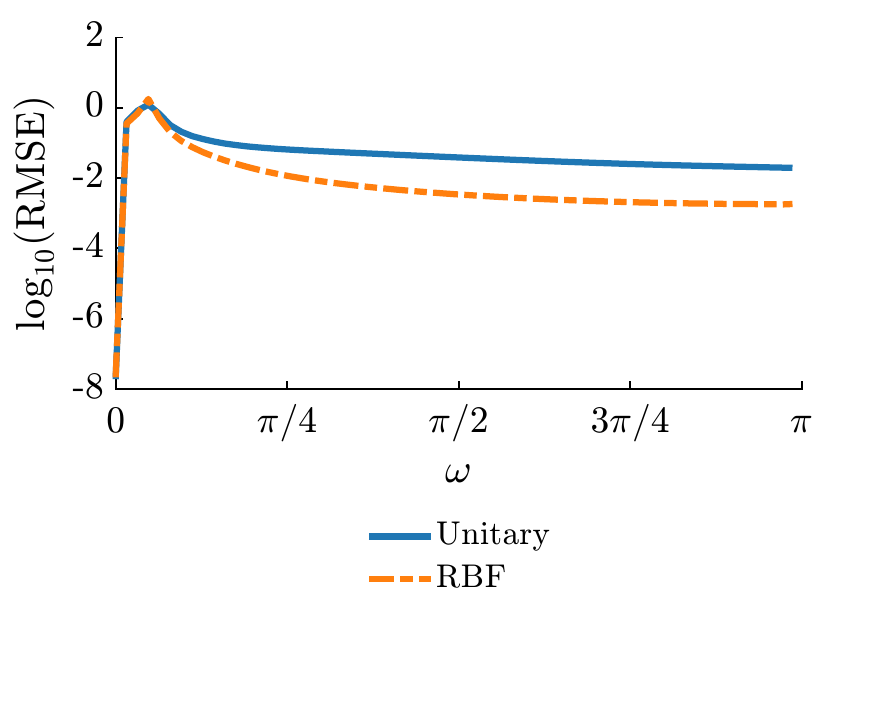}
    \caption{\textbf{Example~2b}. Frequency-domain disturbance analysis. 
The disturbance amplitude is fixed at $A=0.01$ and the lag order at $\ell=5$. 
For each $\omega\in[0,\pi]$, performance is evaluated over the last $T_\omega=\lceil2\pi/\omega\rceil$ samples of a 5000-step run, while for $\omega=0$ a fixed 1000-sample window is used. 
The plotted quantity is the logarithmic root-mean-square error ($\mathrm{RMSE}$) evaluated over the period-aligned windows.
Both kernels remain bounded over the entire frequency range and effectively reduce the disturbance energy. 
Their responses approximately coincide at low frequency, but diverge as $\omega$ increases, where the radial-basis-function (RBF) kernel achieves consistently lower output energy than the unitary case. See Appendix for robustness test.
}

    \label{fig:ex2b_frequency}
\end{figure}

\clearpage
\subsection{Example~3. MIMO Linear, Unstable}\label{sec:ex3}

Consider the discrete-time MIMO linear system with three outputs and two inputs
\begin{equation}
  y_k \;=\; A_1\,y_{k-1} + A_2\,y_{k-2} + B_0\,u_k + B_1\,u_{k-1} + B_2\,u_{k-2},
\end{equation}
where $y_k\in\mathbb{R}^3$, $u_k\in\mathbb{R}^2$, and the system matrices are
\[
A_1=\begin{bmatrix}
1.7 & 0 & 0\\
0 & 1.473 & 0\\
0 & 0 & 1.8
\end{bmatrix},\quad
A_2=\begin{bmatrix}
-0.6 & 0 & 0\\
0 & -0.7225 & 0\\
0 & 0 & -0.81
\end{bmatrix},
\]
\[
B_0=\begin{bmatrix}
0.4 & -0.1\\
0.2 &  \;0.3\\
-0.1 & \;0.5
\end{bmatrix},\quad
B_1=\begin{bmatrix}
\;0.1 & 0\\
-0.05 & 0.2\\
\;0 & -0.1
\end{bmatrix},\quad
B_2=\begin{bmatrix}
0 & 0.05\\
0 & 0\\
0.08 & 0
\end{bmatrix}.
\]

Its \(z\)-domain transfer matrix is
\begin{equation}
  G(z) \;=\; \big(z^{2}I - A_1 z - A_2\big)^{-1}\,\big(B_0 z^{2}+B_1 z + B_2\big),
\end{equation}
and, elementwise,
\begin{equation}
  [G(z)]_{ij} \;=\; 
  \frac{b^{(0)}_{ij} z^2 + b^{(1)}_{ij} z + b^{(2)}_{ij}}
       {z^{2} - a^{(1)}_{i} z - a^{(2)}_{i}},
\end{equation}
where \(a^{(1)}_i\) and \(a^{(2)}_i\) are the diagonal entries of \(A_1\) and \(A_2\), respectively, and \(b^{(k)}_{ij}\) are entries of \(B_k\), where \(k=0,1,2\).
A nonzero \(B_0\) introduces direct feedthrough, making the system's relative degree to be zero in all input--output channels.

The denominator polynomials per output channel are
\[
\begin{aligned}
D_1(z)&=z^2-1.7z+0.6, & p^{(1)}&=\{1.2,\;0.5\}\quad(\text{unstable since }1.2>1),\\
D_2(z)&=z^2-1.473z+0.7225, & p^{(2)}&=\{0.737\pm0.424\,\mathrm{j}\}\quad(|p|=0.85<1),\\
D_3(z)&=z^2-1.8z+0.81, & p^{(3)}&=\{0.9,\;0.9\}\quad(\text{double pole}).
\end{aligned}
\]

Each transfer element \(G_{ij}(z)\) has numerator \(N_{ij}(z)=b^{(0)}_{ij}z^2+b^{(1)}_{ij}z+b^{(2)}_{ij}\), yielding
\[
\begin{aligned}
&G_{11}:\ z=\{0,\,-0.25\},\qquad &&G_{12}:\ z=\{\pm0.7071\},\\
&G_{21}:\ z=\{0,\;0.25\},\qquad  &&G_{22}:\ z=\{0,\,-0.6667\},\\
&G_{31}:\ z=\{\pm0.8944\},\qquad&&G_{32}:\ z=\{0,\;0.2\}.
\end{aligned}
\]
All zeros satisfy \(|z|<1\) or \(z=0\). Hence, no NMP zeros are present.
The initial conditions are
\(y_{-1}=y_{-2}=\mathbf{0}_3,\quad u_{-2}=u_{-1}=u_{0}=\mathbf{0}_2.\)

This plant is linear, MIMO, and partially unstable (one mode outside the unit circle), with nonzero direct feedthrough.  
It is used to demonstrate ABPC performance in a multivariable command-tracking setting, extending the previous SISO unstable example beyond stabilization. 

All parameters are identical to Example~1, except that the system dimension is set to $m=2$, $p=3$ (underactuated since $m<p$), the lag remains $\ell=2$, the prediction horizon is $N=10$, and the input penalty is $R_u = 10^{2}I$. Warm-up, excitation, and $Q_y$ remain unchanged.

The reference sequence consists of three step commands:
\[
r_k =
\begin{cases}
\begin{bmatrix} 1.0 & -0.5 & 2.0\end{bmatrix}^\top, & k < 100,\\
\begin{bmatrix}-0.5 & 1.3 & -0.5\end{bmatrix}^\top, & 100 \le k < 200,\\
\begin{bmatrix} 0 & 0 & 0 \end{bmatrix}^\top, & k \ge 200.
\end{cases}
\]
Because the system is underactuated, these commands generally do not correspond to steady-state equilibria. At steady state, $y_{\mathrm{ss}} = (I - A_1 - A_2)^{-1}(B_0 + B_1 + B_2)\,u_{\mathrm{ss}}$, whose DC gain matrix $G(1)\in\mathbb{R}^{3\times2}$ has rank~2. Hence, only references $r_{\mathrm{ss}}\in\mathrm{range}\,G(1)$ can be reached exactly. The chosen step vectors lie outside this subspace; the responses instead follow a command-tracking-like profile reflecting projection onto the attainable output subspace.

The reported kernels are unitary, linear, and RBF. A degree-2 polynomial dictionary was also tested under the same settings but proved numerically fragile: the larger parameterization increased the condition number of both the RLS covariance and the predictive Hessian, leading to occasional loss of positive definiteness near the setpoint despite $R_u\succ0$ and symmetry safeguards. This instability reflects variance and ill-conditioning under limited excitation, different from the model-bias mechanism seen in Example~2b with sinusoidal disturbance. Therefore, the polynomial results are omitted, and the discussion focuses on the unitary, linear, and RBF cases.

Figure~\ref{fig:ex3_output} shows that, despite one unstable mode and the system being underactuated, all kernels yield bounded closed-loop responses that follow the step commands with clear command-tracking profiles. 
Transient speeds differ, with the linear kernel responding slowest, the RBF kernel fastest, and the unitary kernel showing the largest overshoot. 
The identification results in Figure~\ref{fig:ex3_lpv_frobenius} confirm consistent parameter convergence across all kernels, with the unitary and RBF achieving faster error reduction than the linear kernel. 
The end-of-run heatmaps in Figure~\ref{fig:ex3_lpv_heatmap} illustrate that all estimated coefficient matrices align closely with their true counterparts, demonstrating accurate RLS-based identification throughout the experiment.

\clearpage

\begin{figure}[!ht]
  \centering
  \begin{subfigure}[t]{.49\textwidth}
    \centering
    \includegraphics[width=\linewidth]{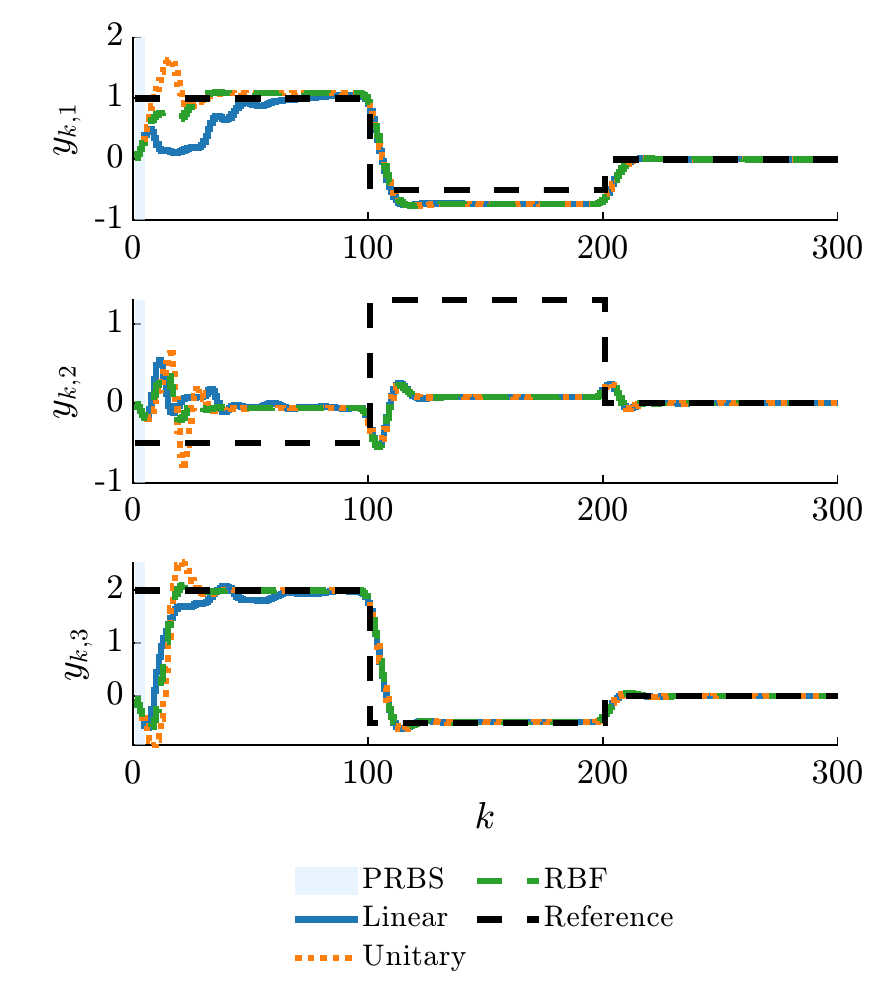}
    \caption{Output trajectories $y_k$.}
  \end{subfigure}
  \hfill
  \begin{subfigure}[t]{.49\textwidth}
    \centering
    \includegraphics[width=\linewidth]{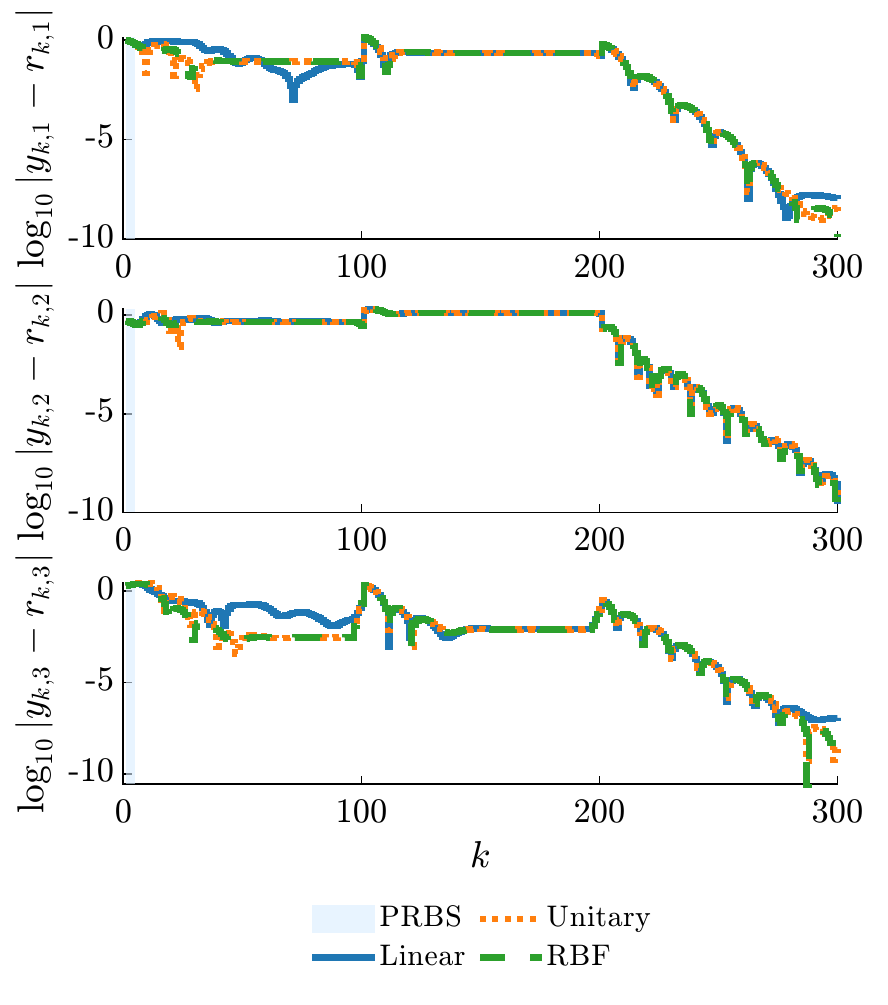}
    \caption{Logarithmic instantaneous error $\log_{10}|y_k-r_k|$.}
  \end{subfigure}
  \caption{\textbf{Example~3}.
  Closed-loop responses for the underactuated multi-input multi-output (MIMO) system ($m=2$, $p=3$) using unitary, linear, and radial-basis-function (RBF) kernels. 
Although the open-loop plant contains one unstable mode, all three kernels produce bounded and well-behaved closed-loop trajectories. 
Each output exhibits a command-tracking-like evolution rather than exact convergence, consistent with the limited steady-state reachability implied by $\mathrm{rank}\,G(1)=2<3$. 
Transient behavior differs across kernels: the linear kernel yields the slowest response, the RBF kernel responds fastest with minimal overshoot, and the unitary kernel shows the largest overshoot and undershoot before settling. 
Across all kernels,  $y_{k,3}$ follows the reference accurately for all commands. 
During the first step, $y_{k,1}$ approaches its target with a small steady-state error, while $y_{k,2}$ settles around $-0.06$ instead of the commanded $-0.5$, leaving a clear residual offset. 
In the second step, $y_{k,3}$ again tracks closely, whereas the first and second outputs remain bounded but retain finite tracking errors. 
For the final command ($r_k=\mathbf{0}$), all outputs decay smoothly with minimal residual error, confirming closed-loop boundedness under all kernels.
  }

\label{fig:ex3_output}
\end{figure}

\begin{figure}
    \centering
    \includegraphics[width=0.6\linewidth]{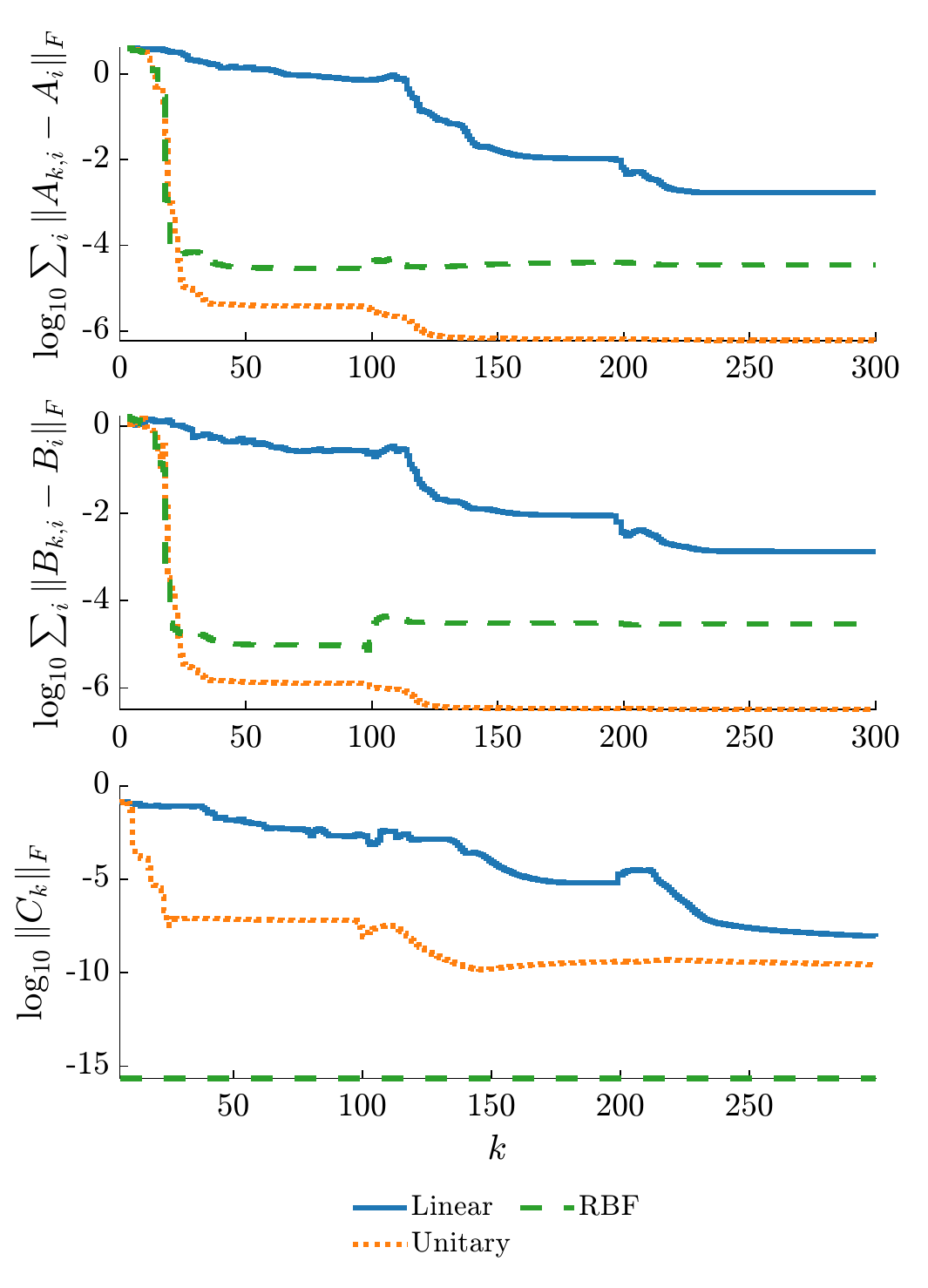}
    \caption{
    \textbf{Example~3}: Temporal evolution of the logarithmic identification error between the estimated and true coefficients, computed as the Frobenius norm per block and aggregated over lags. 
Across all kernels, the errors tend to decrease monotonically with time, with the most pronounced reduction observed for the unitary kernel. 
For the intercept block $C_k$, the radial-basis-function (RBF) kernel yields an exact match since it omits the constant feature, while the unitary kernel converges faster than the linear kernel. 
For the dynamic blocks $A_k$ and $B_k$, the unitary kernel exhibits the fastest decay, reaching an error level around $10^{-6}$, followed by the RBF (around $10^{-4.5}$) and the linear kernel (around $10^{-2.7}$).
    }
    \label{fig:ex3_lpv_frobenius}
\end{figure}

\begin{figure}
    \centering
    \includegraphics[width=\linewidth]{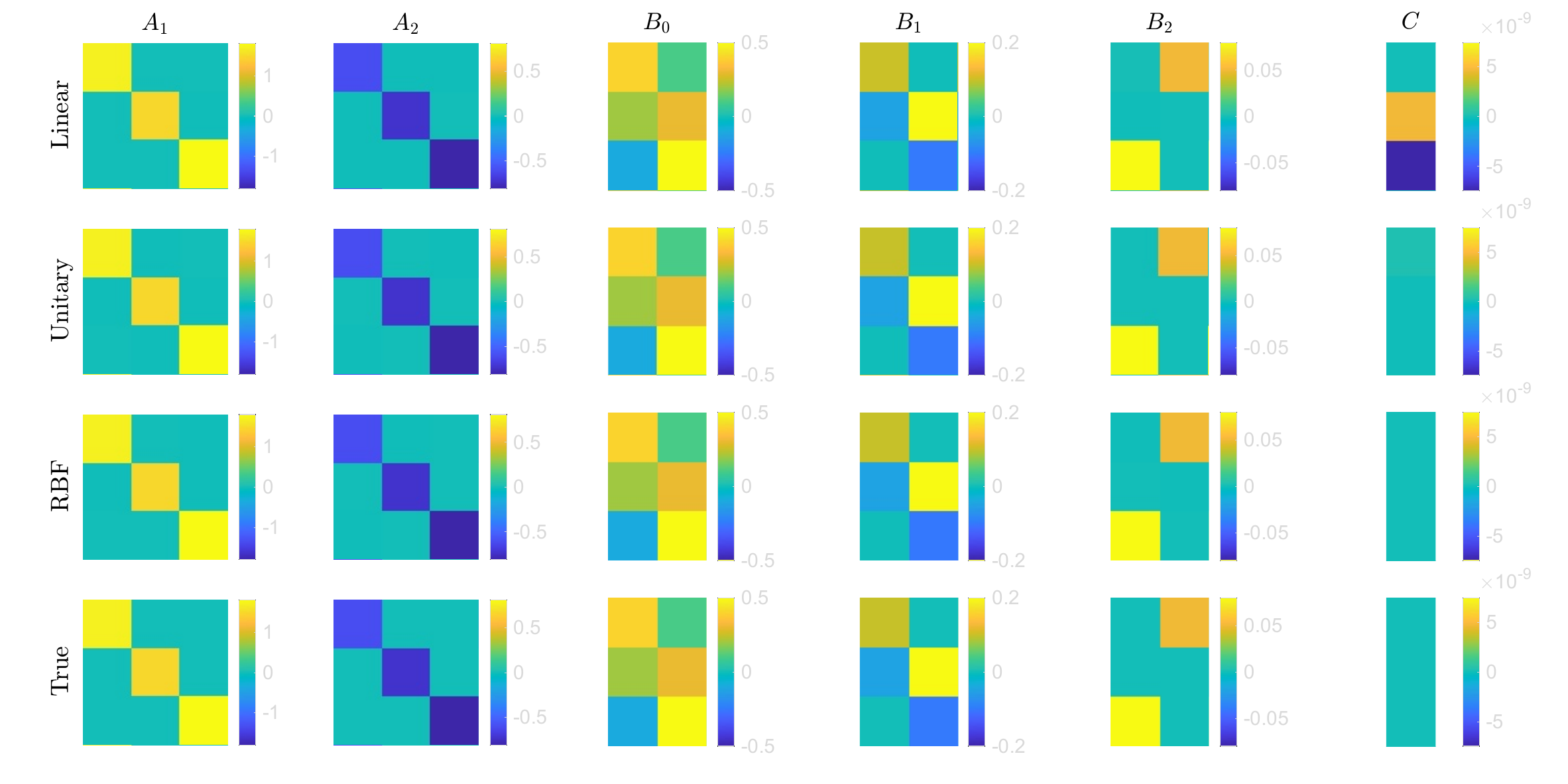}
    \caption{\textbf{Example~3}.
    End-of-run parameter heatmaps for $A_1$, $A_2$, $B_0$, $B_1$, $B_2$, and the intercept $C \equiv 0$. 
Rows correspond, in order, to the linear, unitary, radial-basis-function (RBF), and true matrices. 
All kernels yield coefficient matrices visually close to the true ones, confirming consistent convergence of the recursive-least-squares (RLS) identification. 
For the intercept block $C$, the linear kernel produces a nearly zero matrix with residual elements on the order of $10^{-9}$, while the unitary and RBF kernels remain close to or exactly zero. 
Overall, all identified coefficients align well with the true values, indicating accurate parameter recovery.
    }
    \label{fig:ex3_lpv_heatmap}
\end{figure}

\clearpage

Table~\ref{tab:ex3_effort} reports the total variation and peak magnitude of the control inputs after PRBS. 
The results confirm that the unitary kernel produces the largest overall actuation and highest peak input, consistent with its stronger overshoot observed in the output trajectories. 
The linear kernel shows slightly lower total variation, while the RBF kernel achieves the smoothest control action with both the smallest TV$_u$ and peak amplitude. 
These values align with the transient characteristics in Figure~\ref{fig:ex3_output}, where the RBF response is fastest yet least oscillatory.

\begin{table}[!ht]
\centering
\caption{\textbf{Example~3}: Control-effort metrics. Total variation of the input (TV$_u$) and peak input magnitude (Peak$_u$). 
Higher TV$_u$ indicates stronger actuation. 
The radial-basis-function (RBF) kernel yields the smoothest input, followed by the linear and unitary cases.
The results show that the unitary kernel yields the largest overall actuation and highest peak input, matching its stronger overshoot seen in the output trajectories.
The linear kernel exhibits slightly reduced total variation, while the RBF kernel provides the smoothest control action, giving the smallest TV$_u$ and the lowest peak amplitude.
These numerical values are consistent with the transient behavior in Figure~\ref{fig:ex3_output}, where the RBF response is fastest and least oscillatory.
}
\label{tab:ex3_effort}
\begin{tabular}{lcc}
\toprule
Kernel & TV$_u$ & Peak$_u$ \\
\midrule
RBF     & \textbf{2.359} & \textbf{0.277} \\
Linear  & 3.810 & 0.320 \\
Unitary & 3.985 & 0.533 \\
\bottomrule
\end{tabular}
\end{table}

\clearpage
\subsection{Example~4. SISO Quadratic NARX without Cross Terms}\label{sec:ex4}

Consider the discrete-time SISO system with lag~$\ell=2$
\begin{equation}
\label{eq:ex4_narx}
\begin{aligned}
y_k
&= 1.5\,y_{k-1} - 0.7\,y_{k-2} + 0.5\,u_k + 0.3\,u_{k-1} \\
&\quad {} - 0.10\,y_{k-1}^2 - 0.05\,y_{k-2}^2 + 0.20\,u_k^2 + 0.10\,u_{k-1}^2.
\end{aligned}
\end{equation}
Initial conditions: $y_{-1}=y_{-2}=u_{0}=u_{-1}=0$.
The nonlinear terms are noncross: only squares of individual regressors appear, without cross-products such as $y_{k-i}y_{k-j}$, $u_{k-i}u_{k-j}$, or $y_{k-i}u_{k-j}$.
The coefficients are chosen so that the linear backbone matches the baseline system of Example~1 and the even-order nonlinearities remain mild around the origin, keeping $y_k=0$ an equilibrium.

This example introduces a controlled quadratic departure from linearity on both the autoregressive and input channels while preserving the same memory and direct-feedthrough structure.
We apply ABPC and observe its behavior under these conditions.

The simulation setup follows the same configuration as in Example~1, using the same reference signal defined in~\eqref{eq:ref_ex1}. 
All parameters, including horizon, forgetting factor, and regularization, are kept identical to ensure comparability across examples. 
The only modification is the input penalty, set to \(\rho_u = 1\), which provides improved numerical conditioning for the nonlinear case.

Figures~\ref{fig:ex4_output}--\ref{fig:ex4_lpv_coeff} summarize the closed-loop and identification results for the quadratic SISO NARX example. 
The output responses in Figure~\ref{fig:ex4_output} show that the linear and polynomial kernels achieve faster transients and smaller steady-state deviations than the unitary and RBF cases. 
The corresponding one-step prediction errors in Figure~\ref{fig:ex4_pred_error} confirm the same trend, with the linear and polynomial kernels sustaining the lowest residual levels across all commands. 
Despite their lower accuracy, the unitary and RBF models remain numerically stable and do not diverge. 

The estimated LPV--ARX blocks in Figure~\ref{fig:ex4_lpv_coeff} further indicate bounded and coherent parameter evolution for all kernels. 
In particular, the linear and polynomial coefficients fluctuate around the nominal linear-part values and recover them at equilibrium, whereas the unitary and RBF trajectories vary more slowly and remain farther from those landmarks.

To assess parameter accuracy, we compare the identified noncross coefficients with the true quadratic NARX parameters for the linear and polynomial kernels.
At each step \(k\), the vector
\[
\hat{\beta}_k = \theta_k[\mathcal{I}_{\mathrm{true}}] \in \mathbb{R}^{10},
\qquad
\mathcal{I}_{\mathrm{true}} = \{1,2,3,4,5,6,7,13,19,25\},
\]
collects the entries of \(\theta_k\) that correspond, through the Kronecker construction of the regressor \(\phi_k\), to the monomials
\[
\begin{bmatrix}
1 & y_{k-1} & y_{k-2} & u_k & u_{k-1} & u_{k-2} &
y_{k-1}^2 & y_{k-2}^2 & u_k^2 & u_{k-1}^2
\end{bmatrix}^\top.
\]
We define
\[
\beta^\star =
\begin{bmatrix}
0 & 1.5 & -0.7 & 0.5 & 0.3 & 0 & -0.10 & -0.05 & 0.20 & 0.10
\end{bmatrix}^\top,
\]
which contains the true system coefficients, including the intercept and the zero term on \(u_{k-2}\).
All remaining parameters are grouped into a residual vector
\[
\hat{\beta}_{\mathrm{res},k} = \theta_k[\mathcal{I}_{\mathrm{res}}],
\qquad
\mathcal{I}_{\mathrm{res}} = \{1,\dots,\dim(\theta_k)\}\setminus\mathcal{I}_{\mathrm{true}},
\]
and the corresponding residual norm \(\|\hat{\beta}_{\mathrm{res},k}\|_2\) is shown in logarithmic scale.
Figure~\ref{fig:ex4_theta} displays the evolution of \(\hat{\beta}_k\) together with the residual norm, confirming that the identified coefficients converge to the true values while the residuals vanish over time.

The quantitative results of Table~\ref{tab:ex4_phaseII} confirm the qualitative observations from Figures~\ref{fig:ex4_output}--\ref{fig:ex4_theta}.
Linear and polynomial kernels achieve the lowest tracking errors and comparable input activity, demonstrating near-identical performance. 
Unitary and RBF yield higher RMSE and IAE values, with increased input variation and larger control peaks. 
Despite these differences, all methods remain stable and bounded, consistent with the parameter behavior observed in Figure~\ref{fig:ex4_lpv_coeff} and \ref{fig:ex4_theta}.

\begin{table}[!ht]
\centering
\caption{\textbf{Example~4}: Performance metrics. 
Root-mean-square error (RMSE) and integral absolute error (IAE) quantify output tracking accuracy; total variation of the input (TV\(_u\)) and peak input magnitude (Peak\(_u\)) characterize input activity.
Linear and polynomial kernels produce the lowest tracking errors and similar input activity, indicating nearly identical behavior. 
Unitary and radial-basis-function (RBF) kernels exhibit higher RMSE and IAE, along with increased input variation and larger control peaks. 
All methods remain stable and bounded, consistent with the parameter evolution shown in Figures~\ref{fig:ex4_lpv_coeff} and~\ref{fig:ex4_theta}.
}
\label{tab:ex4_phaseII}
\begin{tabular}{lcccc}
\toprule
Kernel & RMSE & IAE & TV\(_u\) & Peak\(_u\) \\
\midrule
Linear       & 0.0817 & 5.75 & 5.26 & 0.71 \\
Polynomial-2 & 0.0819 & 5.72 & 5.23 & 0.71 \\
Unitary      & 0.121  & 30.29 & 7.05 & 0.83 \\
RBF          & 0.134  & 28.94 & 9.09 & 0.90 \\
\bottomrule
\end{tabular}
\end{table}

This example showed that, in a weakly nonlinear regime, the richer polynomial and linear feature sets provide more accurate prediction and control while preserving numerical stability across all kernels.

\clearpage

\begin{figure}[!ht]
  \centering
  \begin{subfigure}[t]{\textwidth}
    \centering
    \includegraphics[width=.75\linewidth]{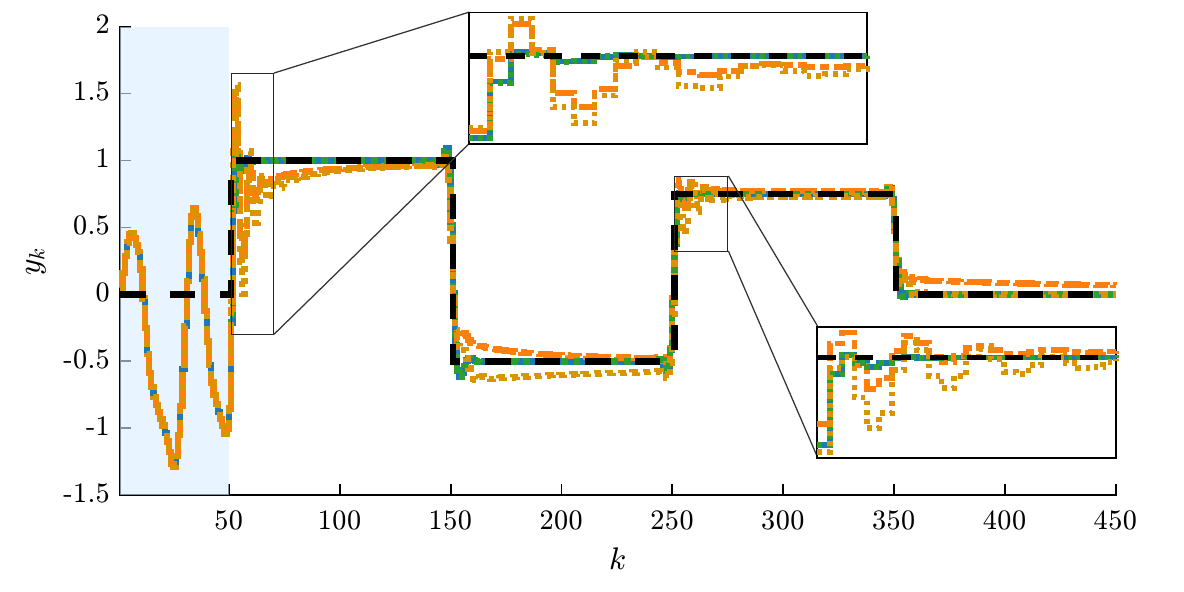}
    \caption{Output trajectories $y_k$.}
  \end{subfigure}
  \hfill
  \begin{subfigure}[t]{\textwidth}
    \centering
    \includegraphics[trim = 0 40 0 0, width=.8\linewidth]{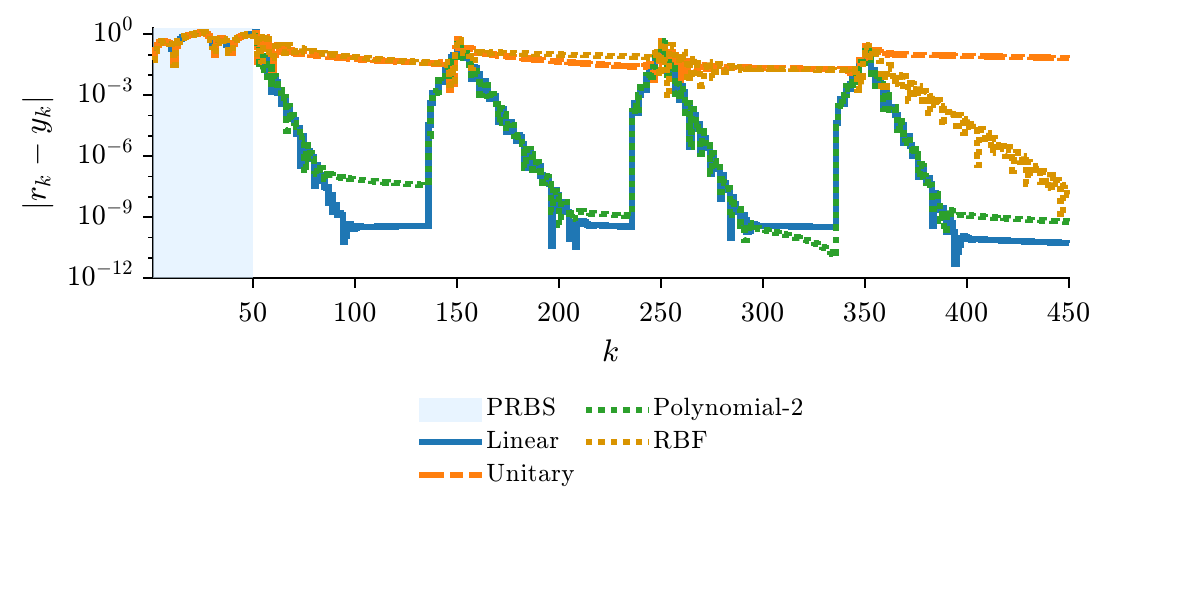}
    \caption{Tracking error $|y_k-r_k|$ in logarithmic scale.}
  \end{subfigure}
  \caption{\textbf{Example~4.} Output tracking for single-input single-output (SISO) quadratic nonlinear autoregressive with exogenous input (NARX) (noncross). 
(a) Output \(y_k\) and command \(r_k\) (black dashed line); (b) tracking error \(e_k = |y_k - r_k|\) in logarithmic scale. 
After the pseudo-random binary sequence (PRBS) warm-up, four step commands are applied. 
During the first step, the linear and polynomial kernels reach the command most rapidly, whereas the unitary and radial-basis-function (RBF) responses exhibit small oscillations and converge to values below the target. 
This difference is confirmed by the error trajectories: the polynomial and linear cases plateau at \(3.8\times10^{-8}\) and \(3.6\times10^{-10}\), respectively, while the unitary and RBF saturate near \(4\times10^{-2}\), indicating a persistent steady-state bias. 
A similar pattern appears in the second and third steps: in the second, the unitary kernel shows the slowest transient and the RBF maintains a larger residual error, while in the third, the polynomial kernel attains the lowest final error magnitude. 
In the stabilization phase (\(r_k\!\equiv\!0\)), all kernels reduce the error, yet the RBF exhibits an approximately linear decay in the logarithmic scale and converges more slowly, whereas the unitary remains the least accurate (\(|e_k|\!\approx\!0.07\)).
}\label{fig:ex4_output}
\end{figure}

\begin{figure}
    \centering
    \includegraphics[width=.9\linewidth]{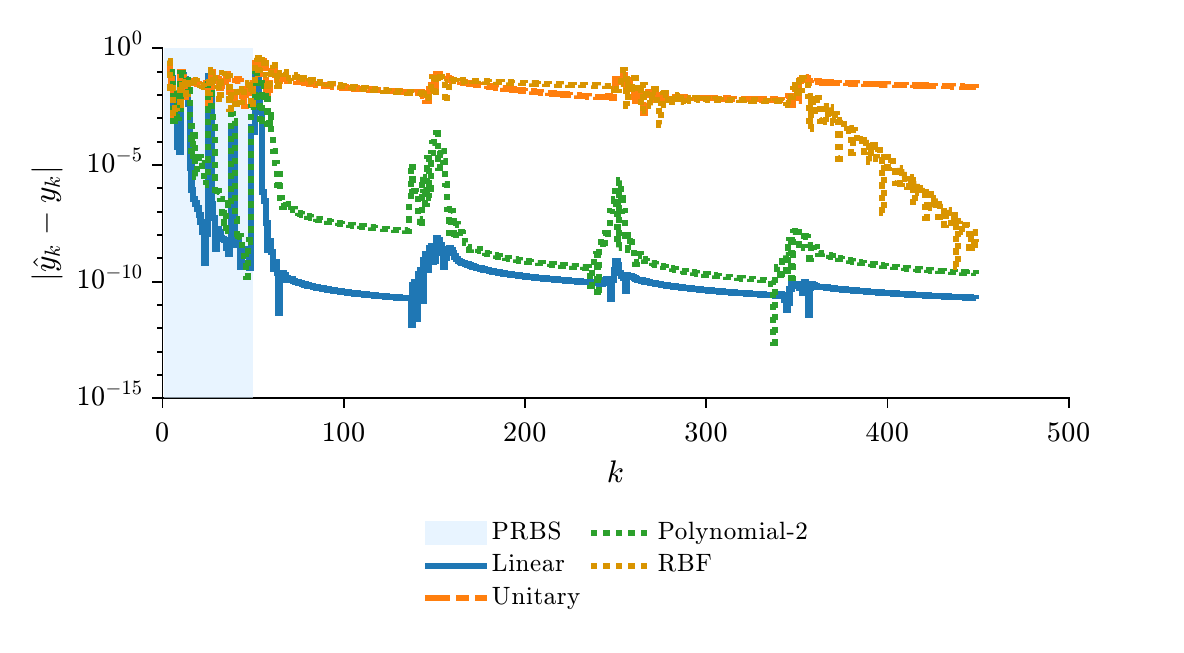}
    \caption{
    \textbf{Example~4}.
    One-step prediction error. 
The prediction error \( |\,\hat{y}_k - y_k\,| \) is computed in a predictive framework with \(\hat{y}_k = \phi_k \theta_k\), using the pre-update (\emph{a-priori}) parameter estimate $\theta_k$ rather than the \emph{a-posteriori} $\theta_{k+1}$. During the pseudo-random binary sequence (PRBS) phase, the polynomial and linear kernels exhibit two pronounced transient spikes but achieve the largest overall decrease in prediction error. 
For each step command, the polynomial kernel shows sharper spikes yet rapidly converges, while the linear kernel attains the lowest steady values (approximately \(10^{-11}\) with terminal plateaus near \(2\times10^{-11}\)). 
The polynomial follows with decreasing plateaus for successive steps, reaching about \(2.3\times10^{-10}\). 
In contrast, the radial-basis-function (RBF) and unitary kernels remain less accurate, with plateau ranges between \(4\times10^{-3}\) and \(2\times10^{-2}\) across the steps. 
During the stabilization phase (\(r_k\!\equiv\!0\)), the unitary kernel yields the poorest accuracy (\(|e_k|\!\approx\!0.022\)), while the RBF error decreases nearly linearly in the logarithmic scale and gradually approaches the polynomial and linear levels.
    }
    \label{fig:ex4_pred_error}
\end{figure}

\begin{figure}[!ht]
  \centering
  \begin{subfigure}[t]{.49\textwidth}
    \centering
    \includegraphics[width=\linewidth]{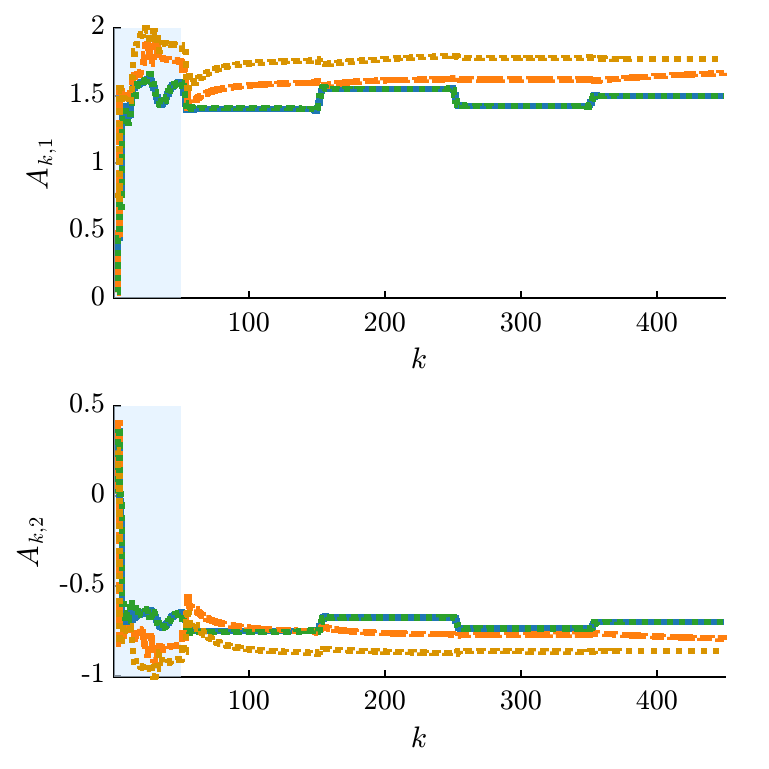}
    \caption{Evolution of $A_{k}$ (output coefficients).}
  \end{subfigure}
  \hfill
  \begin{subfigure}[t]{.49\textwidth}
    \centering
    \includegraphics[width=\linewidth]{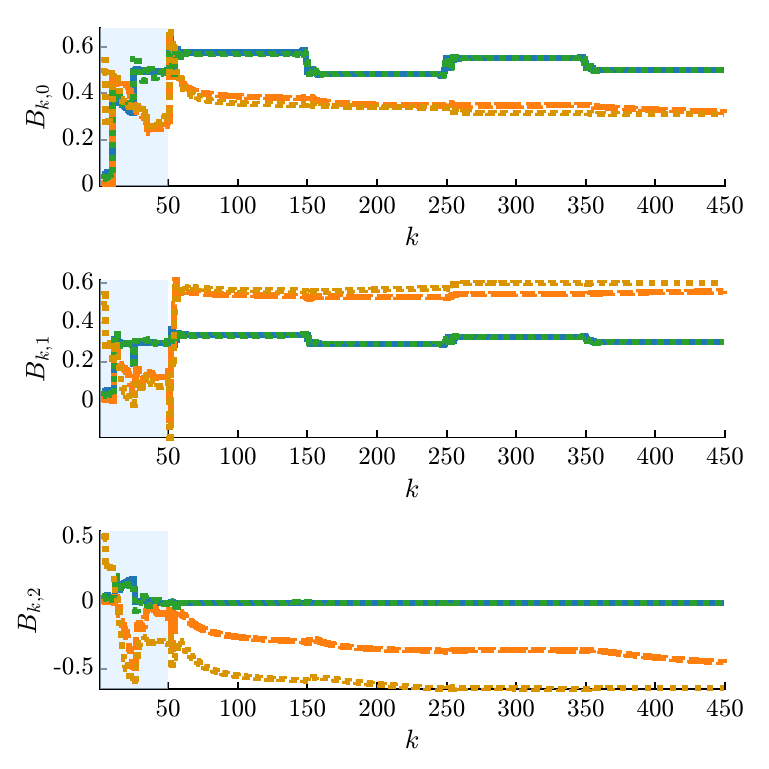}
    \caption{Evolution of $B_{k}$ (input coefficients).}
  \end{subfigure}

  \vspace{0.6em}
  \begin{subfigure}[t]{\textwidth}
    \centering
    \includegraphics[trim = 0 20 0 0, width=.8\textwidth]{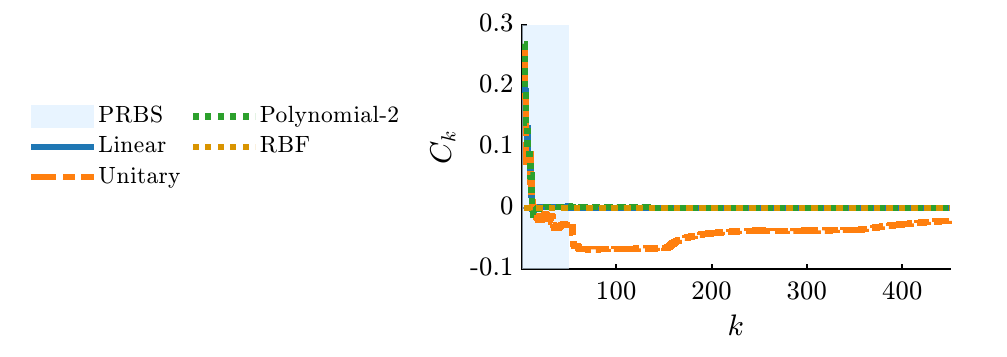}
    \caption{Evolution of $C_k$ (intercept term).}
  \end{subfigure}
  \caption{
  \textbf{Example~4.}
Estimated linear-parameter-varying--autoregressive-with-exogenous-input (LPV--ARX) blocks \(\{A_k,B_k,C_k\}\).
The coefficient trajectories remain bounded with no visible drift or divergence.
After each step, the linear and polynomial kernels change sharply and settle quickly, while unitary and radial basis function (RBF) evolve more slowly.
The nominal linear part is \(1.5\,y_{k-1}-0.7\,y_{k-2}+0.5\,u_k+0.3\,u_{k-1}\).
In \(A_k\), entries \(A_{k,1}\) and \(A_{k,2}\) oscillate around \(1.5\) and \(-0.7\) for linear/polynomial and stay in the same vicinity for unitary/RBF.
This proximity likely reflects mild nonlinear effects producing small biases.
In \(B_k\), \(B_{k,0}\) and \(B_{k,1}\) cluster near \(0.5\) and \(0.3\) for linear/polynomial, whereas unitary/RBF deviate; the extra lag term \(B_{k,2}\) is near zero only for linear/polynomial.
For \(C_k\), the linear/polynomial traces approach zero, the RBF is fixed at zero, and the unitary remains negative.
During stabilization (\(r_k\!\equiv\!0\)), the linear/polynomial coefficients recover the nominal linear-part values within numerical precision.
}\label{fig:ex4_lpv_coeff}
\end{figure}

\begin{figure}[!ht]
  \centering
  \begin{subfigure}[t]{\textwidth}
    \centering
    \includegraphics[width=.83\linewidth]{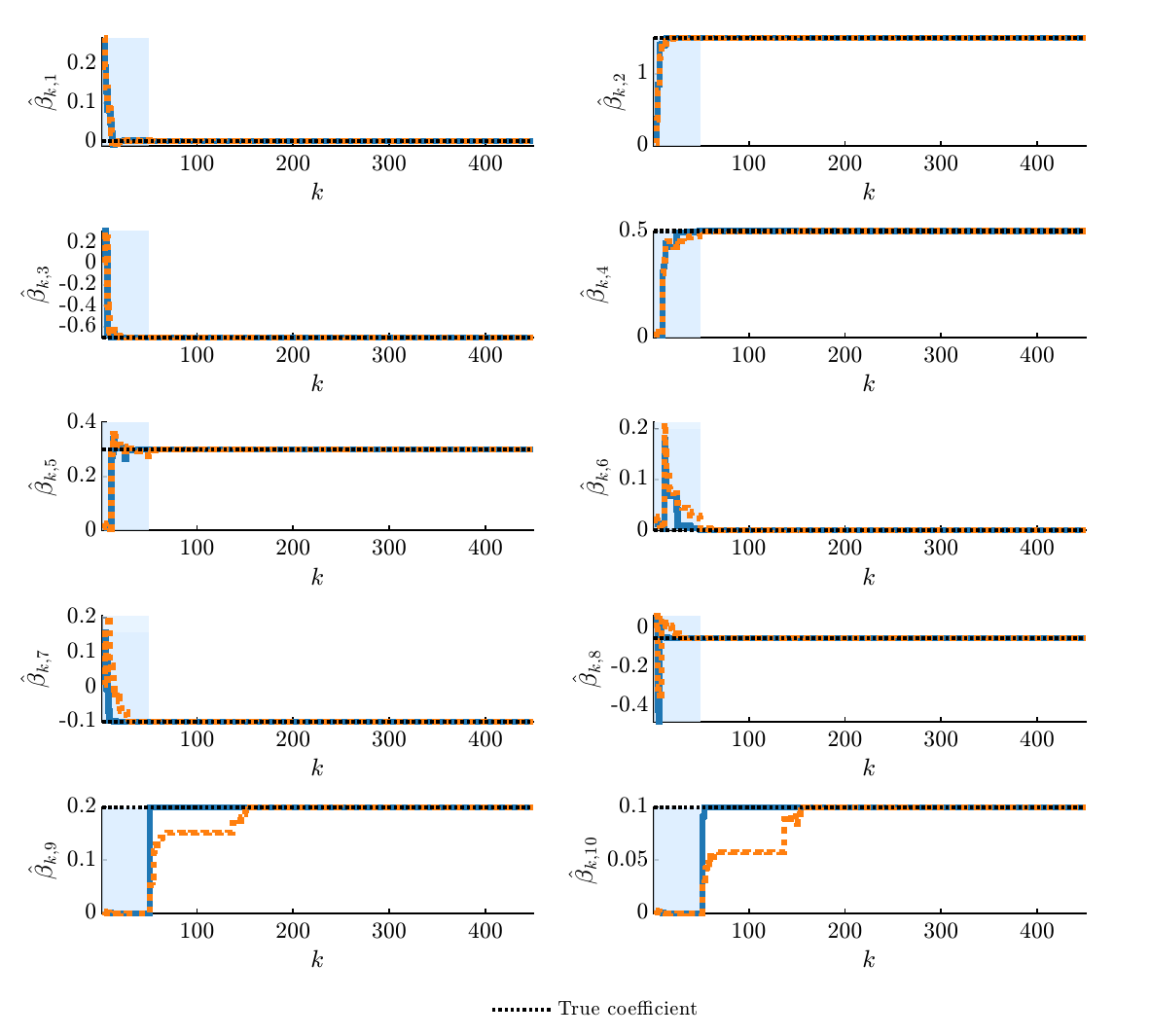}
    \caption{Evolution of the identified coefficient vector $\hat{\beta}_k$.}
  \end{subfigure}
  \hfill
  \begin{subfigure}[t]{\textwidth}
    \centering
    \includegraphics[width=.65\linewidth]{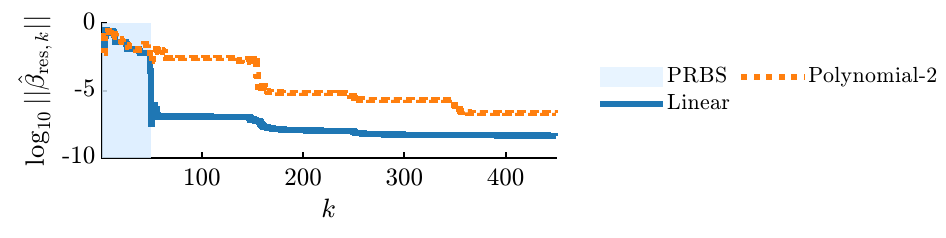}
    \caption{Logarithmic residual norm $\log_{10}\|\hat{\beta}_{\mathrm{res},k}\|$.}
  \end{subfigure}
  \caption{
  \textbf{Example~4}. Coefficient recovery for linear and polynomial kernels. 
We observe that the identified coefficients approach the true values in all cases.
For components \(\hat{\beta}_{k,1:8}\), most convergence occurs during the pseudo-random binary sequence (PRBS) phase.
The last two components (\(\hat{\beta}_{k,9:10}\)) converge sharply at the onset of step commands: the linear kernel settles immediately after PRBS (\(k\!\approx\!60\)), whereas the polynomial kernel converges more gradually, stabilizing near \(k\!\approx\!170\).
The residual norm decreases in both cases, abruptly after the first command for the linear kernel and in a stepwise manner for the polynomial one, with the polynomial maintaining a slightly higher final residue (\(\log_{10}\|\hat{\beta}_{\mathrm{res},k}\|\approx-6.7\)) than the linear kernel (\(\approx-8.3\)).
  }\label{fig:ex4_theta}
\end{figure}

\clearpage
\subsection{Example~5. SISO Polynomial Cross-Term Cubic NARX}\label{sec:ex5}

Consider the discrete-time SISO system with lag~$\ell=2$
\begin{equation}
\label{eq:ex5_narx}
\begin{aligned}
y_k
&= 1.5\,y_{k-1} - 0.7\,y_{k-2} + 0.5\,u_k + 0.3\,u_{k-1} \\
&\quad {} + 6.5\,y_{k-1}u_{k-1}^2 \;-\; 3.85\,y_{k-2}u_{k-2}^2 .
\end{aligned}
\end{equation}
Initial conditions are $y_{-1}=y_{-2}=u_{0}=u_{-1}=u_{-2}=0$.
The nonlinear terms are degree-three cross monomials coupling past outputs with squared past inputs; no pure squares $y_{k-i}^2$ or $u_{k-j}^2$ appear.
The coefficients preserve the linear backbone of Example~1 and maintain the origin as an equilibrium.

This example replaces the pure quadratic terms of Example~4 with cross terms of the form $y\,u^2$, providing a concise setting to assess kernel expressivity for mixed, higher-order interactions at fixed memory and direct feedthrough.
A linear kernel cannot synthesize $u^2y$ under the Kronecker construction~\eqref{eq:zk_intercept}, whereas a polynomial dictionary augmented with squared-input features can represent it exactly, and an RBF kernel can only approximate it.
The unitary kernel is retained as a baseline; since it corresponds to the linear ARX structure, it lacks the expressivity to capture the cubic cross terms and serves only for reference.
ABPC is applied to this system to evaluate its identification and tracking performance under this structured cross nonlinearity.

Figures~\ref{fig:ex5_output}--\ref{fig:ex5_theta} summarize the closed-loop results. 
The setup is identical to Example~4, except that the input penalty is increased to \(R_u = 5\times10^{-2}I\) to maintain numerical stability of the Cholesky factorization; smaller values lead to near-singular matrices.

Figure~\ref{fig:ex5_output} shows that all trajectories remain bounded and track the commands. 
The polynomial kernel consistently achieves the smallest tracking errors, followed by the linear kernel, while the unitary and RBF exhibit larger overshoots and steady-state deviations. 
The corresponding one-step prediction errors in Figure~\ref{fig:ex5_err_prediction} confirm this hierarchy: during PRBS excitation, all kernels converge rapidly, but after PRBS the polynomial remains the most accurate and stable. 
The unitary and linear cases plateau at higher error levels as excitation diminishes, whereas the RBF error continues decreasing and reaches the lowest value at the end of the run.

Figure~\ref{fig:ex5_lpv_coeff} reports the LPV--ARX coefficient evolution. 
All estimates are bounded and respond to command changes. 
The linear and polynomial kernels stay closest to the nominal linear coefficients \(1.5\) and \(-0.7\), while the unitary and RBF deviate more visibly. 
Finally, Figure~\ref{fig:ex5_theta} shows that the identified parameters corresponding to the linear part converge during PRBS and remain stable, whereas the cross coefficients move toward their true values without fully settling. 
The residual norm decreases sharply during PRBS and stabilizes at a low level toward the end.

To assess parameter accuracy, the identified vector
\[
\hat{\beta}_k = \theta_k[\mathcal{I}_{\mathrm{true}}] \in \mathbb{R}^8,
\qquad
\mathcal{I}_{\mathrm{true}} = \{1,2,3,4,5,6,47,53\},
\]
collects the entries of \(\theta_k\) corresponding to the monomials
\[
\begin{bmatrix}
1 & y_{k-1} & y_{k-2} & u_k & u_{k-1} & u_{k-2} &
y_{k-1}u_{k-1}^2 & y_{k-2}u_{k-2}^2
\end{bmatrix}^{\!\top},
\]
with the true vector
\[
\beta^\star =
\begin{bmatrix}
0 & 1.5 & -0.7 & 0.5 & 0.3 & 0 & 6.5 & -3.85
\end{bmatrix}^{\!\top}.
\]
All remaining parameters form the residual vector
\(
\hat{\beta}_{\mathrm{res},k}=\theta_k[\mathcal{I}_{\mathrm{res}}],
\)
whose norm is shown in logarithmic scale.

The quantitative results in Table~\ref{tab:ex5_phaseII} are consistent with the qualitative observations in 
Figures~\ref{fig:ex5_output}--\ref{fig:ex5_theta}. 
The polynomial kernel achieves the lowest tracking error in both RMSE and IAE, confirming its ability to capture the cubic cross terms directly. 
The linear kernel follows with moderate accuracy, while the unitary and RBF yield the largest residuals, consistent with their visible steady-state biases. 
All methods exhibit comparable control activity (TV$_u$ and Peak$_u$ remain within a narrow range), indicating that the improved accuracy of the polynomial case is not obtained at the cost of higher control effort. 
These results reinforce the conclusion that explicit polynomial features provide the most balanced performance for this nonlinear configuration.

\begin{table}[!ht]
\centering
\caption{\textbf{Example~5}: Performance metrics. 
Root-mean-square error (RMSE) and integral absolute error (IAE) quantify output tracking accuracy; 
total variation of the input (TV$_u$) and peak input magnitude (Peak$_u$) characterize control activity.
The polynomial kernel attains the smallest tracking error in both RMSE and IAE, reflecting its direct representation of the cubic cross terms. 
The linear kernel delivers intermediate accuracy, whereas the unitary and radial-basis-function (RBF) kernels show the largest residuals, consistent with their observable steady-state biases. 
All methods exhibit similar control activity, with TV$_u$ and Peak$_u$ confined to a narrow range, indicating that the polynomial kernel’s higher accuracy is not achieved through increased control effort. 
These findings support the conclusion that explicit polynomial features provide the most balanced performance in this nonlinear setting.
}
\label{tab:ex5_phaseII}
\begin{tabular}{lcccc}
\toprule
Kernel & RMSE & IAE & TV$_u$ & Peak$_u$ \\
\midrule
Polynomial-2 & \textbf{0.1184} & \textbf{15.41} & 1.527 & 0.165 \\
Linear       & 0.1277 & 23.31 & 1.536 & 0.164 \\
Unitary      & 0.1586 & 34.01 & 1.560 & 0.184 \\
RBF          & 0.1667 & 36.74 & 1.539 & 0.187 \\
\bottomrule
\end{tabular}
\end{table}

\clearpage

\begin{figure}[!ht]
  \centering
  \begin{subfigure}[t]{\textwidth}
    \centering
    \includegraphics[width=.75\linewidth]{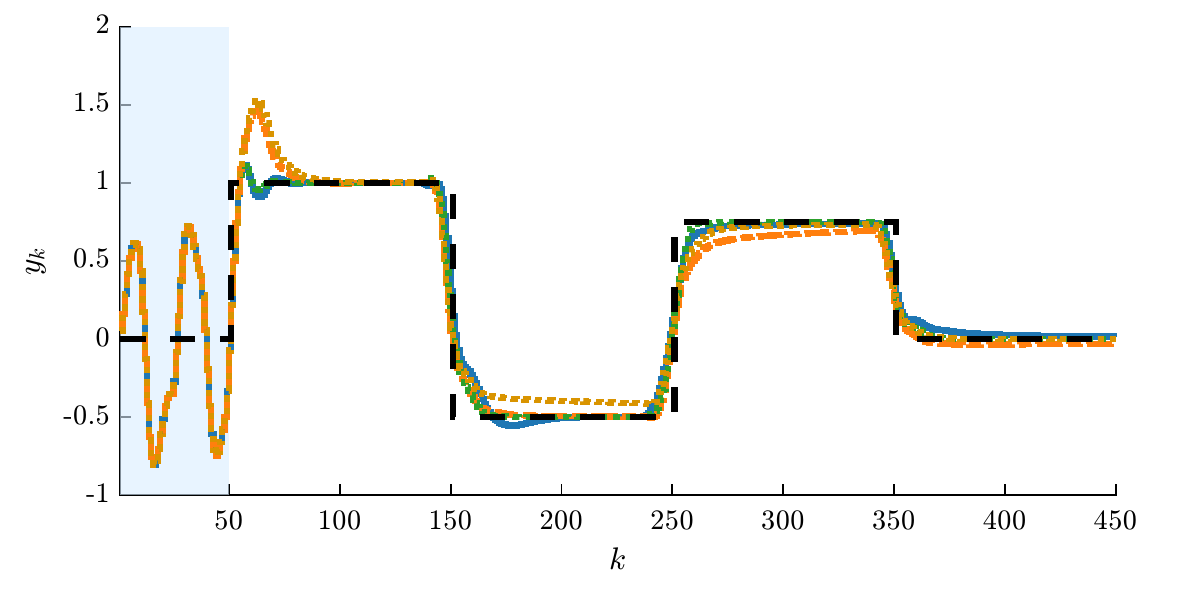}
    \caption{Output trajectories $y_k$.}
  \end{subfigure}
  \hfill
  \begin{subfigure}[t]{\textwidth}
    \centering
    \includegraphics[trim = 0 40 0 0, width=.8\linewidth]{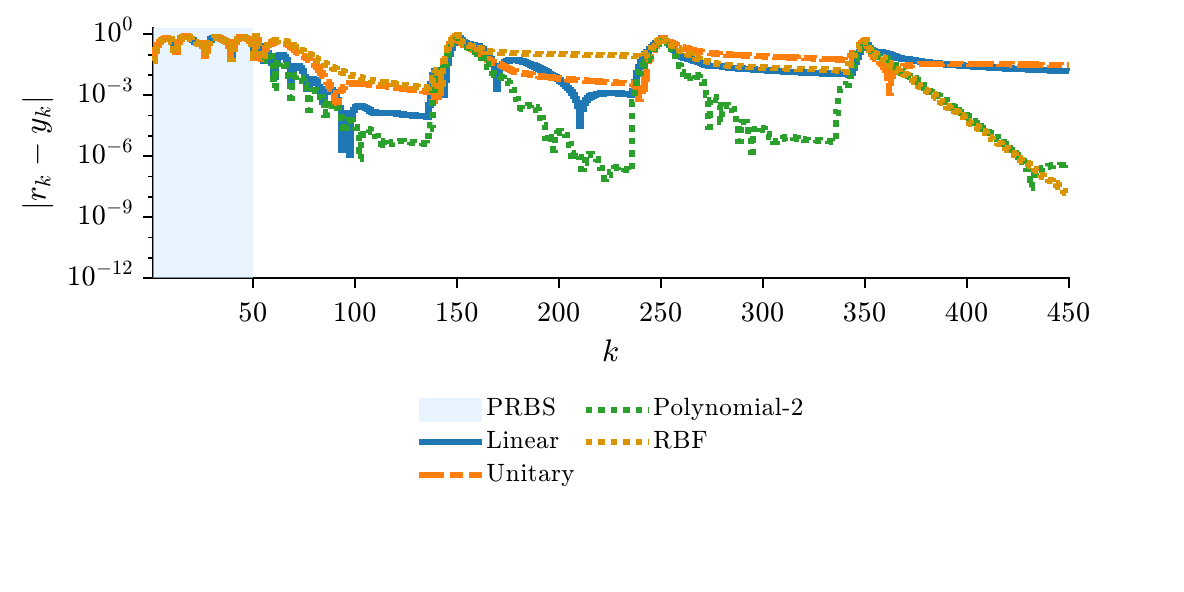}
    \caption{Tracking error $|y_k-r_k|$ in logarithmic scale.}
  \end{subfigure}
  \caption{\textbf{Example~5.} Output tracking for the single-input single-output (SISO) cubic nonlinear autoregressive-with-exogenous-input (NARX) system with polynomial cross terms.  
All trajectories remain bounded and follow the command sequence with varying steady-state accuracy. 
During the first step, the linear and polynomial kernels yield the most damped responses, while the unitary and radial basis function (RBF) exhibit overshoots of 1.45 and 1.52, respectively. 
From the error plot, the polynomial achieves the lowest error (\(4.18\times10^{-6}\)), followed by the linear (\(9.1\times10^{-5}\)), whereas unitary and RBF remain around \(3\times10^{-3}\). 
At the second step, the RBF shows the largest steady-state bias, the unitary closes the error, and the linear responds slowest; the polynomial again converges fastest with an error of about \(2\times10^{-7}\). 
At the third step, the unitary performs worst, while the linear and RBF achieve similar accuracy with small residual bias; the polynomial maintains the smallest error (\(5.3\times10^{-6}\)). 
In the final stabilization phase (\(r_k\!\equiv\!0\)), unitary and linear remain least accurate, whereas RBF and polynomial decay almost linearly in the logarithmic scale; the polynomial plateaus at \(3.74\times10^{-7}\) and the RBF continues decreasing until the end of the simulation.}
\label{fig:ex5_output}
\end{figure}

\begin{figure}
    \centering
    \includegraphics[width=0.9\linewidth]{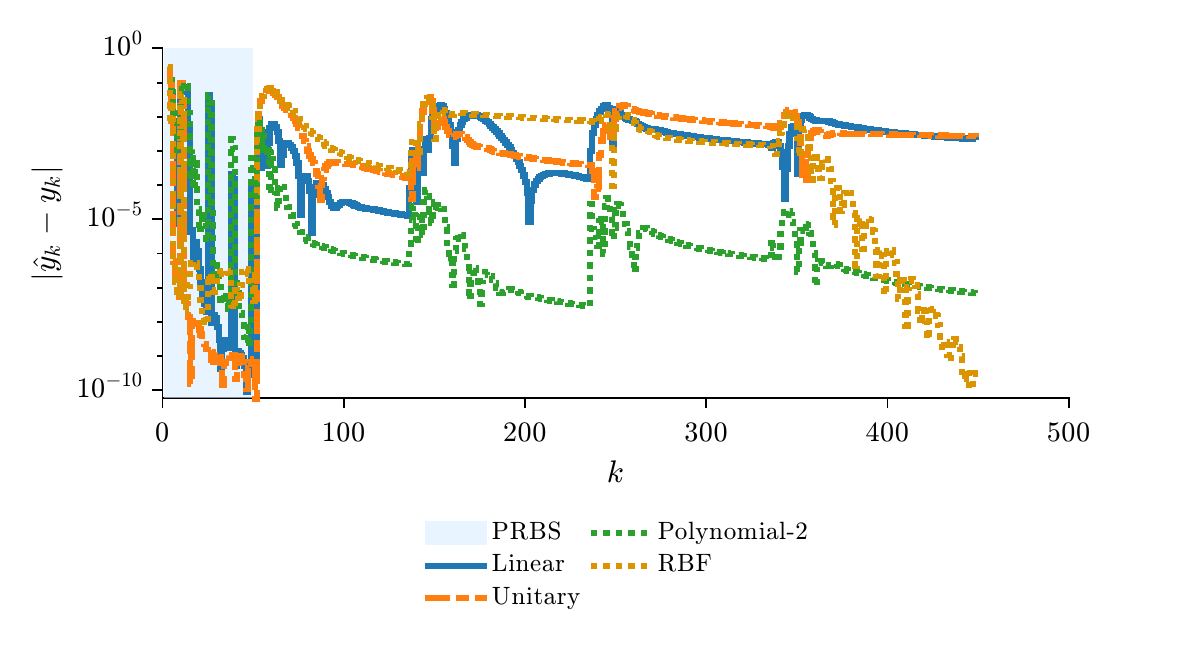}
    \caption{\textbf{Example~5.} One-step prediction error for the single-input single-output (SISO) cubic nonlinear autoregressive-with-exogenous-input (NARX) system with polynomial cross terms. 
During the pseudo-random binary sequence (PRBS) excitation phase, all kernels exhibit a marked reduction in prediction error; the unitary kernel attains the lowest error, followed by the linear, polynomial, and radial basis function (RBF). 
After the PRBS phase, a consistent ranking emerges: the polynomial kernel maintains the lowest and most stable error throughout most of the simulation, while the unitary and linear kernels show gradual degradation with each command step. 
For successive step commands, the unitary error plateaus at increasingly higher levels, and the linear follows a similar trend with slightly lower plateaus; by the final stabilization phase (\(r_k\!\equiv\!0\)), their accuracies converge to comparable levels. 
The RBF kernel begins with similar accuracy to the unitary during the first step, becomes the least accurate during the second, reaches a comparable level to the linear during the third, and finally decreases linearly in the logarithmic scale to the smallest overall prediction error at the end of the simulation (\(\approx1.23\times10^{-10}\)). 
This evolution coincides with the change in input excitation: during PRBS all kernels achieve low errors (unitary lowest), whereas under subsequent commanded inputs, the linear and unitary models plateau at higher error levels as their structures cannot capture the newly excited nonlinear modes; the polynomial remains consistently low, and the RBF decreases late as the trajectory revisits narrower regions of the input--output space.
}
    \label{fig:ex5_err_prediction}
\end{figure}

\begin{figure}[!ht]
  \centering
  \begin{subfigure}[t]{.49\textwidth}
    \centering
    \includegraphics[width=\linewidth]{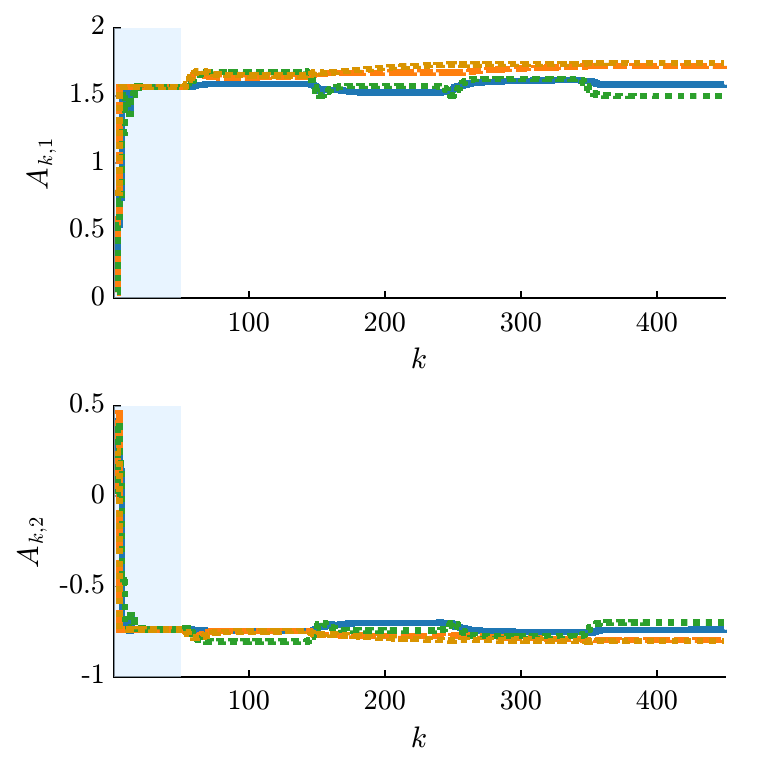}
    \caption{Evolution of $A_{k}$ (output coefficients).}
  \end{subfigure}
  \hfill
  \begin{subfigure}[t]{.49\textwidth}
    \centering
    \includegraphics[width=\linewidth]{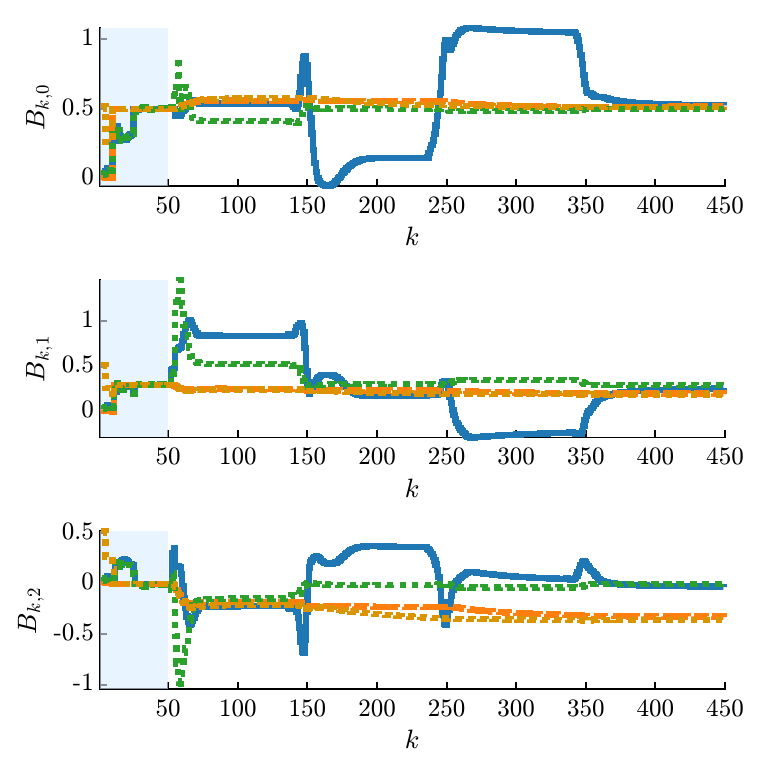}
    \caption{Evolution of $B_{k}$ (input coefficients).}
  \end{subfigure}

  \vspace{0.6em}
  \begin{subfigure}[t]{\textwidth}
    \centering
    \includegraphics[trim = 0 20 0 0, width=.8\textwidth]{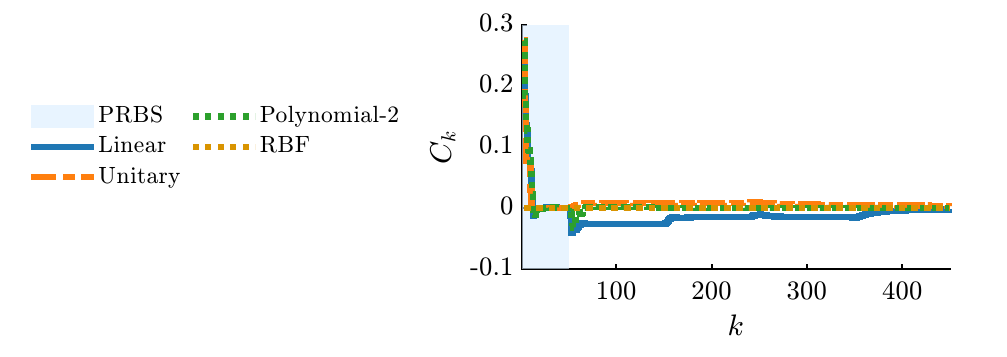}
    \caption{Evolution of $C_k$ (intercept term).}
  \end{subfigure}
  \caption{\textbf{Example~5.} Linear-parameter-varying--autoregressive-with-exogenous-input (LPV--ARX) coefficient evolution. 
All estimated coefficients remain bounded and respond to command changes. 
For \(A_{k,1}\), the linear and polynomial kernels stay closest to the nominal linear coefficient \(1.5\) (the true value associated with \(y_{k-1}\) in the linear backbone of the cubic nonlinear autoregressive-with-exogenous-input (NARX) model), whereas the unitary and radial basis function (RBF) deviate more. 
For \(A_{k,2}\), all trajectories oscillate around the nominal value \(-0.7\), corresponding to the coefficient of \(y_{k-2}\) in the same linear backbone. 
The \(B_k\) coefficients remain bounded, with the largest variations observed for the linear and polynomial cases. 
All \(C_k\) terms stay near zero, the linear kernel showing the largest offset. 
These bounded oscillations and coefficient shifts are consistent with the excitation of nonlinear components and structure-dependent sensitivity.}
\label{fig:ex5_lpv_coeff}
\end{figure}

\begin{figure}[!ht]
  \centering
  \begin{subfigure}[t]{\textwidth}
    \centering
    \includegraphics[width=.83\linewidth]{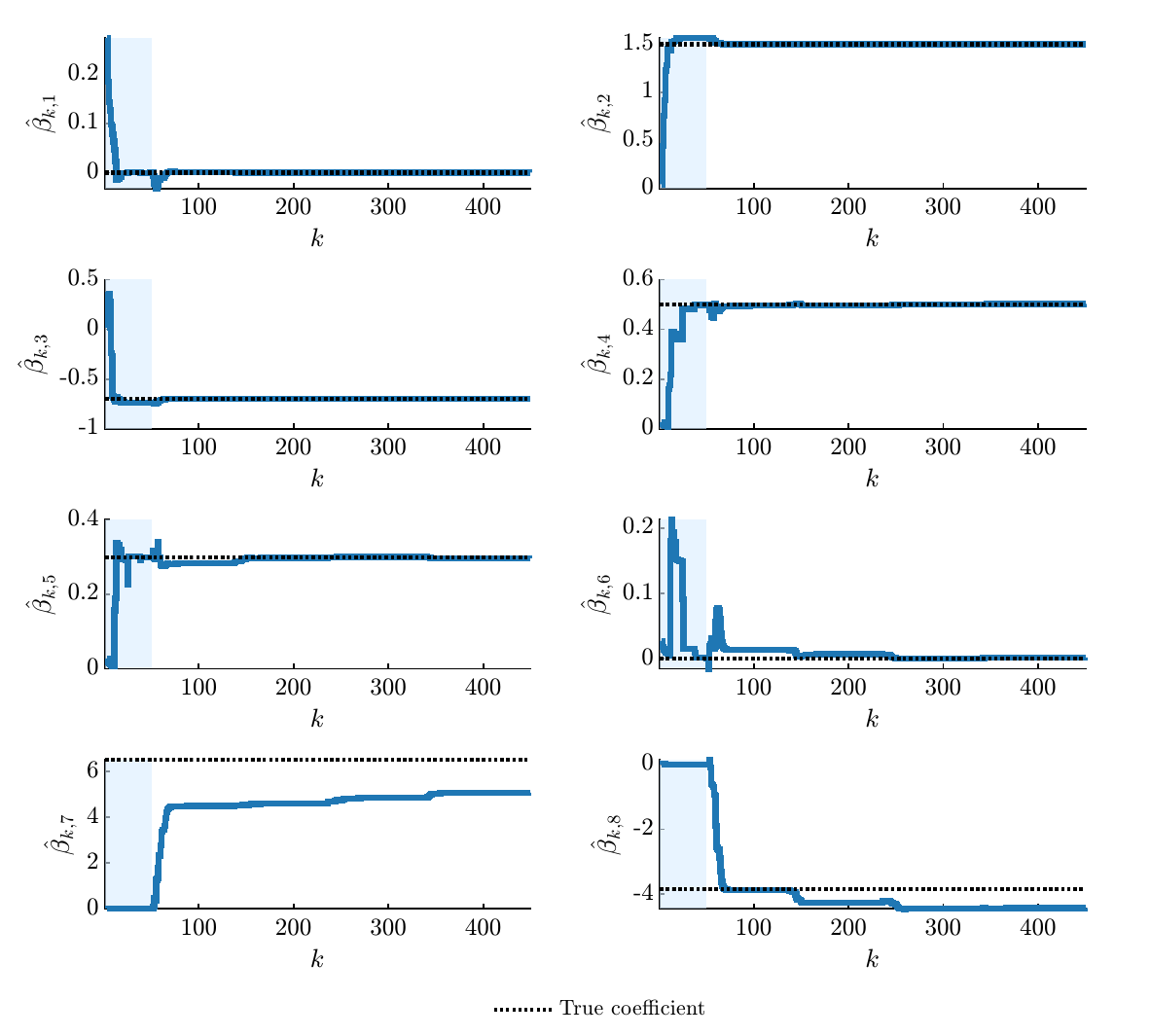}
    \caption{Evolution of the identified coefficient vector $\hat{\beta}_k$.}
  \end{subfigure}
  \hfill
  \begin{subfigure}[t]{\textwidth}
    \centering
    \includegraphics[width=.65\linewidth]{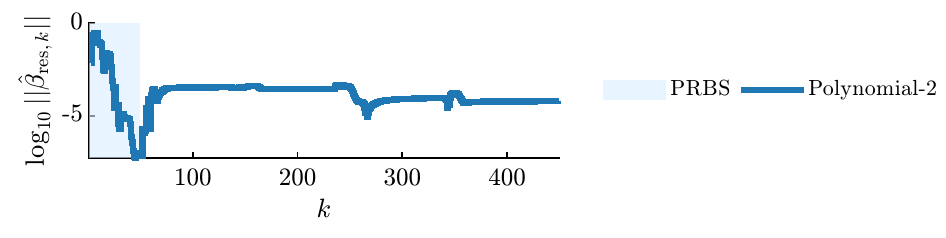}
    \caption{Logarithmic residual norm $\log_{10}\|\hat{\beta}_{\mathrm{res},k}\|$.}
  \end{subfigure}
  \caption{\textbf{Example~5} (Polynomial case only). Identified parameters at indices aligned with the true parameterization. 
During pseudo-random binary sequence (PRBS), the linear-part coefficients \(\hat{\beta}_{k,1:6}\) move toward their true values and remain close thereafter. 
In the control phase, the cross coefficients evolve: \(\hat{\beta}_{k,7}\) rises toward \(6.5\) but ends at \(5.06\); \(\hat{\beta}_{k,8}\) drops near \(-3.85\) early, then overshoots and ends at \(-4.43\). 
The residual decreases markedly during PRBS, increases around the first step, decreases again around the third step, and settles near \(-4.2\) (log scale) by the end. 
Note that the near-zero values of \(\hat{\beta}_{k,7:8}\) during PRBS are consistent with limited excitation of the cubic cross features.}
\label{fig:ex5_theta}
\end{figure}

\clearpage
\subsection{Example~6. SISO Hammerstein Benchmark}\label{sec:ex6}

Consider the discrete-time SISO Hammerstein system with lag~$\ell=2$
\begin{align}
w_k &= u_k + \alpha\,u_k^2 + \beta\,u_k^3, \label{eq:ex6_nonlinearity}\\
y_{k} &= 2.0\,y_{k-1} - 1.01\,y_{k-2} + 0.5\,w_{k-1} - 0.65\,w_{k-2},\label{eq:ex6_linear_backbone}
\end{align}
where $\alpha=1.6$ and $\beta=1.1$.
Initial conditions are $y_{-1}=y_{-2}=w_{-1}=w_{-2}=u_{-1}=u_{-2}=0$.
The nonlinearity \eqref{eq:ex6_nonlinearity} is a static cubic transformation of the input, while the subsequent dynamics \eqref{eq:ex6_linear_backbone} are purely linear and identical to the unstable linear backbone used in Example~2.

This structure corresponds to the classical Hammerstein configuration \cite{BILLINGS198215}, in which a memoryless input nonlinearity precedes an LTI subsystem.
Unlike the NARX model of Example 5, the nonlinearity acts solely on the input channel and is separable from the system dynamics, forming a block-oriented structure without feedback coupling.
Such separation allows an exact representation with polynomial kernels, while linear, RBF, and unitary kernels can only approximate the static nonlinearity.
The example thus serves as a standard benchmark to assess ABPC performance on systems with input-side nonlinearities. \\

The same configuration as in Example~2 is used, except that the prediction horizon is increased to $N=30$ and the input penalty is set to $R_u=10^{3}I$ to ensure numerical stability. 
The PRBS excitation window is slightly extended to $T_{\mathrm{warm}}=10$, and the reference sequence is defined as
\[
r_k =
\begin{cases}
0.5, & k \in [11,400],\\
1.5, & k \in [401,800],\\
3.0, & k \in [801,1200],\\
0.0, & k > 1200.
\end{cases}
\]
All other parameters are identical to those of Example~2.

Figures~\ref{fig:ex6_output} and~\ref{fig:ex6_err_prediction} illustrate the closed-loop behavior of ABPC on the unstable Hammerstein system. 
All kernels maintain bounded trajectories under the strong input nonlinearity. 
In output tracking (Figure~\ref{fig:ex6_output}), the polynomial kernel achieves the most accurate responses across all command steps, while the linear kernel remains moderate and the unitary and RBF exhibit larger transients and residual errors. 
The one-step prediction errors (Figure~\ref{fig:ex6_err_prediction}) confirm this ranking: polynomial lowest, linear intermediate, unitary and RBF highest during the early phases. 
As excitation decreases in the final stabilization phase, the RBF continues to improve and ends with the smallest residual error, consistent with its local approximation capability. 
Overall, the results demonstrate the robustness of ABPC under unstable dynamics and strong input nonlinearities, with kernel performance following the expected structural hierarchy.

The numerical results in Table~\ref{tab:ex6_phaseII} confirm the qualitative trends observed in 
Figures~\ref{fig:ex6_output} and~\ref{fig:ex6_err_prediction}. 
The polynomial kernel attains the lowest RMSE and IAE, consistent with its accurate representation of the cubic input nonlinearity. 
The linear kernel performs moderately, while the unitary and RBF cases show larger tracking errors and accumulated deviations. 
Among them, the unitary kernel yields the highest IAE, matching its visible undershoot and slower convergence. 
Control activity metrics (TV$_u$, Peak$_u$) remain comparable across all kernels, indicating that the improvement of the polynomial case is achieved without excessive input effort. 
These results quantitatively reinforce the ranking inferred from the plots.

\begin{table}[!ht]
\centering
\caption{\textbf{Example~6}: Performance metrics. 
Root-mean-square error (RMSE) and integral absolute error (IAE) quantify output tracking accuracy; 
total variation of the input (TV$_u$) and peak input magnitude (Peak$_u$) characterize control activity.
The polynomial kernel achieves the smallest RMSE and IAE, consistent with its accurate representation of the cubic input nonlinearity. 
The linear kernel provides intermediate accuracy, whereas the unitary and RBF kernels produce larger tracking errors and greater accumulated deviations. 
Among these, the unitary kernel yields the highest IAE, matching its visible undershoot and slower convergence. 
Control activity measures, including TV$_u$ and Peak$_u$, remain similar across all kernels, indicating that the polynomial kernel's improved accuracy does not rely on elevated input effort. 
These numerical outcomes reinforce the ranking implied by the plots.
}
\label{tab:ex6_phaseII}
\begin{tabular}{lcccc}
\toprule
Kernel & RMSE & IAE & TV$_u$ & Peak$_u$ \\
\midrule
Polynomial-2 & \textbf{0.2546} & \textbf{109.93} & 3.00 & 0.458 \\
Linear       & 0.2682 & 150.54 & 2.77 & 0.387 \\
RBF          & 0.3188 & 290.09 & 1.77 & 0.271 \\
Unitary      & 0.3909 & 436.27 & 2.37 & 0.295 \\
\bottomrule
\end{tabular}
\end{table}

The same experiment is repeated in the Appendix to examine the robustness of ABPC.

\clearpage

\begin{figure}[!ht]
  \centering
  \begin{subfigure}[t]{\textwidth}
    \centering
    \includegraphics[width=.75\linewidth]{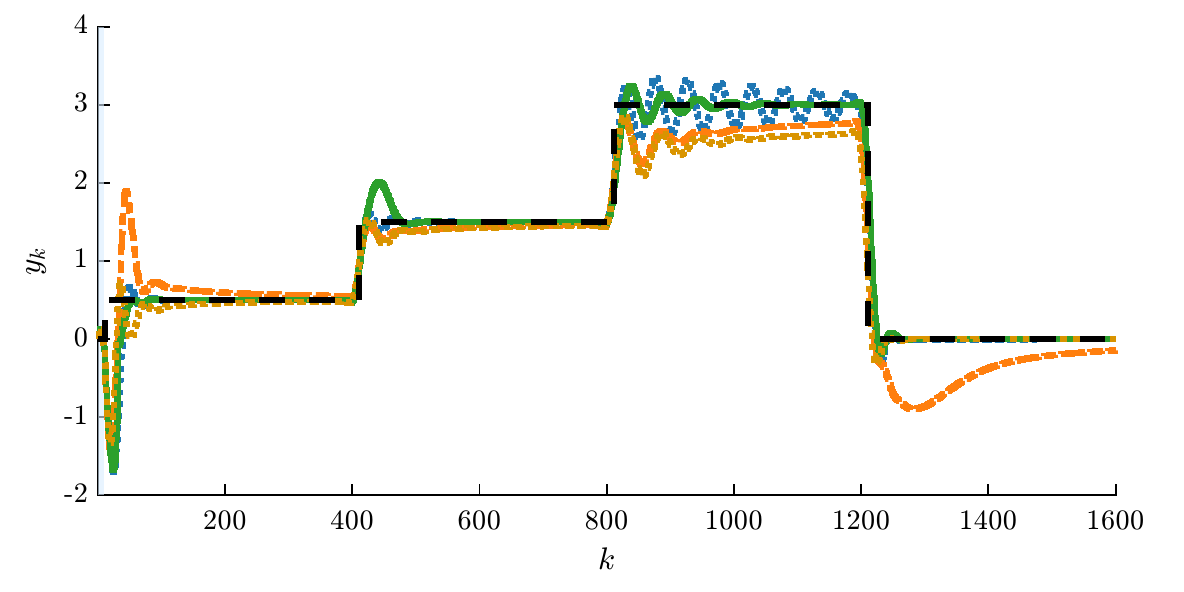}
    \caption{Output trajectories $y_k$.}
  \end{subfigure}
  \hfill
  \begin{subfigure}[t]{\textwidth}
    \centering
    \includegraphics[trim = 0 40 0 0, width=.8\linewidth]{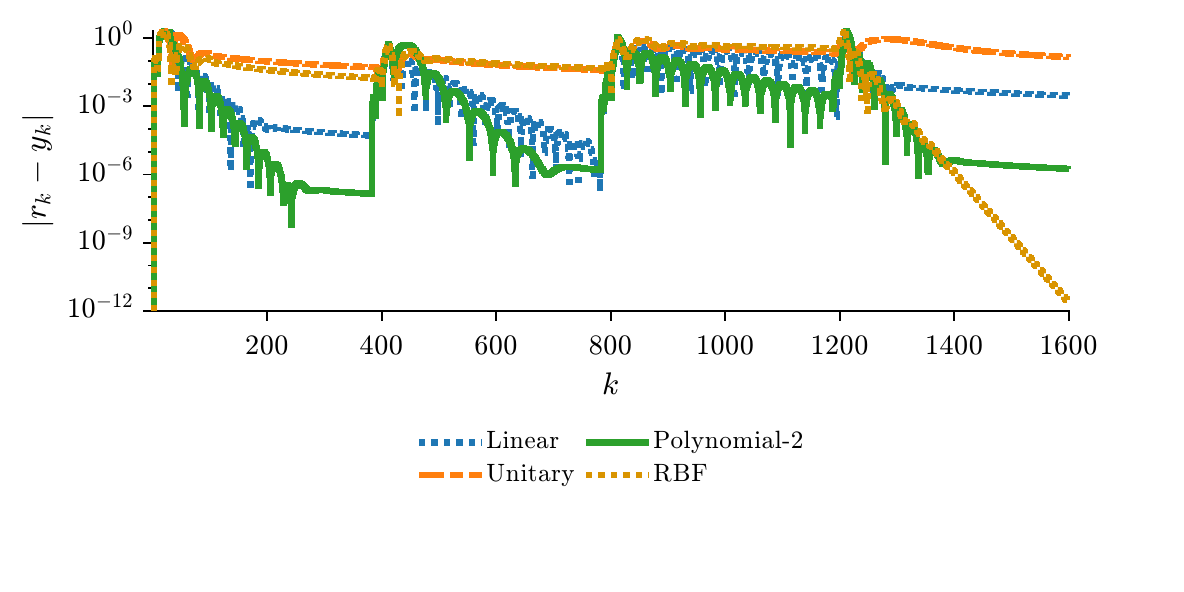}
    \caption{Tracking error $|y_k-r_k|$ in logarithmic scale.}
  \end{subfigure}
  \caption{\textbf{Example~6.} Output tracking and logarithmic error. 
All kernels remain bounded throughout the simulation. 
For the first step ($r_k=0.5$), the unitary kernel exhibits the largest overshoot and slowest convergence, followed by the radial basis function (RBF) with a milder transient. 
The linear and polynomial kernels show smaller overshoots, with residual errors of $5.2\times10^{-5}$ and $1.47\times10^{-7}$, respectively, while the unitary and RBF settle around $0.06$ and $0.02$. 
As the command increases to $r_k=1.5$, the nonlinear influence strengthens: the polynomial overshoots most but, with the linear kernel, achieves the smallest steady-state error; RBF and unitary remain less accurate. 
At $r_k=3$, the nonlinear distortion dominates; the linear kernel oscillates and gradually damps, while the polynomial tracks more closely despite mild oscillations. 
RBF and linear exhibit the largest deviations ($\approx0.3$). 
In the final stabilization phase ($r_k\equiv0$), all trajectories stay bounded; the unitary shows a persistent undershoot, and both polynomial and RBF display monotonic error decay, the RBF ending with the smallest final error by the end of the simulation.}
\label{fig:ex6_output}
\end{figure}

\begin{figure}
    \centering
    \includegraphics[width=0.9\linewidth]{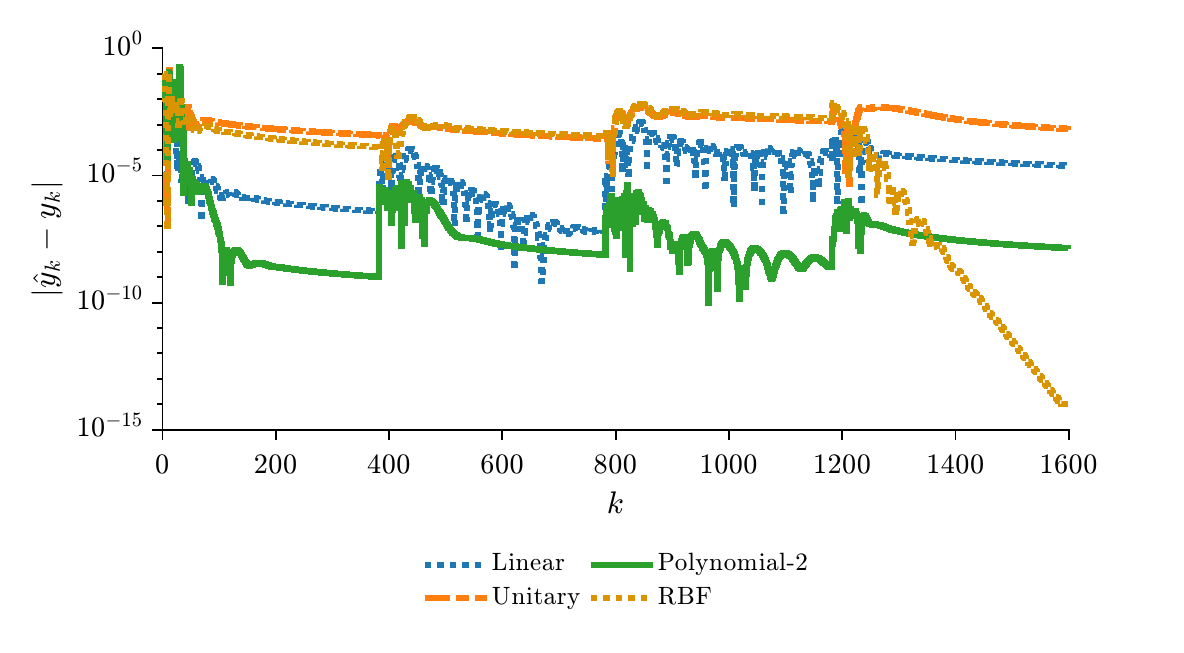}
    \caption{\textbf{Example~6.} One-step prediction error in logarithmic scale. 
For the first three command steps, the unitary and radial-basis-function (RBF) kernels yield the poorest prediction accuracy, their traces remaining close throughout. 
They are followed by the linear kernel, while the polynomial consistently attains the lowest errors across all steps, reflecting its ability to represent the cubic input nonlinearity more accurately. 
During the final stabilization phase, the unitary kernel still plateaus at a high level, whereas the RBF error decays linearly and ends as the smallest among all kernels.}    \label{fig:ex6_err_prediction}
\end{figure}

\clearpage
\subsection{Example~7. MIMO Nonlinear Application --- Attitude Stabilization}\label{sec:ex7}

Consider the attitude dynamics of a rigid body on \(SO(3)\) represented by a unit quaternion
\(q=\begin{bmatrix}\eta & \varepsilon^\top\end{bmatrix}^\top\in\mathbb{R}^4\)
(\(\eta\in\mathbb{R},\,\varepsilon\in\mathbb{R}^3\), \(\|q\|=1\))
and body angular velocity \(\omega\in\mathbb{R}^3\).
The continuous-time equations are
\begin{align}
\dot{q} &= \tfrac{1}{2}\,\Omega(\omega)\,q, \label{eq:q_kin}\\
J\,\dot{\omega} &= \tau - \omega\times(J\omega), \label{eq:q_dyn}
\end{align}
where \(J\in\mathbb{R}^{3\times 3}_{>0}\) is the inertia matrix,
\(\tau\in\mathbb{R}^3\) is the control torque, and
\[
\Omega(\omega)=
\begin{bmatrix}
0 & -\omega^\top\\[2pt]
\omega & -[\omega]_\times
\end{bmatrix},
\]
with \([\omega]_\times\) the skew-symmetric matrix satisfying
\([\omega]_\times x=\omega\times x\).

For a simulation with sampling period \(T_s>0\),
a unit-norm-preserving quaternion update via the exponential map is used.
Define
\begin{align}
\theta_k &\coloneqq \|\omega_k\|\,T_s, \\
\operatorname{sinc}(x)&\coloneqq
\begin{cases}
\dfrac{\sin x}{x}, & x\neq 0,\\
1, & x=0,
\end{cases}\\
\Delta q_k &\coloneqq
\begin{bmatrix}
\cos\!\big(\tfrac{\theta_k}{2}\big)\\
\tfrac{T_s}{2}\,\operatorname{sinc}\!\big(\tfrac{\theta_k}{2}\big)\,\omega_k
\end{bmatrix},
\end{align}
and the Hamilton product \(p\otimes q\) by
\begin{equation}
\begin{bmatrix}\eta_p\\ \varepsilon_p\end{bmatrix}\!\otimes\!
\begin{bmatrix}\eta_q\\ \varepsilon_q\end{bmatrix}
=
\begin{bmatrix}
\eta_p\eta_q - \varepsilon_p^\top \varepsilon_q\\[2pt]
\eta_p\varepsilon_q + \eta_q\varepsilon_p + \varepsilon_p\times \varepsilon_q
\end{bmatrix}.
\end{equation}
Then the discrete-time dynamics are
\begin{align}
q_{k+1} &= \Delta q_k \otimes q_k, \label{eq:q_disc}\\[4pt]
\omega_{k+1} &= \omega_k + T_s\,J^{-1}\!\big(\tau_k - \omega_k\times(J\omega_k)\big), \label{eq:w_disc}
\end{align}
where \(\|q_{k+1}\|=1\) by construction.
For numerical precision, re-normalize \(q_{k+1}\leftarrow q_{k+1}/\|q_{k+1}\|\) to counter floating-point round-off.

The measured output is composed of the quaternion error and the body angular velocity.
At each sampling instant, the desired and actual attitudes \(q_{d,k}\) and \(q_k\) yield
\(\tilde q_k = q_{d,k}^{-1}\!\otimes q_k\),
with sign convention \(\tilde\eta_k \ge 0\).
The output vector is
\[
y_k =
\begin{bmatrix}
\tilde\varepsilon_k \\[2pt]
\omega_k
\end{bmatrix}
\in\mathbb{R}^6,
\]
where \(\tilde\varepsilon_k\) is the vector part of \(\tilde q_k\) and
\(\omega_k\) is the measured body angular rate.
This formulation eliminates quaternion sign ambiguity and unit-norm redundancy while retaining full attitude information.

Initial conditions: \(q_0=\begin{bmatrix}0.95
   &-0.04
    &0.17
    &0.25\end{bmatrix}^\top\), \(\omega_0=\mathbf{0}_3 \, \text{rad/s}\).
This MIMO nonlinear model is used to assess ABPC on \(SO(3)\)
with quaternion kinematics and rigid-body dynamics.
We use $J=\mathrm{diag}\{0.5, 1.5, 1\} \, \text{kg}\cdot\text{m}^2$ and $T_\mathrm{s}=0.05$ s.
The setup is identical to Example~3 except for the system dimensions and control horizon. 
Here, the plant has \(m=3\) inputs and \(p=6\) outputs with lag~\(\ell=1\) across all kernels. 
The warm-up period is \(T_{\mathrm{warm}}=100 \, \text{steps} = 5 \, \text{s}\), and the identification parameters (\(\lambda\), ridge, and numerical tolerances) are unchanged. 
The control horizon is extended to \(N=30\,\text{steps} = 1.5 \, \text{s}\), while the input penalty is reduced to \(\rho_u=10^{-2}\); all other configurations remain identical to Example~3.

Although the linear and polynomial kernels appear to converge more slowly in the time-domain plots (see Figure~\ref{fig:ex7_output}), their steady-state tracking accuracy (as reported by RMSE and IAE in Table~\ref{tab:ex7_phaseII}) is superior. 
This difference arises because the performance metric also includes the angular-velocity components within the output vector, and the linear and polynomial kernels generate smoother responses with smaller velocity amplitudes. 
Consequently, their trajectories remain closer to the reference (which is zero for both attitude and angular rate) despite slower transient convergence. 
The unitary and RBF kernels reach the equilibrium faster but with slightly larger velocity excursions, leading to higher integrated error. 
All methods maintain comparable control effort (TV$_u$, Peak$_u$), confirming that the observed accuracy differences result from the trade-off between transient speed and overall output smoothness.

\begin{table}[!ht]
\centering
\caption{\textbf{Example~7}: Performance metrics for attitude stabilization. 
Lower values of root-mean-square error (RMSE) and integral absolute error (IAE) indicate higher tracking accuracy; 
total variation of the input (TV\(_u\)) and peak input magnitude (Peak\(_u\)) quantify control activity.
Although the linear and polynomial kernels appear to converge more slowly in the time-domain plots (see Figure~\ref{fig:ex7_output}), their steady-state tracking accuracy, which is measured by RMSE and IAE, is superior. 
This follows from the fact that the performance metric includes the angular-velocity components of the output vector, and the linear and polynomial kernels produce smoother responses with reduced velocity amplitudes. 
As a result, their trajectories remain closer to the reference, which is zero for both attitude and angular rate, despite slower transient convergence. 
The unitary and radial-basis-function (RBF) kernels reach equilibrium more rapidly but with slightly larger velocity excursions, leading to higher accumulated error. 
Control activity is similar across all methods, indicating that the accuracy differences stem from the trade-off between transient speed and overall output smoothness rather than variations in control effort.
}
\label{tab:ex7_phaseII}
\begin{tabular}{lcccc}
\toprule
Kernel & RMSE & IAE & TV\(_u\) & Peak\(_u\) \\
\midrule
Polynomial-2& \textbf{0.1184} & \textbf{15.41}  & 1.5269 & 0.1653 \\
Linear       & 0.1277 & 23.31  & 1.5359 & 0.1641 \\
Unitary      & 0.1586 & 34.01  & 1.5604 & 0.1842 \\
RBF          & 0.1667 & 36.74  & 1.5394 & 0.1873 \\
\bottomrule
\end{tabular}
\end{table}

\begin{figure}
    \centering
    \includegraphics[width=0.6\linewidth]{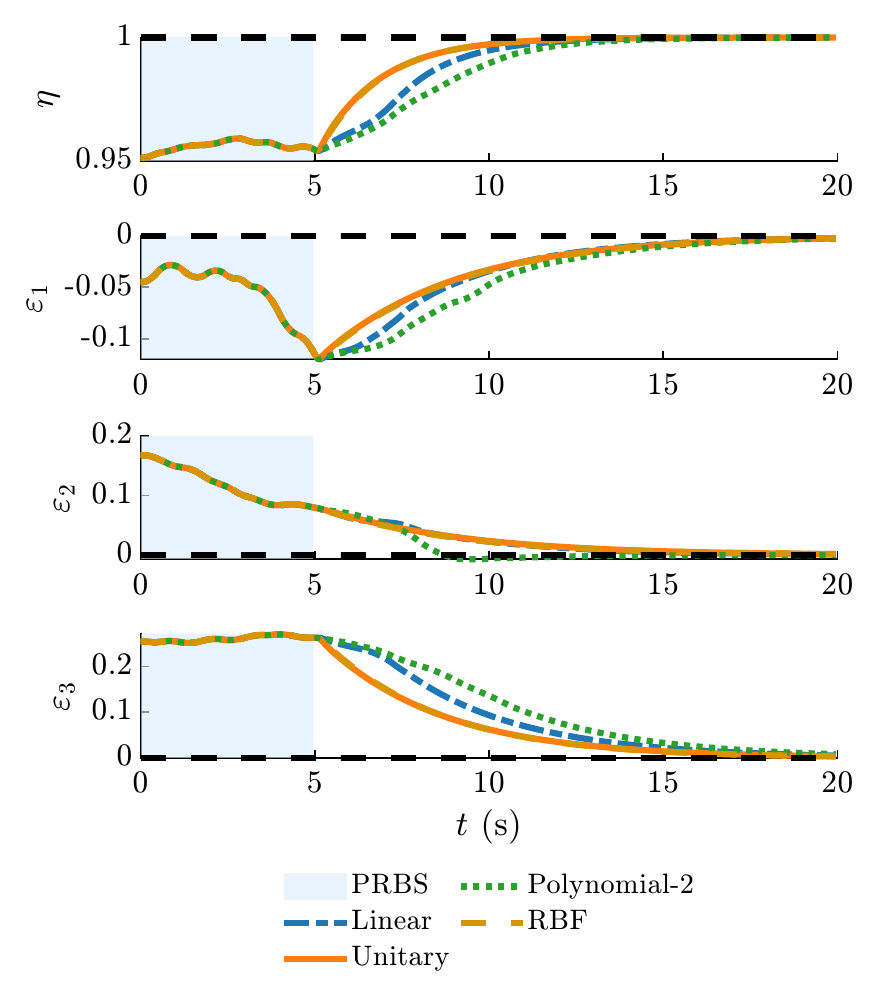}
    \caption{
    \textbf{Example~7}:
    Closed-loop quaternion trajectories with a fixed desired attitude \(q_d=\begin{bmatrix}1 & 0 & 0 & 0\end{bmatrix}^\top\). 
The unitary and radial-basis-function (RBF) kernels exhibit a first-order-like transient response characterized by a nonzero initial slope followed by exponential-type convergence toward the commanded equilibrium. 
Across all quaternion components, the unitary and RBF responses nearly coincide throughout both transient and steady-state phases, indicating that the two parameterizations capture the local attitude dynamics equivalently. 
The polynomial kernel displays a slightly faster settling in \(\varepsilon_2\), whereas for the remaining components the polynomial and linear kernels converge more slowly.}
\label{fig:ex7_output}
\end{figure}

\clearpage

\section*{Conclusion}

This study presented ABPC, a kernel-based indirect adaptive controller that identifies an LPV--ARX one-step predictor online with kernel-RLS, stacks predictions over a finite horizon, and computes the control sequence in closed form via Cholesky. The method operates on streaming data without batch Hankel matrices or iterative QPs. Across numerical examples, performance is strong when the kernel dictionary aligns with the plant class (Hammerstein, NARX with and without cross terms, polynomial, cubic, quaternion), while fragility appears for sinusoidal content represented by polynomial features; the unitary dictionary mitigates this through the internal-model effect. In linear regimes, RBF features occasionally outperform the unitary baseline. The per-step computation is dominated by a Cholesky factorization of size \(mN\times mN\), with memory \(O((mN)^2)\). Results are numerical and reproducible; formal stability and robustness analysis for the frozen surrogate is left open.

Future work targets constrained control and adaptation under abrupt changes. For constraints, we will study closed-form or near-closed-form treatments such as soft-penalty schedules, projections, and factor-updated active-set methods. For fast transients, variable-rate forgetting and change-detection resets will be combined with kernel dictionaries that are sparsified and pruned to reduce numerical fragility and parameter count. We also plan to formalize conditions that connect dictionary expressiveness, persistence of excitation, rank/conditioning of stacked operators, and bounds on coefficient drift across the horizon, and to report implementations with complete configurations for full reproducibility.




\newpage

\appendix

\subsection{Proof of Lemma~\ref{lem:toeplitz} (Stacked LPV--ARX Propagation)}\label{proof:toeplitz}

\begin{proof}
Let $E_j\in\mathbb{R}^{p\times pN}$ select the $j$-th output block so that $E_j Y=y_{j|k}$.
By construction,
\begin{align}
E_j (S_i\otimes A_{k,i})Y &=
\begin{cases}
A_{k,i} y_{j-i|k}, & j> i,\\
0,& j\le i,
\end{cases}
\\
E_j (S_i\otimes B_{k,i})U &=
\begin{cases}
B_{k,i} u_{j-i|k}, & j> i,\\
0,& j\le i.
\end{cases}
\end{align}
For the known initial conditions, $F_i$ selects the first $i$ rows. Hence,
\begin{align}
E_j (F_i\otimes A_{k,i})Y_{\mathrm{init}}^{(i)} &=
\begin{cases}
A_{k,i} y_{k+j-i}, & j\le i,\\
0,& j> i,
\end{cases}
\\
E_j (F_i\otimes B_{k,i})U_{\mathrm{init}}^{(i)} &=
\begin{cases}
B_{k,i} u_{k+j-i}, & j\le i,\\
0,& j> i.
\end{cases}
\end{align}
Therefore, the $j$-th block row of $(I_{pN}-T_y)Y=\sigma_k+T_u U$ reads
\begin{align}
y_{j|k}-\sum_{i=1}^{\min\{\ell,j-1\}} A_{k,i} y_{j-i|k} = C_k
+ \sum_{i=0}^{\min\{\ell,j-1\}} B_{k,i} u_{j-i|k} + \sum_{i=j}^{\ell}\big(A_{k,i} y_{k+j-i}+B_{k,i} u_{k+j-i}\big),
\end{align}
which is, by definition, the LPV--ARX recursion: use predicted outputs when $j-i\ge1$ and measured $(y_{k+j-i},u_{k+j-i})$ when $j\le i$.
Since this holds for all $j=1,\ldots,N$, the stacked identity follows.
Note that $T_y$ is strictly block lower triangular. Therefore, $T_y^{N}=0$ and the Neumann series gives $(I_{pN}-T_y)^{-1}=\sum_{r=0}^{N-1}T_y^{ r}$ \cite{Bernstein+2018}.
Substitution yields $Y=S_k+G_k U$.
\end{proof}

\clearpage 

\subsection{Example~2b. Robustness Test}\label{app:ex2b_noise}

This section repeats the frequency-domain test of Example~2b under additive measurement noise. 
A zero-mean Gaussian perturbation $\nu_k$ with standard deviation $\sigma_{\mathrm{noise}}=10^{-2}$ is added to the measured output $y_k$. 
All other parameters are identical to the noiseless case: disturbance amplitude $A=0.01$, lag order $\ell=5$, and frequency sweep $\omega\in[0,\pi]$ with step $0.05$. 
Performance is evaluated over the same period-aligned windows, and the logarithmic $\mathrm{RMSE}$ is plotted as a function of $\omega$. 
All performance metrics are computed using the noise-free output $y_k$, not the measured signal $y_{\mathrm{n},k} = y_k + \nu_k$.

As shown in Figure~\ref{fig:ex2b_app_noise}, both kernels remain bounded across all frequencies and preserve the general attenuation pattern observed in the noiseless case. 
The RBF and unitary kernels now exhibit comparable performance, with the RBF curve lying slightly above the unitary over most of the band, indicating a small but consistent loss of suppression under noisy measurements.

\begin{figure}[!ht]
    \centering
    \includegraphics[trim=0 50 0 0, width=0.55\linewidth]{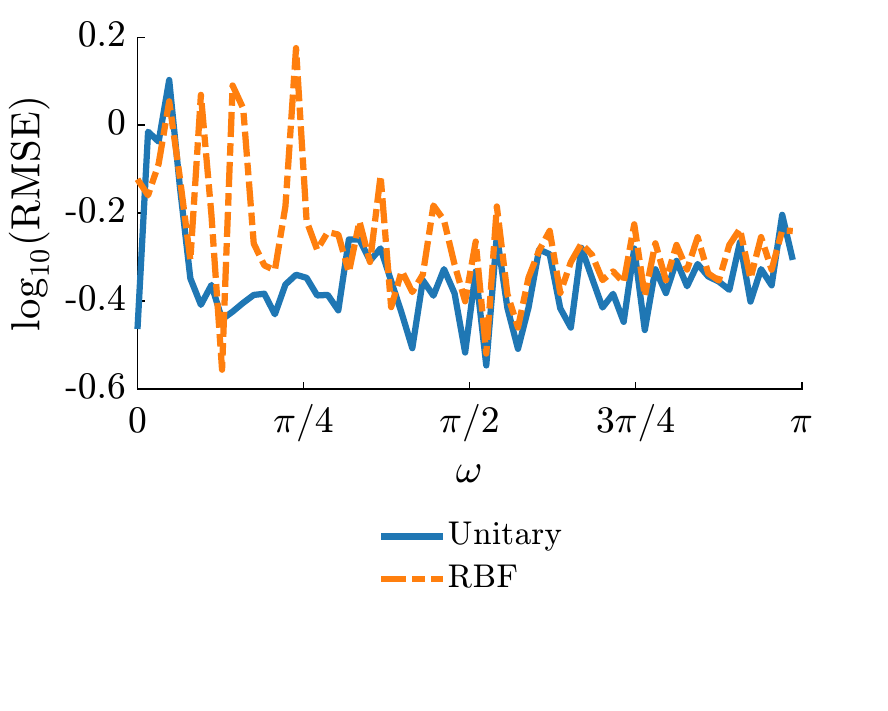}
    \caption{
    \textbf{Example~2b} (Robustness test).
    Frequency-domain disturbance test repeated under additive Gaussian noise with standard deviation $\sigma_{\mathrm{noise}}=10^{-2}$. 
All other parameters are identical to the noiseless case ($A=0.01$, $\ell=5$, and $\omega\in[0,\pi]$ with step $0.05$). 
The plotted quantity is the logarithmic root-mean-square error ($\mathrm{RMSE}$) of the noise-free output $y_k$ evaluated over the period-aligned windows. 
Both kernels remain bounded across all frequencies and preserve the qualitative attenuation pattern observed in the noiseless experiment. 
Their curves now lie close to each other, with the radial-basis-function (RBF) trace slightly above the unitary over most of the band, indicating comparable but marginally weaker suppression in the presence of noise.
    }
    \label{fig:ex2b_app_noise}
\end{figure}

\clearpage

\subsection{Example~6. Robustness Test}\label{app:ex6}

This section repeats the command-tracking experiment of Example~6 under additive measurement noise. 
A zero-mean Gaussian perturbation $\nu_k$ with standard deviation $\sigma_{\mathrm{noise}} = 10^{-3}$ is added to the measured output. 
All other parameters are identical to the noiseless case, except that the forgetting factor is set to $\lambda = 0.999$, the ridge parameter is increased to $\rho = 10^{-4}$, the prediction horizon is extended to $N = 30$, and the input penalty is raised to $R_u = 2.25\times10^{4}I$ for numerical stability. 
Only the output response is displayed, as the logarithmic error becomes visually dominated by the noise. 
All performance metrics are computed with respect to the noise-free output $y_k$, not the measured signal $y_{\mathrm{n},k} = y_k + \nu_k$.

Figure~\ref{fig:ex6_robustness} shows the simulation results for the robustness test. 
Quantitatively, the ranking of the kernels remains consistent with the noiseless case: the polynomial kernel achieves the lowest RMSE and IAE, followed by the linear kernel, while the unitary and RBF exhibit larger tracking errors. 
The relative increase in all error measures confirms the effect of measurement noise, yet all trajectories remain bounded and stable. 
Control activity indicators (TV$_u$, Peak$_u$) remain comparable across kernels, showing that the polynomial maintains its accuracy without increased input effort. 
These quantitative results confirm the trends observed visually in Figure~\ref{fig:ex6_robustness}.

\begin{table}[!ht]
\centering
\caption{\textbf{Example~6} (Appendix): Performance metrics under additive measurement noise stress. 
Root-mean-square error (RMSE) and integral absolute error (IAE) quantify output tracking accuracy; 
total variation of the input (TV$_u$) and peak input magnitude (Peak$_u$) characterize control activity.
The ranking of the kernels remains consistent with the noiseless case: the polynomial kernel attains the lowest RMSE and IAE, followed by the linear kernel, while the unitary and radial-basis-function (RBF) kernels display larger tracking errors. 
The uniform increase in all error metrics reflects the influence of measurement noise, yet all trajectories stay bounded and stable. 
Control activity measures (TV$_u$, Peak$_u$) remain similar across kernels, indicating that the polynomial kernel preserves its accuracy without additional input effort. 
These numerical results align with the trends visible in Figure~\ref{fig:ex6_robustness}.
}
\label{tab:ex6_robustness_phaseII}
\begin{tabular}{lcccc}
\toprule
Kernel & RMSE & IAE & TV$_u$ & Peak$_u$ \\
\midrule
Polynomial-2 & \textbf{0.2546} & \textbf{109.93} & 3.00 & 0.458 \\
Linear       & 0.2682 & 150.54 & 2.77 & 0.387 \\
RBF          & 0.3188 & 290.09 & 1.77 & 0.271 \\
Unitary      & 0.3909 & 436.27 & 2.37 & 0.295 \\
\bottomrule
\end{tabular}
\end{table}

\begin{figure}[!ht]
    \centering
    \includegraphics[width=0.8\linewidth]{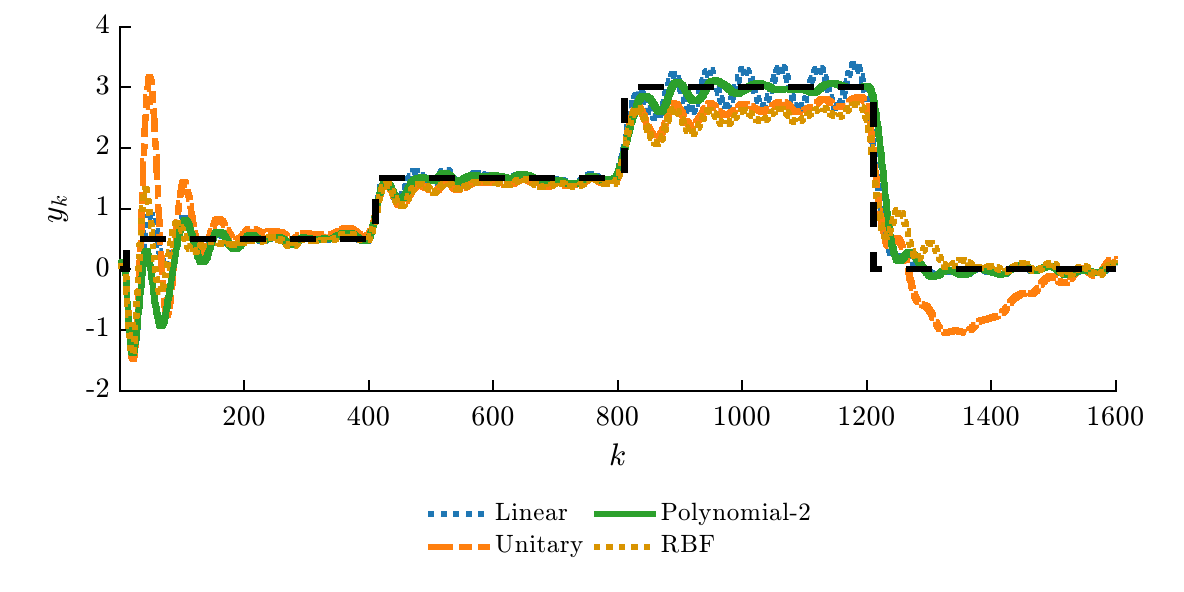}
    \caption{\textbf{Example~6} (Robustness test).
    Output tracking under additive measurement noise ($\sigma_{\mathrm{noise}} = 10^{-3}$). 
All trajectories remain bounded and stable, with oscillations more pronounced than in the noiseless case. 
Across all command steps, the polynomial kernel retains the most damped and accurate response. 
For the first step, the unitary kernel exhibits the largest overshoot, followed by the radial basis function (RBF). 
In the second step, the same ranking is observed but with a reduced overshoot for the polynomial kernel. 
During the third and fourth steps, the RBF and unitary kernels display stronger oscillations than in the noise-free case, and the unitary shows an increased undershoot in the final stabilization phase.
    }
    \label{fig:ex6_robustness}
\end{figure}

\clearpage
\printbibliography

\clearpage
\processdelayedfloats

\section*{Author Biography}
\vspace{-45em}
\begin{IEEEbiographynophoto}{Tam W. Nguyen}
\noindent
received the M.Sc. degree in mechatronics engineering and
the Ph.D. degree in control systems from the Free University of Brussels,
Belgium. He then served as a postdoctoral researcher in the Aerospace
Engineering Department at the University of Michigan, Ann Arbor, US. Afterwards, he was appointed assistant professor at the University of Toyama, Japan, at the Department of Electrical Engineering. He is currently a lecturer at the Kyoto University, Japan, in the Department of Electrical Engineering. His
interests are in optimal control for aerial-robotic applications.
\end{IEEEbiographynophoto}

\end{document}